\newcommand{\sw}[1]{\swcomment{#1}} 
\DeclareMathOperator*{\Expectation}{\mathbb{E}}
\newcommand{\Ex}[2]{\Expectation_{#1}\left[#2\right]}
\newcommand{\term}[1]{\ensuremath{\mathtt{#1}}\xspace}
\newcommand{\rew}{\term{rew}}  
\newcommand{\PMR}{\term{PMR}} 
\newcommand{\support}{\term{support}}
\newcommand{\BIR}{\term{BIR}} 
\newcommand{\respF}{f_{\term{resp}}}
\newcommand{\HardMax}{\term{HardMax}}
\newcommand{\HardMaxRandom}{\term{HardMax\&Random}}
\newcommand{\SoftMaxRandom}{\term{SoftMax}}
\newcommand{\Uniform}{\term{Uniform}}
\newcommand{\StaticGreedy}{\term{StaticGreedy}}
\newcommand{\DynGreedy}{\term{DynamicGreedy}}
\newcommand{\termSub}[2]{\ensuremath{\mathtt{#1}_{#2}}\xspace}
\newcommand{\alg}[1][]{\termSub{alg}{#1}}
\newcommand{\prior}{\ensuremath{\mP}\xspace}
\newcommand{\priorMu}{\ensuremath{\prior_\mathtt{mean}}\xspace}
\newcommand{\posteriorN}[2]{\mN_{#1,#2}}  
\newcommand{\termTXT}[1]{{\em {#1}}\xspace}
\newcommand{\rationality}{\termTXT{rationality}}
\newcommand{\innovation}{\termTXT{innovation}}
\newcommand{\competition}{\termTXT{competition}}
\newcommand{\Competition}{\termTXT{Competition}}
\newcommand{\competitiveness}{\termTXT{competitiveness}}
\newcommand{\exploration}{\termTXT{exploration}}
\title{Competing bandits: learning under competition%
\footnote{This research has been done while Y. Mansour was co-affiliated with Microsoft Research, and while Z.S. Wu was a graduate student at University of Pennsylvania.}}
\author{Yishay Mansour%
~\thanks{Tel Aviv University.
Email: {\tt  mansour@tau.ac.il}.}
\and
Aleksandrs Slivkins%
~\thanks{Microsoft Research, New York, NY, USA.
Email: {\tt slivkins@microsoft.com}.} \and
Zhiwei Steven Wu%
~\thanks{Microsoft Research, New York, NY, USA.
Email: {\tt zhiww@microsoft.com}.}
}
\date{November 2017}
\begin{document}

\maketitle

\begin{abstract}
Most modern systems strive to learn from interactions with users, and many engage in \emph{exploration}: making potentially suboptimal choices for the sake of acquiring new information. We initiate a study of the interplay between \emph{exploration and competition}---how such systems balance the exploration for learning and the competition for users. Here the users play three distinct roles: they are customers that generate revenue, they are sources of data for learning, and they are self-interested agents which choose among the competing systems.

In our model, we consider competition between two multi-armed bandit algorithms faced with the same bandit instance. Users arrive one by one and choose among the two algorithms, so that each algorithm makes progress if and only if it is chosen.
We ask whether and to what extent competition incentivizes \asedit{the adoption of better bandit algorithms}. We investigate this issue for several models of user response, as we vary the degree of rationality and competitiveness in the model. \asedit{Our findings are closely related to} the ``competition vs. innovation" relationship, a well-studied theme in economics.

\end{abstract}

\section{Introduction}
\label{sec:intro}
Learning from interactions with users is ubiquitous in modern customer-facing systems, from product recommendations to web search to spam detection to content selection to fine-tuning the interface. Many systems purposefully implement \emph{exploration}: making potentially suboptimal choices for the sake of acquiring new information. Randomized controlled trials, a.k.a. A/B testing, are an industry standard, with a number of companies such as \emph{Optimizely} offering tools and platforms to facilitate them. Many companies use more sophisticated exploration methodologies based on \emph{multi-armed bandits}, a well-known theoretical framework for exploration and making decisions under uncertainty.


Systems that engage in exploration typically need to compete against one another; most importantly, they compete for users. This creates an interesting tension between \exploration and \competition. In a nutshell, while exploring may be essential for improving the service tomorrow, it may degrade quality and make users leave \emph{today}, in which case there will be no users to learn from! Thus, users play three distinct roles: they are customers that generate revenue, they generate data for the systems to learn from, and they are self-interested agents which choose among the competing systems.

We initiate a study of the interplay between \exploration and \competition. The main high-level question is: {\bf whether and to what extent competition incentivizes adoption of better exploration algorithms}. This translates into a number of more concrete questions. While it is commonly assumed that better learning technology always helps, is this so for our setting? In other words, would a better learning algorithm result in higher utility for a principal? Would it be used in an equilibrium of the ``competition game"? Also, does competition lead to better social welfare compared to a monopoly? We investigate these questions for several models, as we vary the capacity of users to make rational decisions (\rationality ) and the severity of competition between the learning systems (\competitiveness). The two are controlled by the same ``knob" in our models; \asedit{such coupling is not unusual in the literature, \eg see \citet{Gabaix-16}}.

\asedit{On a high level, our contributions can be framed in terms of the ``inverted-U relationship" between rationality/competitiveness and the quality of adopted algorithms (see \reffig{fig:inverted-U}).}

\OMIT{The relationship between the severity of competition among firms and the quality of technology adopted as a result of this competition is a familiar  theme in the economics literature, known as ``competition vs. innovation".%
\footnote{In this context, \innovation usually refers to adoption of a better technology as a substantial R\&D expense to a given firm. It is not as salient whether similar ideas and/or technologies already exist outside the firm. }
We frame our contributions in terms of the ``inverted-U relationship", a conventional wisdom regarding ``competition vs. innovation" (see \reffig{fig:inverted-U}).}

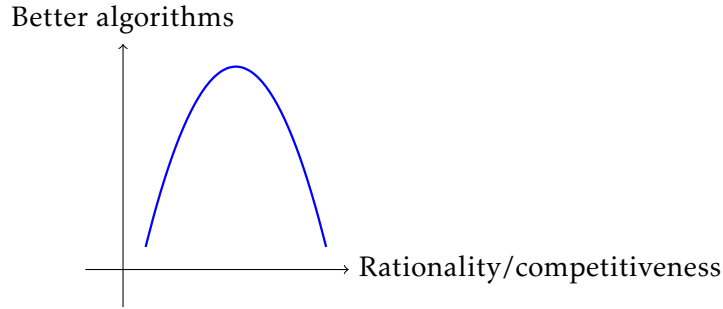
\begin{figure}
\begin{center}
\begin{tikzpicture}
      \draw[->] (-.5,0) -- (3,0) node[right] {Rationality/competitiveness};
      \draw[->] (0,-.5) -- (0,3) node[above] {Better algorithms};
      \draw[scale=0.6,domain=0.5:4.5,smooth,variable=\x,blue, line width=0.3mm] plot ({\x},{4.5 - (\x - 2.5)^2});
 \end{tikzpicture}

\caption{Inverted-U relationship between rationality/competitiveness and algorithms.}
\label{fig:inverted-U}
\end{center}
\end{figure}



\xhdr{Our model.} We define a game in which two firms (\emph{principals}) simultaneously engage in exploration and compete for users (\emph{agents}). These two processes are interlinked, as exploration decisions are experienced by users and informed by their feedback. We need to specify several conceptual pieces: how the principals and agents interact, what is the machine learning problem faced by each principal, and what is the information structure. Each piece can get rather complicated in isolation, let alone jointly, so we strive for simplicity. Thus, the basic model is as follows:

\begin{itemize}

\item A new agent arrives in each round $t=1,2, \ldots$, and chooses among the two principals. The principal chooses an action (\eg a list of web search results to show to the agent), the user experiences this action, and reports a reward. \asedit{All agents have the same ``decision rule" for choosing among the principals given the available information.}

\item Each principal faces a very basic and well-studied version of the multi-armed bandit problem: for each arriving agent, it chooses from a fixed set of actions  (a.k.a. \emph{arms}) and receives a reward drawn independently from a fixed distribution specific to this action.

\item Principals simultaneously announce their learning algorithms \asedit{before round $1$}, and cannot change them afterwards. There is a common Bayesian prior on the rewards (but the realized reward distributions are not observed by the principals or the agents). Agents do not receive \asedit{any other information}. Each principal only observes agents that chose him.
\end{itemize}

\xhdr{Technical results.}
Our results depend crucially on agents' ``decision rule" for choosing among the principals. The simplest and perhaps the most obvious rule is to select the principal which maximizes their expected utility; we refer to it as \HardMax. We find that \HardMax is not conducive to \asedit{adopting better algorithms}. In fact, each principal's dominant strategy is to do no purposeful exploration whatsoever, and instead always choose an action that maximizes expected reward given the current information; we call this algorithm \DynGreedy. While this algorithm may potentially try out different actions over time and acquire useful information, it is known to be dramatically bad in many important cases of multi-armed bandits --- precisely because it does not explore on purpose, and may therefore fail to discover best/better actions. Further, we show that \HardMax is very sensitive to tie-breaking when both principals have exactly the same expected utility according to agents' beliefs. If tie-breaking is probabilistically biased --- say, principal 1 is always chosen with probability strictly larger than $\tfrac12$ --- then this principal has a simple ``winning strategy" no matter what the other principal does.

We relax \HardMax to allow each principal to be chosen with some fixed baseline probability. One intuitive interpretation is that there are ``random agents" who choose a principal uniformly at random, and each arriving agent is either \HardMax or ``random" with some fixed probability. We call this model \HardMaxRandom. We find that \asedit{better algorithms} help in a big way: a sufficiently better algorithm is guaranteed to win all \asedit{non-random} agents after an initial learning phase. While the precise notion of ``sufficiently better algorithm" is rather subtle, we note that commonly known ``smart" bandit algorithms typically defeat the commonly known ``naive" ones, and the latter typically defeat \DynGreedy. However, there is a substantial caveat: one can defeat any algorithm by interleaving it with \DynGreedy. This has two undesirable corollaries: a better algorithm may sometimes lose, and a pure Nash equilibrium typically does not exist.

We further relax the decision rule so that the probability of choosing a given principal varies smoothly as a function of the difference between  principals' expected rewards; we call it \SoftMaxRandom. For this model, the ``better algorithm wins" result holds under much weaker assumptions on what constitutes a better algorithm. This is the most technical result of the paper. The competition in this setting is necessarily much more relaxed: typically, both principals attract approximately half of the agents as time goes by (but a better algorithm may attract slightly more).

All results extend to a much more general version of the multi-armed bandit problem in which the principal may observe additional feedback before and/or after each decision, as long as the feedback distribution does not change over time. In most results, principal's utility may depend on both the market share and agents' rewards.

\xhdr{Economic interpretation.}
\asedit{The inverted-U relationship between the severity of competition among firms and the quality of technologies that they adopt is a familiar theme in the economics literature \citep[\eg][]{Aghion-QJE05,Vives-08}.%
\footnote{The literature frames this relationship as one between ``competition" and ``innovation". In this context, ``innovation" refers to adoption of a better technology, at a substantial R\&D expense to a given firm. It is not salient whether similar ideas and/or technologies already exist outside the firm. It is worth noting that adoption of exploration algorithms tends to require substantial R\&D effort in practice, even if the algorithms themselves are well-known in the research literature; see \citet{MWT-WhitePaper-2016} for an example of such R\&D effort.}
We find it illuminating to frame our contributions in a similar manner, as illustrated in \reffig{fig:inverted-U}.

Our models differ in terms of rationality in agents' decision-making: from fully rational decisions with \HardMax to relaxed rationality with \HardMaxRandom to an even more relaxed rationality with \SoftMaxRandom. The same distinctions also control the severity of competition between the principals: from cut-throat competition with \HardMax to a more relaxed competition with \HardMaxRandom, to an even more relaxed competition with \SoftMaxRandom. Indeed, with \HardMax you lose all customers as soon as you fall behind in performance, with \HardMaxRandom you get some small market share no matter what, and with \SoftMaxRandom you are further guaranteed a market share close to $\tfrac12$ as long as your performance is not much worse than the competition. The uniform choice among principals corresponds to no rationality and no competition.

We identify the inverted-U relationship in the spirit of \reffig{fig:inverted-U} that is driven by the rationality/competitiveness distinctions outlined above: from \HardMax to \HardMaxRandom to \SoftMaxRandom to \Uniform. We also find another, technically different inverted-U relationship which zeroes in on the \HardMaxRandom model. We vary rationality/competitiveness inside this model, and track the marginal utility of switching to a better algorithm.

These inverted-U relationships arise for a fundamentally different reason, compared to the existing literature on ``competition vs. innovation.'' In the literature, better technology always helps in a competitive environment, other things being equal. Thus, the tradeoff is between the costs of improving the technology and the benefits that the improved technology provides in the competition. Meanwhile, we find that a better exploration algorithm may sometimes perform much worse under competition, even in the absence of R\&D costs.}

\ascomment{Yishay's edits, slightly edited by Alex}

\xhdr{Discussion.} We capture some pertinent features of reality while ignoring some others for the sake of tractability. Most notably, we assume that agents do not receive any information about other agents' rewards after the game starts. In the final analysis, this assumption makes agents' behavior independent of a particular realization of the Bayesian prior, and therefore enables us to summarize each learning algorithm via its Bayesian-expected rewards (as opposed to detailed performance on the particular realizations of the prior). Such summarization is essential for formulating lucid and general analytic results, let alone proving them. It is a major open question whether one can incorporate signals about other agents' rewards and obtain a tractable model.

We also make a standard assumption that agents are myopic: they do not worry about how their actions impact their future utility. In particular, they do not attempt to learn over time, to second-guess or game future agents, or to manipulate principal's learning algorithm. We believe this is a typical case in practice, in part because agent's influence tend to be small compared to the overall system. We model this simply by assuming that each agent only arrives once.

Much of the challenge in this paper, both conceptual and technical, was in setting up the right model and the matching theorems, and not only in proving the theorems. Apart from making the modeling choices described above, it was crucial to interpret the results and intuitions from the literature on multi-armed bandits so as to formulate meaningful assumptions on bandit algorithms and Bayesian priors which are productive in our setting.

\xhdr{Open questions.}
\asedit{How to incorporate signals about the other agents' rewards? One needs to reason about how exact or coarse these signals are, and how the agents update their beliefs after receiving them. Also, one may need to allow principals' learning algorithms to respond to updates about the other principal's performance. (Or not, since this is not how learning algorithms are usually designed!) A clean, albeit idealized, model would be that (i) each agent learns her exact expected reward from each principal before she needs to choose which principal to go to, but (ii) these updates are invisible to the principals. Even then, one needs to argue about the competition on particular realizations of the Bayesian prior, which appears very challenging.

Another promising extension is to heterogeneous agents. Then the agents' choices are impacted by their idiosyncratic signals/beliefs, instead of being entirely determined by priors and/or signals about the average performance. It would be particularly interesting to investigate the emergence of \emph{specialization}: whether/when an algorithm learns to target specific population segments in order to compete against a more powerful ``incumbent".
} 

\xhdr{Map of the paper.}
We survey related work (Section~\ref{sec:related-work}), lay out the model and preliminaries (Section~\ref{sec:model}), and proceed to analyze the three main models, \HardMax, \HardMaxRandom and \SoftMaxRandom (in Sections~\ref{sec:rational},~\ref{sec:random}, and ~\ref{sec:soft}, respectively). We discuss economic implications in Section~\ref{sec:welfare}. Appendix~\ref{app:examples} provides some pertinent background on multi-armed bandits. Appendix~\ref{app:perturb} gives a broad example to support an  assumption in our model.


\section{Related work}
\label{sec:related-work}
Multi-armed bandits (\emph{MAB}) is a particularly elegant and tractable abstraction for tradeoff between \emph{exploration} and \emph{exploitation}: essentially, between acquisition and usage of information. MAB problems have been studied in Economics, Operations Research and Computer Science for many decades; see \citep{Bubeck-survey12,Gittins-book11,slivkins-MABbook} for background on regret-minimizing and Bayesian formulations, respectively. A discussion of industrial applications of MAB can be found in \citet{MWT-WhitePaper-2016}.

The literature on MAB is vast and multi-threaded. The most related
thread concerns regret-minimizing MAB formulations with IID rewards
\citep{Lai-Robbins-85,bandits-ucb1}. This thread includes ``smart" MAB
algorithms that combine exploration and exploitation, such as UCB1
\citep{bandits-ucb1} and Successive Elimination
\citep{EvenDar-icml06}, and ``naive'' MAB algorithms that separate
exploration and exploitation, including explore-first and
$\eps$-Greedy \citep[\eg see][]{slivkins-MABbook}.

The three-way tradeoff between exploration, exploitation and incentives has been studied in several other settings:
incentivizing exploration in a recommendation system
    \citep{Che-13,Frazier-ec14,Kremer-JPE14,ICexploration-ec15,Bimpikis-exploration-ms17,Bahar-ec16,ICexplorationGames-ec16-working},
dynamic auctions
    \cite[\eg][]{AtheySegal-econometrica13,DynPivot-econometrica10,Kakade-pivot-or13},
pay-per-click ad auctions with unknown click probabilities
    \cite[\eg][]{MechMAB-ec09,DevanurK09,Transform-ec10-jacm},
coordinating search and matching by self-interested agents
    \citep{Bobby-Glen-ec16},
as well as human computation
    \cite[\eg][]{RepeatedPA-ec14,Ghosh-itcs13,Krause-www13}.

\citet{Bolton-econometrica99,Keller-econometrica05,Johari-ec12} studied models with self-interested agents jointly performing exploration, with no principal to coordinate them.

There is a superficial similarity (in name only) between this paper and the line of work on ``dueling bandits"
    \citep[\eg][]{Yue-dueling12,Yue-dueling-icml09}.
The latter is not about competing bandit algorithms, but rather about scenarios where in each round two arms are chosen to be presented to a user, and the algorithm only observes which arm has ``won the duel".

Our setting is closely related to the ``dueling algorithms" framework \citep{DuelingAlgs-stoc11} which studies competition between two principals, each running an algorithm for the same problem. However, this work considers algorithms for offline / full input scenarios, whereas we focus on online machine learning and the explore-exploit-incentives tradeoff therein. Also, this work specifically assumes binary payoffs (\ie win or lose) for the principals.

\xhdr{Other related work in economics.} The competition vs. innovation relationship and the inverted-U shape thereof have been introduced in a classic book \citep{Schumpeter-42}, and remained an important theme in the literature ever since \cite[\eg][]{Aghion-QJE05,Vives-08}. Production costs aside, this literature treats innovation as a priori beneficial for the firm. Our setting is very different, as innovation in exploration algorithms may potentially hurt the firm.

A line of work on \emph{platform competition}, starting with \cite{Rysman09}, concerns competition between firms (\emph{platforms}) that improve as they attract more users (\emph{network effect}); see \citet{Weyl-White-14} for a recent survey. This literature is not concerned with \innovation, and typically models network effects exogenously, whereas in our model network effects are endogenous: they are created by MAB algorithms, an essential part of the model.

Relaxed versions of rationality similar to ours are found in several notable lines of work. For example, ``random agents" (a.k.a. noise traders) can side-step the ``no-trade theorem'' \citep{Milgrom-Stokey-82}, a famous impossibility result in financial economics. The \SoftMaxRandom model is closely related to the literature on \emph{product differentiation}, starting from \cite{Hotelling-29}, see \cite{Perloff-Salop-85} for a notable later paper.

There is a large literature on non-existence of equilibria due to small deviations   (which is related to the corresponding result for \HardMaxRandom), starting with \cite{Rothschild-Stiglitz-76} in the context of health insurance markets. Notable recent papers \citep{Veiga-Weyl-16,Azevedo-Gottlieb-17} emphasize the distinction between \HardMax and versions of \SoftMaxRandom.

\OMIT{While agents' rationality and severity of competition are often modeled separately in the literature, it is not unusual to have them modeled with the same ``knob" \cite[\eg][]{Gabaix-16}.}


\section{Our model and preliminaries}
\label{sec:model}

\xhdr{Principals and agents.} There are two principals and $T$ agents. The game proceeds in rounds (we will sometimes refer to them as \emph{global rounds}). In each round $t\in [T]$, the following  interaction takes place. A new agent arrives and chooses one of the two principals. The principal chooses a recommendation: an action $a_t\in A$, where $A$ is a fixed set of actions (same for both principals and all rounds). The agent follows this recommendation, receives a reward $r_t\in [0,1]$, and reports it back to the principal.

The rewards are i.i.d. with a common prior. More formally, for each action $a\in A$ there is a parametric family $\psi_a(\cdot)$ of
reward distributions, parameterized by the mean reward $\mu_a$. (The paradigmatic case is 0-1 rewards with a given expectation.) The
mean reward vector $\mu = (\mu_a:\; a\in A)$ is drawn from prior distribution $\priorMu$ before round $1$. Whenever a given action $a\in A$ is chosen, the reward is drawn independently from distribution $\psi_a(\mu_a)$. The prior $\priorMu$ and the distributions $(\psi_a(\cdot)\colon a\in A)$ constitute the (full) Bayesian prior on rewards, denoted $\prior$.

Each principal commits to a learning algorithm for making recommendations. This algorithm follows a protocol of \emph{multi-armed bandits} (\emph{MAB}). Namely, the algorithm proceeds in time-steps:%
\footnote{These time-steps will sometimes be referred to as \emph{local steps/rounds}, so as to distinguish them from ``global rounds" defined before. We will omit the local vs. local distinction when clear from the context.} each time it is called, it outputs a chosen action $a\in A$ and then inputs the reward for this action. The algorithm is called only in global rounds when the corresponding principal is chosen.

The information structure is as follows. The prior $\prior$ is known to everyone. The mean rewards $\mu_a$ are not revealed to anybody. Each agent knows both principals' algorithms, and the global round when (s)he arrives, \emph{but not} the rewards of the previous agents. Each principal is completely unaware of the rounds when the other is chosen.

\xhdr{Some terminology.} The two principals are called ``Principal
1" and ``Principal 2".
The algorithm of principal $i\in \{1,2\}$ is called ``algorithm $i$" and denoted
\alg[i]. The agent in global round $t$ is called ``agent $t$"; the
chosen principal is denoted $i_t$.

Throughout, $\E[\cdot]$ denotes expectation over all applicable randomness.

\xhdr{Bayesian-expected rewards.}
Consider the performance of a given algorithm \alg[i], $i\in \{1,2\}$, when it is run in isolation (\ie without competition, just as a bandit algorithm). Let $\rew_i(n)$ denote its Bayesian-expected reward for the $n$-th step.

Now, going back to our game, fix global round $t$ and let $n_i(t)$ denote the number of global rounds before $t$ in which this principal is chosen. Then:
\begin{align*}
 \E[r_t \mid \text{principal $i$ is chosen in round $t$ and $n_i(t)=n$} ]
    = \rew_i(n+1) \quad (\forall n\in\N).
\end{align*}

\xhdr{Agents' response.}
Each agent $t$ chooses principal $i_t$ as as follows: it chooses a distribution over the principals, and then draws independently from this distribution. Let $p_t$ be the probability of choosing principal $1$ according to this distribution. Below we specify $p_t$; we need to be careful so as to avoid a circular definition.

Let $\mI_t$ be the information available to agent $t$ before the
round. Assume $\mI_t$ suffices to form posteriors for quantities
$n_i(t)$, $i\in \{1,2\}$, denote them by $\posteriorN{i}{t}$. Note
that the Bayesian expected reward of each principal $i$ is a function
only of the number rounds he was chosen by the agents, so the
posterior mean reward for each principal $i$ can be written as
\begin{align*}
 \PMR_i(t) := \E[r_t \mid \text{$\mI_t$ and $i_t=i$} ]
    = \E[\rew_i(n_i(t)+1) \mid \mI_t]
    = \E_{n\sim \posteriorN{i}{t}}[\rew_i(n+1)].
\end{align*}
This quantity represents the posterior mean reward for principal $i$
at round $t$, according to information $\mI_t$; hence the notation \PMR. In general, probability $p_t$ is defined by the
posterior mean rewards $\PMR_i(t)$ for both principals. We assume a
somewhat more specific shape:
\begin{align}
p_t = \respF\left(\; \PMR_1(t) - \PMR_2(t) \;\right).
\end{align}
Here $\respF:[-1,1]\to [0,1]$ is the \emph{response function}, which is the same for all agents. We assume that the response function is known to all agents.

To make the model well-defined, it remains to argue that information $\mI_t$ is indeed sufficient to form posteriors on $n_1(t)$ and $n_2(t)$. This can be easily seen using induction on $t$.

Since all agents arrive with identical information (other than knowing which global round they arrive in), it follows that all agents have identical posteriors for $n_{i,t}$ (for a given principal $i$ and a given global round $t$). This posterior is denoted $\posteriorN{i}{t}$.

\xhdr{Response functions.}
We use the response function $\respF$ to characterize the amount of rationality and competitiveness in our model. We assume that $\respF$ is monotonically non-decreasing, is larger than $\tfrac12$ on the interval $(0,1]$, and smaller than $\tfrac12$ on the interval $[-1,0)$. Beyond that, we consider three specific models, listed in the order of decreasing rationality and competitiveness (see \reffig{fig:response-functions}):

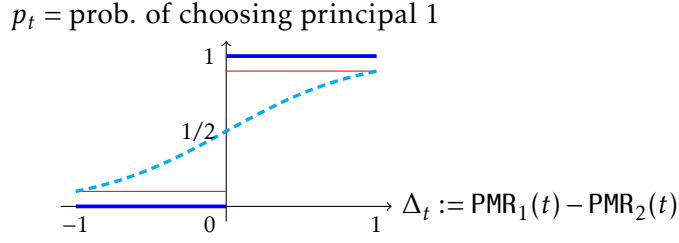
\begin{figure}
\begin{center}
  \begin{tikzpicture}[scale=2.0]
    \draw[->] (-1.1,0) -- (1.1,0) node[right] {$\Delta_t := \PMR_1(t) - \PMR_2(t)$};
    \draw[->] (0,-0.1) -- (0,1.1) node[above]
        {$p_t = \text{prob. of choosing principal 1}$};
    \draw[scale=1.0,domain=-1:0,smooth,variable=\q,blue, line width=0.50mm] plot ({\q},{0});
    \draw[scale=1.0,domain=0:1,smooth,variable=\q,blue,line width=0.50mm] plot ({\q},{1});
    \draw[scale=1.0,domain=-1:0,smooth,variable=\y,red]  plot ({\y},{0.1});
    \draw[scale=1.0,domain=0:1,smooth,variable=\y,red]  plot ({\y},{0.9});
    \draw[scale=1.0,domain=-1:1,smooth,variable=\y,cyan, line width=0.45mm, dash pattern=on 3pt off 2pt]  plot ({\y},{1/(1 + 1/(9^\y))});
     \node[left] at (0, 0.5) {\footnotesize $1/2$};
     \node[left] at (0, 1) {\footnotesize $1$};
     \node[below left] at (0, 0) {\footnotesize $0$};
     \node[below ] at (1, 0) {\footnotesize $1$};
     \node[below ] at (-1, 0) {\footnotesize $-1$};
  \end{tikzpicture}
\end{center}
\caption{The three models for agents' response function: \HardMax is thick blue, \HardMaxRandom is slim red, and \SoftMaxRandom is the dashed curve.}
\label{fig:response-functions}
\end{figure}

\begin{OneLiners}
\item \HardMax: $\respF$ equals $0$ on the interval $[-1,0)$ and $1$
  on the interval $(0,1]$. In other words, the agents will
  deterministically choose the principal with the higher posterior
  mean reward.

\item \HardMaxRandom:
    $\respF$ equals $\eps_0$ on the interval $[-1,0)$ and $1-\eps_0$ on the interval $(0,1]$, where $\eps_0\in (0,\tfrac12)$ are some positive constants. In words, each agent is a \HardMax agent with probability $1-2\eps_0$, and with the remaining probability she makes a random choice.

\item \SoftMaxRandom: $\respF(\cdot)$ lies in the interval $[\eps_0,1-\eps_0]$, $\eps_0>0$, and is ``smooth" around $0$ (in the sense defined precisely in Section~\ref{sec:soft}).
\end{OneLiners}

We say that $\respF$ is \emph{symmetric} if $\respF(-x)+\respF(x)=1$
for any $x\in [0,1]$. This implies \emph{fair tie-breaking}:
$\respF(0)=\tfrac12$.

\xhdr{MAB algorithms.}
We characterize the inherent quality of an MAB algorithm in terms of its \emph{Bayesian Instantaneous Regret} (henceforth, \BIR), a standard notion from machine learning:
\begin{align}
\BIR(n) := \E_{\mu\sim\priorMu}
    \left[ \max_{a\in A} \mu_a\right] - \rew(n),
\end{align}
where $\rew(n)$ is the Bayesian-expected reward of the algorithm for the $n$-th step, when the algorithm is run in isolation. We are primarily interested in how \BIR scales with $n$; we treat $K$, the number of arms, as a constant unless specified otherwise.

We will emphasize several specific algorithms or classes thereof:

\begin{itemize}
\item ``smart" MAB algorithms that combine exploration and exploitation, such as UCB1 \cite{bandits-ucb1} and Successive Elimination \cite{EvenDar-icml06}. These algorithms achieve
        $\BIR(n) \leq  \tilde{O}(n^{-1/2})$
    for all priors and all (or all but a very few) steps $n$. This bound is known to be tight for any fixed $n$.
    \footnote{This follows from the lower-bound analysis in \cite{bandits-exp3}.}

\item ``naive" MAB algorithms that separate exploration and exploitation, such as Explore-then-Exploit and $\eps$-Greedy. These algorithms have dedicated rounds in which they explore by choosing an action uniformly at random. When these rounds are known in advance, the algorithm suffers constant \BIR in such rounds. When the ``exploration rounds" are instead randomly chosen by the algorithm, one can usually guarantee an inverse-polynomial upper bound \BIR, but not as good as the one above: namely,
        $ \BIR(n) \leq \tilde{O}(n^{-1/3})$.
    This is the best possible upper bound on \BIR for the two algorithms mentioned above.

\item \DynGreedy: at each step, recommends the best action according to the current posterior: an action $a$ with the highest posterior expected reward
        $\E[\mu_a \mid \mI]$,
     where $\mI$ is the information available to the algorithm so far.
     \DynGreedy has (at least) a constant \BIR for some reasonable priors, \ie
        $\BIR(n)>\Omega(1)$.

\item \StaticGreedy: always recommends the prior best action,\ie an action $a$ with the highest prior mean reward  $\E_{\mu\sim \priorMu}[\mu_a]$. This algorithm typically has constant \BIR.
\end{itemize}

We focus on MAB algorithms such that $\BIR(n)$ is non-increasing; we call such algorithms \emph{monotone}. While some reasonable MAB algorithms may occasionally violate monotonicity, they can usually be easily modified so that monotonicity violations either vanish altogether, or only occur at very specific rounds (so that agents are extremely unlikely to exploit them in practice).

More background and examples can be found in Appendix~\ref{app:examples}. In particular, we prove that \DynGreedy is monotone.

\xhdr{Competition game between principals.}
Some of our results explicitly study the game between the two principals. We model it as a simultaneous-move game: before the first agent arrives, each principal commits to an MAB algorithm. Thus, choosing a pure strategy in this game corresponds to choosing an MAB algorithm (and, implicitly, announcing this algorithm to the agents).

Principal's utility is primarily defined as the market share, \ie the number of agents that chose this principal. Principals are risk-neutral, in the sense that they optimize their expected utility.

\xhdr{Assumptions on the prior.} We make some technical assumptions for the sake of simplicity. First, each action $a$ has a positive probability of being the best action according to the prior:
\begin{align}\label{eq:assn-prob}
\forall a\in A:\;\;\;  \Pr_{\mu\sim \priorMu}[\mu_a  > \mu_{a'}
\;\forall a'\in A]
> 0.
\end{align}

Second, posterior mean rewards of actions are pairwise distinct almost
surely. That is, the history $h$ at any step of an MAB algorithm%
\footnote{The \emph{history} of an MAB algorithm at a given step
  comprises the chosen actions and the observed rewards in all
  previous steps in the execution of this algorithm.}
 satisfies
\begin{align}\label{eq:assn-distinct}
    \E[\mu_a \mid h] \neq \E[\mu_{a'}\mid h] \quad \forall a,a'\in A,
\end{align}
\asedit{except at a set of histories of probability $0$}.  In
particular, prior mean rewards of actions are pairwise distinct:
$\E[\mu_a] \neq \E[\mu_a']$ for any $a,a'\in A$. 

We provide two examples for which property \eqref{eq:assn-distinct} is `generic', in the sense that it can be enforced almost surely by a
  small random perturbation of the prior. Both examples focus on 0-1 rewards and priors $\priorMu$ that are independent across arms. The first example
  assumes Beta priors on the mean rewards, and is very easy.%
  \footnote{Suppose the rewards are Bernouli r.v. and the mean reward
    $\mu_a$ for each arm $a$ is drawn from some Beta distribution
    $\text{Beta}(\alpha_a, \beta_a)$. Given any history that contains
    $h_a$ number of heads and $t_a$ number of tails from arm $a$, the
    posterior mean reward is
    $\frac{\alpha_a + h_a}{\alpha_a + h_a + \beta_a + t_a}$. Note that
    $h_a$ and $t_a$ take integer values. Therefore, perturbing the
    parameters $\alpha_a$ and $\beta_a$ independently with any
    continuous noise will induce a prior with property
    \eqref{eq:assn-distinct} with probability 1.  }  The second
  example assumes that mean rewards have a
  finite support, see Appendix~\ref{app:perturb} for details.

\xhdr{Some more notation.} Without loss of generality, we label actions as $A=[K]$ and sort them according to their prior mean rewards, so that
    $ \E[\mu_1] > \E[\mu_2] > \ldots > \E[\mu_K]$.

Fix principal $i\in \{1,2\}$ and (local) step $n$. The arm chosen by algorithm \alg[i] at this step is denoted $a_{i,n}$, and the corresponding \BIR is denoted $\BIR_i(n)$. History of \alg[i] up to this step is denoted $H_{i,n}$.

Write
    $\PMR(a\mid E) = \E[\mu_a \mid E]$
for posterior mean reward of action $a$ given event $E$.


\subsection{Generalizations}
\label{sec:model-extensions}

Our results can be extended compared to the basic model described above.


First, unless specified otherwise, our results allow a more general notion of principal's utility that can depend on both the market share and agents' rewards. Namely, principal $i$ collects $U_i(r_t)$ units of utility in each global round $t$ when she is chosen (and $0$ otherwise), where $U_i(\cdot)$ is some fixed non-decreasing function with $U_i(0)>0$. In a formula,
\begin{align}\label{eq:general-utility}
\textstyle U_i := \sum_{t=1}^T\; \indicator{i_t=i}\cdot U_i(r_r).
\end{align}

Second, our results carry over, with little or no modification of the proofs, to much more general versions of MAB, as long as it satisfies the i.i.d. property. In each round, an algorithm can see a \emph{context} before choosing an action (as in \emph{contextual bandits}) and/or additional feedback other than the reward after the reward is chosen (as in, \eg \emph{semi-bandits}), as long as the contexts are drawn from a fixed distribution, and the (reward, feedback) pair is drawn from a fixed distribution that depends only on the context and the chosen action. The Bayesian prior $\prior$ needs to be a more complicated object, to make sure that \PMR and \BIR are well-defined. Mean rewards may also have a known structure, such as Lipschitzness, convexity, or linearity; such structure can be incorporated via $\prior$. All these extensions have been studied extensively in the literature on MAB, and account for a substantial segment thereof; see \cite{Bubeck-survey12} for background and details.

\subsection{Chernoff Bounds}

We use an elementary concentration inequality known as {\em Chernoff Bounds}, in a formulation from~\cite{MitzUpfal-book05}.
\begin{theorem}[Chernoff Bounds]
\label{thm:chernoff}
Consider $n$ i.i.d. random variables $X_1 \ldots X_n$ with values in $[0,1]$. Let
    $X = \tfrac{1}{n} \sum_{i=1}^n X_i$ be their average, and let $\nu = \E[X]$. Then:
\[ \min\left(\; \Pr[ X-\nu > \delta \nu ],\quad
                \Pr[ \nu-X > \delta \nu ]
    \; \right)
    < e^{-\nu n \delta^2/3}
    \quad \text{for any $\delta\in (0,1)$.}
\]
\end{theorem}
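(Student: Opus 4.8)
The plan is to use the standard ``Chernoff method'': apply Markov's inequality to an exponential transform of the sum and then optimize the free parameter. First I would pass from the average to the sum by writing $S = \sum_{i=1}^n X_i = nX$, so that $\mu := \E[S] = n\nu$, and the two events in the statement become $\{S - \mu > \delta\mu\}$ and $\{\mu - S > \delta\mu\}$. Since the minimum of the two probabilities is at most either one of them, it suffices to bound each tail separately by $e^{-\mu\delta^2/3}$, and then recall $\mu = n\nu$.

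For the upper tail, for any $s>0$ Markov's inequality applied to $e^{sS}$ gives
\[
\Pr[S > (1+\delta)\mu] \le \frac{\E[e^{sS}]}{e^{s(1+\delta)\mu}},
\]
and by independence $\E[e^{sS}] = \prod_{i=1}^n \E[e^{sX_i}]$. The key elementary step is to bound the moment generating function of a single $[0,1]$-valued variable: by convexity of $x\mapsto e^{sx}$ on $[0,1]$ one has $e^{sx} \le 1 + x(e^s-1)$, so taking expectations and using $1+y\le e^y$,
\[
\E[e^{sX_i}] \le 1 + \nu(e^s-1) \le e^{\nu(e^s-1)}.
\]
Multiplying over $i$ yields $\E[e^{sS}] \le e^{\mu(e^s-1)}$, hence $\Pr[S>(1+\delta)\mu] \le \exp\!\left(\mu(e^s-1) - s(1+\delta)\mu\right)$. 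Minimizing the exponent over $s>0$ gives $s=\ln(1+\delta)$ and
\[
\Pr[S>(1+\delta)\mu] \le \left(\frac{e^\delta}{(1+\delta)^{1+\delta}}\right)^{\!\mu}.
\]

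It then remains to show this is at most $e^{-\mu\delta^2/3}$, i.e.\ the one-variable inequality $(1+\delta)\ln(1+\delta) - \delta \ge \delta^2/3$ for $\delta\in(0,1)$. I expect this to be the only real obstacle, since it is the step that is genuinely calculation rather than bookkeeping. I would prove it by setting $g(\delta) = (1+\delta)\ln(1+\delta)-\delta-\delta^2/3$, noting $g(0)=0$ and $g'(\delta)=\ln(1+\delta)-\tfrac23\delta$ with $g'(0)=0$, and then checking the sign of $g''(\delta)=\tfrac{1}{1+\delta}-\tfrac23$: this shows $g'$ rises on $(0,\tfrac12)$ and falls on $(\tfrac12,1)$, while $g'(1)=\ln 2-\tfrac23>0$, so $g'>0$ throughout $(0,1]$ and thus $g\ge 0$.

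Finally, the lower tail is symmetric: applying the same argument to $e^{-sS}$ with $s>0$, the optimal choice $s=-\ln(1-\delta)$ yields $\Pr[S<(1-\delta)\mu] \le \left(e^{-\delta}/(1-\delta)^{1-\delta}\right)^{\mu} \le e^{-\mu\delta^2/2}$, which is even stronger than the required $e^{-\mu\delta^2/3}$ (here the companion scalar estimate $-\delta-(1-\delta)\ln(1-\delta)\le -\delta^2/2$ is routine). Combining the two tail bounds and substituting $\mu=n\nu$ gives the stated inequality; since the statement only bounds the minimum of the two probabilities, establishing either tail already suffices, and proving both merely makes the argument uniform.
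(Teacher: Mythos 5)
Your proposal is correct: the exponential-moment (Chernoff) method, the convexity bound $e^{sx}\le 1+x(e^s-1)$ for $x\in[0,1]$, the optimization $s=\ln(1+\delta)$, and the scalar inequalities $(1+\delta)\ln(1+\delta)-\delta\ge \delta^2/3$ and $-\delta-(1-\delta)\ln(1-\delta)\le -\delta^2/2$ all check out, and bounding both tails indeed more than covers the stated bound on the minimum. The paper itself gives no proof --- it imports this statement verbatim from Mitzenmacher--Upfal --- and your argument is essentially the standard proof found in that cited source, so there is nothing to compare beyond noting the match.
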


\section{Full rationality (HardMax)}
\label{sec:rational}

In this section, we will consider the version in which
the agents are fully rational, in the sense that their response function is \HardMax. We show that principals are not incentivized to \emph{explore}--- \ie to deviate from \DynGreedy. The core technical result is that if one principal adopts \DynGreedy, then the other principal loses all agents as soon as he deviates.

To make this more precise, let us say that two MAB algorithms \emph{deviate} at (local) step $n$ if there is an action $a\in A$ and \asedit{a set of step-$n$ histories of positive probability such that any history $h$ in this set} is feasible for both algorithms, and under this history the two algorithms choose action $a$ with different probability.

\begin{theorem}\label{thm:DG-dominance}
Assume \HardMax response function with fair tie-breaking. Assume that \alg[1] is \DynGreedy, and \alg[2] deviates from \DynGreedy starting from some (local) step $n_0<T$. Then all agents in global rounds $t\geq n_0$ select principal $1$.
\end{theorem}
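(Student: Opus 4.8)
The plan is to track the posterior mean rewards $\PMR_1(t)$ and $\PMR_2(t)$, which are deterministic quantities computed identically by every agent, and to show $\PMR_1(t) > \PMR_2(t)$ for every $t \ge n_0$; under \HardMax this forces agent $t$ to principal~$1$. Two ingredients drive everything. First, \emph{monotonicity}: since $\alg[1]=\DynGreedy$ is monotone, $\BIR_1$ is non-increasing and hence $\rew_1(\cdot)$ is non-decreasing. Second, a \emph{strict one-step loss at the deviation}: because $\alg[2]$ coincides with \DynGreedy on steps $1,\dots,n_0-1$, the two algorithms induce the same distribution over step-$n_0$ histories, and by definition of deviation, on a positive-probability set of histories $\alg[2]$ places positive probability on an action other than the \DynGreedy choice. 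By assumption~\eqref{eq:assn-distinct} the greedy choice is the unique posterior-mean maximizer, so that action has strictly smaller posterior mean; hence $\rew_2(n)=\rew_1(n)$ for all $n<n_0$ while $\rew_2(n_0)<\rew_1(n_0)$.

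Next I would analyze the rounds before $n_0$. Since agent $t$ sees only the index $t$, its posteriors $\posteriorN{1}{t},\posteriorN{2}{t}$ are the model's prior laws of the selection counts. I claim, by induction on $t\le n_0-1$, that each such round is an exact tie resolved by an independent fair coin flip. Indeed, for $t\le n_0-1$ these posteriors are supported on counts with $n_i(t)+1\le n_0-1<n_0$, where both principals run \emph{the same} algorithm; hence the entire process up to round $t$ is invariant under swapping the two principals, so $(n_1(t),n_2(t))$ is exchangeable, $\posteriorN{1}{t}=\posteriorN{2}{t}$, and $\PMR_1(t)=\PMR_2(t)$. Fair tie-breaking then makes round $t$ a fair coin flip, preserving the symmetry. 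Consequently $m:=n_2(n_0)$, the number of times principal~$2$ is selected before round $n_0$, is distributed as $\mathrm{Bin}(n_0-1,\tfrac12)$, and $n_1(n_0)=n_0-1-m$.

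Finally I would prove by induction on $t\ge n_0$ that agent $t$ picks principal~$1$. Assuming agents $n_0,\dots,t-1$ all chose principal~$1$ (vacuous at $t=n_0$), principal~$2$ is frozen, so $n_2(t)=m$ and $n_1(t)=t-1-m$. Hence $\PMR_1(t)=\E_{m}[\rew_1(t-m)]$ and $\PMR_2(t)=\E_{m}[\rew_2(m+1)]$. Since $m+1\le n_0$, the one-step-loss ingredient gives $\rew_2(m+1)\le\rew_1(m+1)$, with strict inequality on the positive-probability event $\{m=n_0-1\}$, so $\PMR_2(t)<\E_{m}[\rew_1(m+1)]$. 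It then suffices to show $\PMR_1(t)\ge\E_{m}[\rew_1(m+1)]$. Substituting $k=m+1$, this reads $\E_{k}[\rew_1(t+1-k)]\ge\E_{k}[\rew_1(k)]$; using that $k=1+\mathrm{Bin}(n_0-1,\tfrac12)$ is symmetric about $(n_0+1)/2$, i.e.\ $k\overset{d}{=}(n_0+1)-k$, the right-hand side equals $\E_{k}[\rew_1((n_0+1)-k)]$, and since $t\ge n_0$ gives $(t+1)-k\ge(n_0+1)-k$ for every $k$, monotonicity of $\rew_1$ finishes it termwise. Thus $\PMR_1(t)>\PMR_2(t)$, agent $t$ chooses principal~$1$, and the induction closes.

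The main obstacle is that a naive pointwise comparison of the selection counts fails: in the low-probability realization where principal~$2$ wins every early coin flip ($m=n_0-1$), principal~$2$ is actually ahead in count right after round $n_0$, and its capped reward $\rew_2(n_0)$ may exceed $\rew_1$ evaluated at principal~$1$'s small count. (Indeed, an agent who \emph{observed} the realized counts could rationally prefer principal~$2$ here, so the statement is genuinely about the averaged posterior.) The crux is therefore the combination of the symmetry of the Binomial posterior with the monotonicity of $\rew_1$, which rescues the inequality in expectation; the other point needing care is establishing the exact exchangeability/coin-flip structure of the pre-$n_0$ rounds, so that the law of $m$ is precisely symmetric.
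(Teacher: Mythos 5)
Your proof is correct and follows essentially the same route as the paper's: a coupling argument plus assumption \eqref{eq:assn-distinct} to obtain the strict gap $\rew_1(n_0)>\rew_2(n_0)$, symmetry of the pre-$n_0$ rounds to equate the agents' posteriors on the selection counts, and monotonicity of $\rew_1$ to propagate $\PMR_1(t)>\PMR_2(t)$ forward from $t=n_0$. The only difference is bookkeeping: the paper isolates the forward propagation as a separate ``sudden-death'' lemma and compares the two PMRs at $t=n_0$ termwise using $\posteriorN{1}{n_0}=\posteriorN{2}{n_0}$, whereas you fold everything into a single induction and invoke the symmetry of the $\mathrm{Bin}(n_0-1,\tfrac12)$ law of $n_2(n_0)$ --- which is the same fact in different clothing.
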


\begin{corollary}\label{cor:DG-dominance}
The competition game between principals has a unique Nash equilibirium: both principals choose \DynGreedy.
\end{corollary}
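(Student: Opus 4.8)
The plan is to reduce the corollary entirely to bookkeeping of expected market shares, using Theorem~\ref{thm:DG-dominance} as a black box together with one symmetry observation. Since each principal's utility is its market share and the two shares always sum to $T$, the competition game is constant-sum, which keeps the accounting clean. Throughout I would interpret ``$\alg[i]$ is \DynGreedy'' as agreement with \DynGreedy on the first $T-1$ local steps, since a principal runs at most $T$ local steps and a first deviation at local step $\geq T$ cannot affect \emph{who} any agent chooses. Thus an algorithm that is not \DynGreedy is precisely one whose first deviation occurs at some local step $n_0 < T$, which is exactly the regime covered by the theorem. The one fact I would record up front is the symmetry statement: when both principals run \DynGreedy under fair (hence symmetric) tie-breaking, relabeling principal $1\leftrightarrow 2$ is a measure-preserving involution of the game's randomness that swaps the two market shares, so $\E[m_1]=\E[m_2]=T/2$ regardless of the (possibly ``rich-get-richer'') realized dynamics.

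For existence, I would fix principal $1=\DynGreedy$ and let principal $2$ deviate, first deviation at local step $n_0<T$. The deviating step can only be reached at a global round $\geq n_0$, so in every global round $t<n_0$ principal $2$ still behaves like \DynGreedy; applying the symmetry fact to rounds $1,\dots,n_0-1$ gives principal $2$ an expected share of $(n_0-1)/2$ there, while the theorem hands \emph{all} of rounds $t\geq n_0$ to principal $1$. Hence deviating yields expected share $(n_0-1)/2<T/2$, strictly worse than the $T/2$ obtained by playing \DynGreedy. The same computation holds with the roles reversed, so $(\DynGreedy,\DynGreedy)$ is a Nash equilibrium.

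For uniqueness, I would suppose a Nash equilibrium $(\alg[1],\alg[2])$ with $\alg[1]\neq\DynGreedy$, first deviation at local step $n_0<T$, and derive a contradiction. Consider principal $2$ switching to \DynGreedy: by the theorem (roles swapped) principal $2$ then wins every agent from round $n_0$ on, for expected share $(n_0-1)/2+(T-n_0+1)>T/2$. The equilibrium condition forces $\E[m_2]$ to be at least this quantity, so $\E[m_2]>T/2$ and therefore $\E[m_1]<T/2$. But principal $1$ can also switch to \DynGreedy, and against any fixed $\alg[2]$ this guarantees expected share at least $T/2$: exactly $T/2$ if $\alg[2]=\DynGreedy$ (by symmetry), and strictly more if $\alg[2]\neq\DynGreedy$ (by the theorem). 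Since $T/2>\E[m_1]$, this is a profitable deviation, contradicting equilibrium. Hence $\alg[1]=\DynGreedy$, and the symmetric argument gives $\alg[2]=\DynGreedy$, establishing uniqueness.

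I expect the main obstacle to lie in the early-round accounting rather than the late rounds: the theorem cleanly surrenders all agents after round $n_0$, but pinning down the exact $(n_0-1)/2$ contribution from rounds $1,\dots,n_0-1$ hinges on the observation that a first deviation at local step $n_0$ cannot manifest before global round $n_0$, so that both principals are genuinely running \DynGreedy there and the swap-symmetry applies verbatim. A secondary subtlety worth flagging is that the theorem's conclusion is that the laggard loses \emph{all} remaining agents (a strict win), which is exactly what converts the share comparisons into the strict inequalities needed to eliminate \emph{every} alternative equilibrium, including the ``mixed'' candidate in which precisely one principal plays \DynGreedy.
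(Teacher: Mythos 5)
The paper states this corollary without proof, as an immediate consequence of Theorem~\ref{thm:DG-dominance}; your write-up is precisely the natural fleshing-out of that implication and its core accounting is correct: fair coin flips (hence expected share $(n_0-1)/2$) in the rounds before the first deviation can manifest, all agents lost from round $n_0$ onward by the theorem, and the constant-sum comparison against the $T/2$ that \DynGreedy guarantees, which yields both existence and uniqueness.

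One concrete imprecision in your setup: the claim that ``a first deviation at local step $\geq T$ cannot affect who any agent chooses'' is false at $n_0=T$ exactly. Agent $T$'s posterior $\posteriorN{1}{T}$ places mass $2^{-(T-1)}>0$ on the event $n_1(T)=T-1$, and Lemma~\ref{lm:DG-rew} (whose proof nowhere uses $n_0<T$) gives $\rew_1(T)<\rew_2(T)$; hence $\PMR_1(T)<\PMR_2(T)$ and agent $T$ deterministically defects, leaving the deviator with expected share $(T-1)/2<T/2$. So an algorithm whose first deviation is at local step $T$ is neither covered by the theorem as stated nor payoff-equivalent to \DynGreedy, yet your convention classifies it as ``being \DynGreedy''; as written, the equivalence class you quotient by is wrong, even though the computation just given shows such a profile is not an equilibrium either, so the fix is two lines. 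A secondary point: you restrict to the pure market-share utility to get a constant-sum game, whereas the paper's remark wants the corollary for general non-decreasing per-round utilities $U_{i,t}$ with $U_{i,t}(0)>0$; there the cleaner route avoids constant-sum bookkeeping entirely and just observes that a deviator collects utility only from (half of) the rounds before $n_0$, while \DynGreedy additionally collects strictly positive utility from every later round it wins.
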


\begin{remark}
This corollary holds under a more general model which allows time-discounting: namely, the utility of each principal $i$ in each global round $t$ is $U_{i,t}(r_t)$ if this principal is chosen, and $0$ otherwise, where $U_{i,t}(\cdot)$ is an arbitrary non-decreasing function with $U_{i,t}(0)>0$.
\end{remark}

\subsection{Proof of Theorem~\ref{thm:DG-dominance}}

The proof starts with two auxiliary lemmas: that deviating from \DynGreedy implies a strictly smaller Bayesian-expected reward, and that \HardMax implies a ``sudden-death" property: if one agent chooses principal $1$ with certainty, then so do all subsequent agents do. \asedit{We re-use both lemmas in later sections, so we state them in sufficient generality.}

\begin{lemma}\label{lm:DG-rew}
\asedit{Assume that \alg[1] is \DynGreedy, and \alg[2] deviates from \DynGreedy starting from some (local) step $n_0<T$. Then $\rew_1(n_0)>\rew_2(n_0)$. This holds for any response function $\respF$.}
\end{lemma}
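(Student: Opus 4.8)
The plan is to express the one-step Bayesian-expected reward as a posterior-mean quantity and then exploit that $\DynGreedy$ is, by construction, the \emph{pointwise} maximizer of exactly that quantity. Concretely, for either principal $i$, conditioning on the step-$n_0$ history $h=H_{i,n_0}$ and on the chosen arm gives $\E[r\mid h, a]=\PMR(a\mid h)=\E[\mu_a\mid h]$ (the chosen arm is a function of $h$ and independent coins, so conditioning on it does not alter the posterior of $\mu_a$). Hence
\begin{align*}
\rew_i(n_0) = \E_{h}\left[\; \sum_{a\in A} \Pr[a_{i,n_0}=a \mid h]\cdot \E[\mu_a\mid h] \;\right],
\end{align*}
where the outer expectation is over the law of $h=H_{i,n_0}$.

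The first bookkeeping step is to note that the two algorithms induce the \emph{same} distribution over step-$n_0$ histories. Since $n_0$ is the first step at which $\alg[2]$ deviates from $\DynGreedy$, the two algorithms make identical choices (up to probability $0$) at every step $n<n_0$, so the laws of $H_{1,n_0}$ and $H_{2,n_0}$ coincide; I would therefore couple the two isolated runs and compare them history-by-history under a common $h$.

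Fix such an $h$. By definition $\DynGreedy$ places all its mass on $a^*(h):=\arg\max_{a}\E[\mu_a\mid h]$, and by assumption~\eqref{eq:assn-distinct} this maximizer is unique almost surely. Consequently $\E[\mu_{a_{1,n_0}}\mid h]=\max_a\E[\mu_a\mid h]\ge \E[\mu_{a_{2,n_0}}\mid h]$ for almost every $h$, with strict inequality precisely when $\alg[2]$ assigns positive probability to some arm $a\neq a^*(h)$. Now invoke the deviation hypothesis: there is a positive-probability set of histories on which $\alg[2]$ chooses some arm with probability different from $\DynGreedy$; since $\DynGreedy$ puts mass $1$ on $a^*(h)$, on any such history $\alg[2]$ must place positive mass on a suboptimal arm, so the strict inequality holds there. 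Integrating the pointwise inequality over $h$ yields $\rew_1(n_0)>\rew_2(n_0)$. Finally, nothing in the argument refers to $\respF$: the quantities $\rew_i(\cdot)$ are defined for the algorithms run in isolation, so the conclusion holds for every response function.

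The main obstacle is not the greedy comparison itself, which is immediate, but the two auxiliary steps framing it: (i) justifying that both algorithms share the same step-$n_0$ history distribution, which hinges on reading ``deviates starting from $n_0$'' as ``agrees with $\DynGreedy$ at all steps before $n_0$''; and (ii) translating the formal definition of ``deviate'' into the usable statement that $\alg[2]$ puts positive probability on a suboptimal arm on a positive-probability event. It is here that uniqueness of the argmax from~\eqref{eq:assn-distinct} does the real work, since it is exactly what upgrades the weak pointwise inequality to a strict one on the deviation set.
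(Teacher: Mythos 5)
Your proof is correct and follows essentially the same route as the paper's: couple the two runs on a common step-$n_0$ history (valid because the algorithms agree before $n_0$), use assumption~\eqref{eq:assn-distinct} to make \DynGreedy's pointwise optimality strict against any other arm, observe that the deviation hypothesis forces \alg[2] to put positive mass on a suboptimal arm on a positive-probability set of histories, and integrate. The only cosmetic difference is that you write out $\rew_i(n_0)$ as an explicit sum over arms weighted by choice probabilities, whereas the paper works directly with $\E[\mu_{a_{2,n_0}}\mid H=h]$; the substance is identical.
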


\asedit{Lemma~\ref{lm:DG-rew} does not rely on any particular shape of the response function because it only considers the performance of each algorithm without competition.}

\begin{proof}[Proof of Lemma~\ref{lm:DG-rew}]
Since the two algorithms coincide on the first $n_0-1$ steps, it follows by symmetry that histories $H_{1,n_0}$ and $H_{2,n_0}$ have the same distribution. We use a \emph{coupling argument}: w.l.o.g., we assume the two histories coincide,
    $H_{1,n_0} = H_{2,n_0} = H$.

At local step $n_0$, \DynGreedy chooses an action $a_{1,n_0} = a_{1,n_0}(H)$
which maximizes the posterior mean reward given history $H$: for any realized history $h\in \support(H)$ and any action $a\in A$
\begin{align}\label{eq:lm:DG-rew-1}
 \PMR(a_{1,n_0} \mid H = h) \geq \PMR(a \mid H=h).
\end{align}

\ascomment{Rewrote the rest of the proof to account for positive-prob set of histories.}

By assumption \eqref{eq:assn-distinct}, it follows that 
\begin{align}\label{eq:lm:DG-rew-2}
 \PMR(a_{1,n_0} \mid H = h) > \PMR(a \mid H=h)
 \quad \text{for any $h\in \support(H)$ and $a\neq a_{1,n_0}(h)$.}
\end{align}

Since the two algorithms deviate at step $n_0$, there is a set $S\subset \support(H)$ of step-$n_0$ histories such that $\Pr[S]>0$ and any history $h\in S$ satisfies
    $\Pr[a_{2,n_0}\neq a_{1,n_0} \mid H=h]>0$.
Combining this with \eqref{eq:lm:DG-rew-2}, we deduce that 
\begin{align}\label{eq:lm:DG-rew-3}
 \PMR(a_{1,n_0} \mid H = h) > \E\left[ \mu_{a_{2,n_0}}\mid H=h \right]
 \quad\text{for each history $h\in S$}.
\end{align}
Using \eqref{eq:lm:DG-rew-1} and \eqref{eq:lm:DG-rew-3} and integrating over realized histories $h$, we obtain
    $\rew_1(n_0) > \rew_2(n_0)$.
\end{proof}

\begin{lemma}\label{lm:DG-sudden}
\asedit{Consider \HardMax response function with $\respF(0)\geq\tfrac12$.} 
Suppose \alg[1] is monotone, and $\PMR_1(t_0)>\PMR_2(t_0)$ for some global round $t_0$. Then $\PMR_1(t)>\PMR_2(t)$ for all subsequent rounds $t$.
\end{lemma}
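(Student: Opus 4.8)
The plan is to prove the claim by a single-step induction: it suffices to show that $\PMR_1(t_0) > \PMR_2(t_0)$ implies $\PMR_1(t_0+1) > \PMR_2(t_0+1)$, because the hypothesis then regenerates at round $t_0+1$ and the strict inequality propagates to every $t>t_0$. The intuition driving this step is that a strict \PMR advantage makes principal $1$ capture round $t_0$ \emph{with certainty} under \HardMax; this can only raise principal $1$'s posterior mean reward (by monotonicity) while leaving principal $2$'s untouched (principal $2$ is simply not chosen), so the gap cannot close.

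First I would record that every $\PMR_i(t)$, and hence each choice probability $p_t = \respF(\PMR_1(t)-\PMR_2(t))$, is a deterministic number: agents observe no realized rewards, so $\mI_t$ contains only the round index, and the ``posteriors'' $\posteriorN{i}{t}$ are just the predictive laws of $n_i(t)=\sum_{s<t}\indicator{i_s=i}$ under the deterministic sequence $(p_s)$. Since $\PMR_1(t_0) > \PMR_2(t_0)$ gives $\Delta_{t_0} := \PMR_1(t_0)-\PMR_2(t_0) > 0$, and \HardMax sets $\respF\equiv 1$ on $(0,1]$, we get $p_{t_0}=1$, i.e. $i_{t_0}=1$ deterministically. (The tie-breaking value $\respF(0)$ plays no role here, as the gap is strict; it only matters for applying the lemma in settings where ties can occur.)

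Consequently the predictive law for principal $1$'s count shifts up by exactly one — $\posteriorN{1}{t_0+1}$ is the law of $m+1$ for $m\sim\posteriorN{1}{t_0}$ — while principal $2$'s count is unchanged, so $\posteriorN{2}{t_0+1}=\posteriorN{2}{t_0}$. Hence $\PMR_2(t_0+1)=\PMR_2(t_0)$, whereas, writing $\rew_1$ for the Bayesian-expected reward of \alg[1],
\[
\PMR_1(t_0+1) = \E_{m\sim\posteriorN{1}{t_0}}[\rew_1(m+2)] \;\geq\; \E_{m\sim\posteriorN{1}{t_0}}[\rew_1(m+1)] = \PMR_1(t_0),
\]
where the inequality is exactly monotonicity of \alg[1]: $\BIR$ non-increasing is equivalent to $\rew_1$ non-decreasing, so $\rew_1(m+2)\geq\rew_1(m+1)$ pointwise. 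Combining, $\PMR_1(t_0+1)\geq\PMR_1(t_0) > \PMR_2(t_0) = \PMR_2(t_0+1)$, which closes the induction.

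I expect the only delicate point to be justifying the clean posterior update — that $i_{t_0}=1$ holds with certainty, so that $\posteriorN{1}{t_0+1}$ is a deterministic upward shift of $\posteriorN{1}{t_0}$ rather than a mixture over ``principal $1$ chosen / not chosen''. This hinges precisely on the fact that under \HardMax a strict advantage yields a degenerate choice distribution, which itself follows from the $\PMR_i$ being non-random given the public information. Everything after that is a one-line application of monotonicity.
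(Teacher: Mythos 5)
Your proof is correct and follows essentially the same route as the paper's: a strict \PMR gap under \HardMax forces the agent to choose principal $1$ with certainty, so $\PMR_2$ is frozen while monotonicity of \alg[1] keeps $\PMR_1$ from decreasing, and the gap propagates by induction. The only cosmetic difference is that you regenerate the hypothesis one round at a time whereas the paper compares round $t$ directly back to $t_0$; your explicit justification that the $\PMR_i$ are deterministic (so the choice distribution is degenerate) is a point the paper leaves implicit.
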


\begin{proof}
Let us use induction on round $t\geq t_0$, with the base case $t=t_0$. Let $\mN=\mN_{1,t_0}$ be the agents' posterior distribution for $n_{1,t_0}$, the number of global rounds before $t_0$ in which principal $1$ is chosen. By induction, all agents from $t_0$ to $t-1$ chose principal $1$, so $\PMR_2(t_0)= \PMR_2(t)$. Therefore,
\[ \PMR_1(t)
    = \Ex{n\sim \mN}{\rew_1(n+1+t-t_0)}
    \geq \Ex{n\sim \mN}{\rew_1(n+1)}
    =\PMR_1(t_0) > \PMR_2(t_0)= \PMR_2(t), \]
where the first inequality holds because \alg[1] is monotone, and the second one is the base case.
\end{proof}

\begin{proof}[Proof of Theorem~\ref{thm:DG-dominance}]
Since the two algorithms coincide on the first $n_0-1$ steps, it follows by symmetry that $\rew_1(n) = \rew_2(n)$ for any $n< n_0$.
By Lemma~\ref{lm:DG-rew},
    $\rew_1(n_0) > \rew_2(n_0)$.

Recall that $n_i(t)$ is the number of global rounds $s<t$ in which principal $i$ is chosen, and $\posteriorN{i}{t}$ is the agents' posterior distribution for this quantity. By symmetry, each agent $t<n_0$ chooses a principal uniformly at random. It follows that
    $\posteriorN{1}{n_0} = \posteriorN{2}{n_0}$
(denote both distributions by $\mN$ for brevity), and $\mN(n_0-1)>0$.
Therefore:
\begin{align}
\PMR_1(n_0)
  &= \Ex{n\sim \mN} {\rew_1(n+ 1)}
   = \sum_{n = 0}^{n_0-1} \mN(n) \cdot{\rew_1(n + 1)} \nonumber \\
  & > \mN(n_0-1)\cdot {\rew_2(n_0)} + \sum_{n = 0}^{n_0-2}  \mN(n)\cdot{\rew_2(n + 1)}
    \nonumber \\
  &= \Ex{n\sim \mN}{\rew_2(n + 1)} = \PMR_2(n_0) \label{eq:pf:thm:DG-dominance}
\end{align}
So, agent $n_0$ chooses principal $1$. By Lemma~\ref{lm:DG-sudden} \asedit{(noting that \DynGreedy is monotone)}, all subsequent agents choose principal $1$, too.
\end{proof}

%
%

\subsection{\HardMax with biased tie-breaking}
\label{sec:HardMax-biased}

The \HardMax model is very sensitive to the tie-breaking rule. For starters, if ties are  broken deterministically in favor of principal $1$, then principal 1 can get all agents no matter what the other principal does, simply by using \StaticGreedy.

\begin{theorem}\label{thm:HardMax-hardTies}
Assume \HardMax response function with $\respF(0)=1$ (ties are always broken in favor of principal $1$). If \alg[1] is \StaticGreedy, then all agents choose principal $1$.
\end{theorem}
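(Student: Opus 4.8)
The plan is to run a straightforward strong induction on the global round $t$, showing that agent $t$ chooses principal $1$ with certainty (i.e. $p_t=1$) and, as a consequence, that principal $2$ is never chosen. The engine of the argument is that \StaticGreedy is a ``frozen'' algorithm: it always recommends the prior-best action (action $1$ in our labeling), so its Bayesian-expected reward is the \emph{constant} $\rew_1(n)=\E[\mu_1]=\max_a \E[\mu_a]$ for every local step $n$. Consequently $\PMR_1(t)=\E_{n\sim \posteriorN{1}{t}}[\rew_1(n+1)]=\E[\mu_1]$ at every round $t$, regardless of the agents' posterior on principal $1$'s step count. This strips away all of principal $1$'s dependence on the history and pins $\PMR_1(t)$ to a fixed value.

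First I would set up the induction hypothesis ``$p_s=1$ for all $s<t$'', whose immediate consequence is that principal $2$ has been chosen zero times before round $t$, so the agents' posterior $\posteriorN{2}{t}$ is a point mass at $0$ and hence $\PMR_2(t)=\rew_2(1)$. The second ingredient is a universal first-step bound: for \emph{any} bandit algorithm, the action chosen at the very first local step is selected before any reward feedback is observed, so its expected reward $\rew_2(1)=\sum_a \Pr[a_{2,1}=a]\,\E[\mu_a]$ is a convex combination of the prior means and therefore satisfies $\rew_2(1)\le \max_a \E[\mu_a]=\E[\mu_1]$. Combining the two ingredients yields $\Delta_t:=\PMR_1(t)-\PMR_2(t)=\E[\mu_1]-\rew_2(1)\ge 0$.

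To close the induction I would invoke the biased tie-breaking: since the \HardMax response function has $\respF\equiv 1$ on $(0,1]$ together with $\respF(0)=1$, we get $p_t=\respF(\Delta_t)=1$ whenever $\Delta_t\ge 0$. Thus agent $t$ deterministically picks principal $1$, principal $2$ stays unchosen, and the induction advances to $t+1$; the base case $t=1$ is identical, since $n_2(1)=0$ trivially gives a point-mass posterior.

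I expect the only real subtlety to be the tie case $\Delta_t=0$, which is not a degenerate corner but the typical situation: if $\alg[2]$ also opens by playing the prior-best action (for instance if it is itself \StaticGreedy), then $\rew_2(1)=\E[\mu_1]$ and the two principals are genuinely tied. This is exactly where the hypothesis $\respF(0)=1$ does the work, so I would emphasize that the conclusion is driven entirely by the tie-breaking bias rather than by any strict advantage of \StaticGreedy; in particular one cannot shortcut through Lemma~\ref{lm:DG-sudden}, which requires a \emph{strict} inequality $\PMR_1>\PMR_2$ to initialize. The conceptual crux is that because principal $2$ is never chosen it never advances past its first local step, so the only bound we ever need is $\rew_2(1)\le \E[\mu_1]$ --- we never have to confront the fact that $\rew_2(n)$ for $n\ge 2$ could well exceed $\E[\mu_1]$ once the algorithm has learned.
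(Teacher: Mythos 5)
Your proof is correct and follows essentially the same route as the paper: the paper's argument is a one‑line induction over rounds invoking the (trivial) monotonicity of \StaticGreedy and an argument ``similar to'' Lemma~\ref{lm:DG-sudden}, with the tie‑breaking rule $\respF(0)=1$ closing each step exactly as you do. Your write‑up merely makes explicit what the paper leaves implicit --- that $\rew_1(\cdot)\equiv\E[\mu_1]$ is constant, that principal $2$ never advances past its first local step so only the bound $\rew_2(1)\le\E[\mu_1]$ is ever needed, and that Lemma~\ref{lm:DG-sudden} cannot be applied verbatim because it requires a strict inequality --- all of which is accurate.
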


\begin{proof}
Agent $1$ chooses principal $1$ because of the tie-breaking rule. Since \StaticGreedy is trivially monotone, all the subsequent agents choose principal $1$ by an induction argument similar to the one in the proof of Lemma~\ref{lm:DG-sudden}.
\end{proof}

A more challenging scenario is when the tie-breaking is biased in favor of principal 1, but not deterministically so: $\respF(0)>\tfrac12$. Then this principal also has a ``winning strategy" no matter what the other principal does. Specifically, principal 1 can get all but the first few agents, under a mild technical assumption that \DynGreedy deviates from \StaticGreedy. Principal 1 can use \DynGreedy, or any other monotone MAB algorithm that coincides with \DynGreedy in the first few steps.


\begin{theorem}\label{thm:HardMax-biased}
Assume \HardMax response function with $\respF(0)>\tfrac12$ (\ie tie-breaking is biased in favor of principal $1$). Assume the prior $\mP$ is such that \DynGreedy deviates from \StaticGreedy starting from some step $n_0$. Suppose that principal $1$ runs a monotone MAB algorithm that coincides with \DynGreedy in the first $n_0$ steps. Then all agents $t\geq n_0$ choose principal $1$.
\end{theorem}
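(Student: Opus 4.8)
The plan is to reduce everything, via the ``sudden-death'' Lemma~\ref{lm:DG-sudden}, to the single inequality $\PMR_1(n_0)>\PMR_2(n_0)$: once this holds, Lemma~\ref{lm:DG-sudden} (applicable since $\respF(0)>\tfrac12$ and \alg[1] is monotone) gives $\PMR_1(t)>\PMR_2(t)$, hence $i_t=1$ under \HardMax, for every $t\ge n_0$. Write $g:=\E[\mu_1]$. Since \alg[1] coincides with \DynGreedy, and \DynGreedy with \StaticGreedy, on steps $1,\dots,n_0-1$, we have $\rew_1(m)=g$ for $m<n_0$ and $\PMR_1(t)=g$ for $t<n_0$; and since \DynGreedy deviates from \StaticGreedy at step $n_0$, Lemma~\ref{lm:DG-rew} (comparing \DynGreedy against \StaticGreedy, whose step-$n_0$ reward is $g$) gives $\rew_1(n_0)>g$. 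The crucial simplification is the deterministic identity $n_1(n_0)+n_2(n_0)=n_0-1$, which lets me write the target as one expectation over $k:=n_1(n_0)\sim\mN_{1,n_0}$,
\[
\PMR_1(n_0)-\PMR_2(n_0)=\E_{k\sim\mN_{1,n_0}}\bigl[\rew_1(k+1)-\rew_2(n_0-k)\bigr].
\]

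The heart of the argument is a \emph{reachability lemma} bounding the support of $\mN_{2,n_0}$. Let $m^*$ be the first step at which \alg[2] deviates from always playing arm~$1$ (i.e.\ from \StaticGreedy). Because \DynGreedy plays arm~$1$ at every step $<n_0$ and, by \eqref{eq:assn-distinct}, arm~$1$ is then the \emph{unique} posterior-best arm, deviating is strictly suboptimal, so $\rew_2(m)=g$ for $m<m^*$ while $\rew_2(m^*)<g$ (provided $m^*<n_0$). I will prove by induction on $s\le n_0$ that $n_2(s)\le\min(m^*,n_0)-1$ almost surely and that $\PMR_2(s)\le g$ for all $s<n_0$. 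The key step: while the support of $\mN_{2,s}$ lies in $\{0,\dots,m^*-2\}$ every reachable reward equals $g$, so $\PMR_2(s)=g$, a tie, and \alg[2] advances with probability $1-\respF(0)>0$; but the instant count $m^*-1$ becomes reachable (positive posterior weight on the ``dip'' $\rew_2(m^*)<g$), the average falls strictly below $g=\PMR_1(s)$, so under \HardMax agent $s$ takes principal~$1$ with certainty and \alg[2]'s count can never grow again. Consequently count $n_0-1$ is reachable only if $m^*\ge n_0$, in which case \alg[2] played arm~$1$ throughout steps $1,\dots,n_0-1$, its step-$n_0$ history is distributed like that of \DynGreedy, and greedy dominance gives $\rew_2(n_0)\le\rew_1(n_0)$.

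These facts finish the expectation term by term. Since $\PMR_2(s)\le g=\PMR_1(s)$ for $s<n_0$, each (deterministic) per-round probability obeys $p_s\ge\respF(0)>\tfrac12$, whence $\mN_{1,n_0}(n_0-1)=\prod_{s<n_0}p_s>\prod_{s<n_0}(1-p_s)=\mN_{1,n_0}(0)>0$. For the interior indices $1\le k\le n_0-2$ the support bound forces $\rew_2(n_0-k)\le g=\rew_1(k+1)$, so those (weighted) terms are $\ge 0$. For the two extreme indices I pair them: using $\rew_2(1)\le g$, $\rew_2(n_0)\le\rew_1(n_0)$ (needed only when $k=0$ carries positive weight), and $\mN_{1,n_0}(n_0-1)>\mN_{1,n_0}(0)$, their combined contribution is at least $(\rew_1(n_0)-g)\,\bigl(\mN_{1,n_0}(n_0-1)-\mN_{1,n_0}(0)\bigr)>0$. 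Hence $\PMR_1(n_0)>\PMR_2(n_0)$, which completes the reduction.

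The main obstacle is precisely the reachability lemma, because the posterior-averaged $\PMR_2$ resists any static bound: $\rew_2(m)$ may exceed $g$ for $m>m^*$ (successful exploration), so one cannot simply claim $\rew_2\le g$ on the support of $\mN_{2,n_0}$. The resolution must be dynamic --- the exploration \emph{dip} $\rew_2(m^*)<g$ enters the agents' posterior before any later high-reward step can be reached, making \alg[2] uncompetitive under \HardMax and freezing its count. A secondary subtlety is that biased tie-breaking $\respF(0)>\tfrac12$ is genuinely essential: it enters only through $\mN_{1,n_0}(n_0-1)>\mN_{1,n_0}(0)$, which would fail under fair tie-breaking and is exactly what neutralizes the otherwise-dangerous $k=0$ term when $\rew_2(n_0)>g$.
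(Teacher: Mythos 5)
Your proof is correct, and it converges on the same skeleton as the paper's: isolate the first ``deviation'' step, establish a strict gap $\PMR_1>\PMR_2$ at a single round, and then let the sudden-death Lemma~\ref{lm:DG-sudden} finish. The middle of the argument, however, is genuinely different. The paper works with $m_0=\min(m^*,n_0)$ (the first step at which either \alg[2] leaves \DynGreedy or \DynGreedy leaves \StaticGreedy), shows $\rew_1(n)\geq\rew_2(n)$ for $n\leq m_0$, and runs an induction giving $\Pr[i_t=1]>\tfrac12$ for $t\leq m_0$, hence \emph{stochastic dominance} of $\posteriorN{1}{m_0}$ over $\posteriorN{2}{m_0}$; combining dominance with monotonicity of $\rew_1$ and the pointwise reward ordering (made strict at $m_0$ via Lemma~\ref{lm:DG-rew} or Lemma~\ref{dgmono}) yields the strict gap already at round $m_0\leq n_0$, so the paper in fact gets all agents $t\geq m_0$. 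You instead prove a \emph{hard support bound} on $\mN_{2,s}$: the instant the count $m^*-1$ becomes reachable, the dip $\rew_2(m^*)<g$ pulls $\PMR_2(s)$ strictly below $g=\PMR_1(s)$ and \HardMax freezes principal~2 out, so $n_2(n_0)\leq\min(m^*,n_0)-1$ almost surely; you then use the complementary-count identity $n_1(n_0)+n_2(n_0)=n_0-1$ to write the gap as one expectation and neutralize the dangerous $k=0$ term against the $k=n_0-1$ term via $\mN_{1,n_0}(n_0-1)>\mN_{1,n_0}(0)$. Note that $\mN_{1,n_0}(0)=\mN_{2,n_0}(n_0-1)$, so the biased tie-breaking enters your argument in exactly the same place it enters the paper's (the strict inequality $\mN_{1,m_0}(m_0-1)>\mN_{2,m_0}(m_0-1)$). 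Your route buys a sharper structural fact (an almost-sure cap on principal 2's sample count, and the observation that for $m^*<n_0$ principal 2 is shut out well before round $n_0$) at the cost of more delicate endgame bookkeeping, which the paper's stochastic-dominance coupling sidesteps. One cosmetic slip: $\mN_{1,n_0}(0)=\prod_{s<n_0}(1-p_s)$ need not be positive (some $p_s$ may equal $1$ once the freeze-out kicks in), but you only ever use $\mN_{1,n_0}(n_0-1)>\mN_{1,n_0}(0)\geq 0$, so nothing breaks.
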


\begin{proof}
The proof re-uses Lemmas~\ref{lm:DG-rew} and~\ref{lm:DG-sudden}, which do not rely on fair tie-breaking.

Because of the biased tie-breaking, for each global round $t$ we have:
\begin{align}\label{eq:thm:HardMax-biased-PMRtoPr}
\text{if $\PMR_1(t)\geq \PMR_2(t)$ then $\Pr[i_t=1]>\tfrac12$.}
\end{align}
Recall that $i_t$ is the principal chosen in global round $t$.

Let $m_0$ be the first step when \alg[2] deviates from \DynGreedy, or \DynGreedy deviates from \StaticGreedy, whichever comes sooner. \asedit{Then \alg[2], \DynGreedy and \StaticGreedy coincide on the first $m_0-1$ steps. Moreover, $m_0\leq n_0$ (since \DynGreedy deviates from \StaticGreedy at step $n_0$), so \alg[1] coincides with \DynGreedy on the first $m_0$ steps.

So, $\rew_1(n)=\rew_2(n)$ for each step $n<m_0$, because \alg[1] and \alg[2] coincide on the first $m_0-1$ steps. Moreover, if \alg[2] deviates from \DynGreedy at step $m_0$ then
    $\rew_1(m_0) > \rew_2(m_0)$ 
by Lemma~\ref{lm:DG-rew}; else, we trivially have
    $\rew_1(m_0) = \rew_2(m_0)$.} To summarize:
\begin{align}\label{eq:thm:HardMax-biased-rew}
    \rew_1(n)\geq\rew_2(n)\quad \text{for all steps $n\leq m_0$}.
\end{align}

We claim that $\Pr[i_t=1]>\tfrac12$ for all global rounds $t\leq m_0$. We prove this claim using induction on $t$. The base case $t=1$ holds by \eqref{eq:thm:HardMax-biased-PMRtoPr} and the fact that in step 1, \DynGreedy chooses the arm with the highest prior mean reward. For the induction step, we assume that $\Pr[i_t=1]>\tfrac12$ for all global rounds $t<t_0$, for some $t_0\leq  m_0$. It follows that distribution $\posteriorN{1}{t_0}$ stochastically dominates distribution $\posteriorN{2}{t_0}$.%
\footnote{For random variables $X,Y$ on \R, we say that $X$ \emph{stochastically dominates} $Y$ if $\Pr[X\geq x] \geq \Pr[Y\geq x]$ for any $x\in \R$.}
Observe that
\begin{align}\label{eq:thm:HardMax-biased-PMR-aux}
\PMR_1(t_0)
  = \Ex{n\sim \posteriorN{1}{t_0}} {\rew_1(n+1)}
  \geq \Ex{n\sim \posteriorN{2}{t_0}} {\rew_2(n+1)}
  = \PMR_2(t_0).
\end{align}
So the induction step follows by \eqref{eq:thm:HardMax-biased-PMRtoPr}. Claim proved.

Now let us focus on global round $m_0$, and denote $\mN_i = \posteriorN{i}{m_0}$.  By the above claim,
\begin{align}\label{eq:thm:HardMax-biased-mN}
\text{$\mN_1$ stochastically dominates $\mN_2$, and moreover
    $\mN_i(m_0-1)>\mN_i(m_0-1)$.}
\end{align}

By definition of $m_0$, either (i) \alg[2] deviates from \DynGreedy starting from local step $m_0$, which implies $\rew_1(m_0)> \rew_2(m_0)$ by Lemma~\ref{lm:DG-rew}, or (ii) \DynGreedy deviates from \StaticGreedy starting from local step $m_0$, which implies $\rew_1(m_0)>\rew_1(m_0-1)$ by Lemma~\ref{dgmono}. In both cases, using \eqref{eq:thm:HardMax-biased-rew} and \eqref{eq:thm:HardMax-biased-mN}, it follows that the inequality in \eqref{eq:thm:HardMax-biased-PMR-aux} is strict for $t_0=m_0$.

Therefore, agent $m_0$ chooses principal $1$, and by Lemma~\ref{lm:DG-sudden} so do all subsequent agents.
\end{proof}

\section{Relaxed rationality: HardMax \& Random}
\label{sec:random}
This section is dedicated to the \HardMaxRandom response model, where each principal is always chosen with some positive baseline probability. The main technical result for this model states that a principal with asymptotically better \BIR wins by a large margin: after a ``learning phase" of constant duration, all agents choose this principal with maximal possible probability $\respF(1)$. For example, a principal with $\BIR(n)\leq \tilde{O}(n^{-1/2})$ wins over a principal with $\BIR(n)\geq \Omega(n^{-1/3})$. However, this positive result comes with a significant caveat detailed in Section~\ref{sec:random-greedy}.

We formulate and prove a cleaner version of the result, followed by a more general formulation developed in a subsequent Remark~\ref{rem:random-messy}. We need to express a property that \alg[1] eventually catches up and surpasses \alg[2], even if initially it receives only a fraction of traffic. For the cleaner version, we assume that both algorithms are well-defined for an infinite time horizon, so that their \BIR does not depend on the time horizon $T$ of the game.  Then this property can be formalized as:
\begin{align}\label{eq:random-better-clean}
(\forall \eps>0)\qquad
\BIR_1(\eps n)/\BIR_2(n) \to 0.
\end{align}
In fact, a weaker version of \eqref{eq:random-better-clean} suffices:
denoting $\eps_0 = \respF(-1)$, for some constant $n_0$ we have
\begin{align}\label{eq:random-better-weaker}
(\forall n\geq n_0) \qquad
\BIR_1(\eps_0 n/2)/\BIR_2(n) <\tfrac12.
\end{align}
We also need a very mild technical assumption on the ``bad" algorithm:
\begin{align}\label{eq:random-assn}
 (\forall n\geq n_0) \qquad
  \BIR_2(n) > 4\,e^{-\eps_0 n/12}.
\end{align}

\begin{theorem}\label{thm:random-clean}
Assume \HardMaxRandom response function. Suppose both algorithms are monotone and well-defined for an infinite time horizon, and satisfy~\eqref{eq:random-better-weaker} and~\eqref{eq:random-assn}. Then each agent $t\geq n_0$ chooses principal $1$ with maximal possible probability $\respF(1) = 1- \eps_0$.
\end{theorem}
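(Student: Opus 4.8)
The plan is to reduce everything to the single inequality $\PMR_1(t) > \PMR_2(t)$ for every global round $t\geq n_0$. Once this is in hand the conclusion is immediate: the difference $\Delta_t = \PMR_1(t)-\PMR_2(t)$ then lies in $(0,1]$, where the \HardMaxRandom response function is constantly $\respF(1)=1-\eps_0$, so agent $t$ selects principal $1$ with probability exactly $\respF(1)$. Writing $\mu^\star = \E_{\mu\sim\priorMu}[\max_{a\in A}\mu_a]$ and recalling $\PMR_i(t) = \mu^\star - \E_{n\sim\posteriorN{i}{t}}[\BIR_i(n+1)]$, the target inequality is equivalent to
\[
\E_{n\sim\posteriorN{1}{t}}[\BIR_1(n+1)] \;<\; \E_{n\sim\posteriorN{2}{t}}[\BIR_2(n+1)]
\qquad (t\geq n_0),
\]
and I would bound the two sides by opposite-facing arguments. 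For the worse algorithm I use only the trivial deterministic fact that $n_2(t)\leq t-1$, so $n_2(t)+1\leq t$, and monotonicity of $\BIR_2$ yields the clean lower bound $\E_{n\sim\posteriorN{2}{t}}[\BIR_2(n+1)] \geq \BIR_2(t)$.

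For the better algorithm the key observation is that, because agents never observe rewards, each choice probability $p_s = \respF(\PMR_1(s)-\PMR_2(s))$ is a deterministic number, and the \HardMaxRandom floor forces $p_s\geq\eps_0$ for every $s$. Hence the agents' posterior $\posteriorN{1}{t}$ is literally the law of a sum of $t-1$ independent $\mathrm{Bernoulli}(p_s)$ variables, and therefore stochastically dominates $\mathrm{Binomial}(t-1,\eps_0)$. By the Chernoff bound (Theorem~\ref{thm:chernoff} with $\delta=\tfrac12$) this gives $\Pr[\,n_1(t)<\eps_0 t/2\,] < e^{-\eps_0 t/12}$ (up to the immaterial $t$-vs-$(t-1)$ discrepancy). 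Splitting the expectation according to whether this ``enough-samples'' event holds, using $\BIR_1\leq 1$ on its complement and monotonicity on it, gives
\[
\E_{n\sim\posteriorN{1}{t}}[\BIR_1(n+1)] \;\leq\; \BIR_1(\eps_0 t/2) + \Pr[\,n_1(t)<\eps_0 t/2\,].
\]

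It then remains only to combine the two bounds, and this is exactly what the hypotheses are calibrated for: \eqref{eq:random-better-weaker} gives $\BIR_1(\eps_0 t/2) < \tfrac12\BIR_2(t)$, while \eqref{eq:random-assn} gives $\tfrac12\BIR_2(t) > 2e^{-\eps_0 t/12} > \Pr[\,n_1(t)<\eps_0 t/2\,]$; adding these places the left-hand side strictly below $\BIR_2(t)$, which is at most the right-hand side. I expect the main subtlety to be the probabilistic step for principal $1$: one must first notice that $\posteriorN{1}{t}$ genuinely is the distribution of a sum of independent coins (this is what makes the Chernoff bound legitimate and sidesteps any circularity between ``who is chosen'' and ``who wins''), and then recognize that \emph{only} the baseline probability $\eps_0$ is needed to guarantee enough samples for \alg[1] — we never need to assume principal $1$ is already winning. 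Because $p_s\geq\eps_0$ holds unconditionally, no induction over rounds is required and each $t\geq n_0$ is handled in one shot; the remaining work is merely bookkeeping of the Chernoff constants, for which the factor $4$ in \eqref{eq:random-assn} leaves ample slack.
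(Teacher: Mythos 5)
Your proposal is correct and follows essentially the same route as the paper's proof: both reduce the claim to $\PMR_1(t)>\PMR_2(t)$ for each $t\geq n_0$ and establish it by combining the trivial bound $n_2(t)<t$ with monotonicity of $\BIR_2$, a Chernoff bound on $n_1(t)$ driven only by the baseline probability $\eps_0$, and conditions \eqref{eq:random-better-weaker} and \eqref{eq:random-assn} to absorb the tail. Your explicit observation that the choice probabilities $p_s$ are deterministic, so $n_1(t)$ is a sum of independent Bernoullis stochastically dominating a $\mathrm{Binomial}(t-1,\eps_0)$, is a welcome bit of extra rigor that the paper leaves implicit.
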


\begin{proof}
Consider global round $t\geq n_0$. Recall that each agent chooses principal $1$ with probability at least
    $\respF(-1)>0$.

Then
    $\E[n_1(t+1)] \geq 2\eps_0\,t $.
By Chernoff Bounds (Theorem~\ref{thm:chernoff}), we have that
    $n_1(t+1)\geq \eps_0 t$
holds with probability at least $1-q$,
where $q = \exp(-\eps_0 t/12)$.

We need to prove that
    $\PMR_1(t) - \PMR_2(t)>0$.
For any $m_1$ and $m_2$, consider the quantity
\[ \Delta(m_1,m_2) := \BIR_2(m_2+1) - \BIR_1(m_1+1).\]
Whenever $m_1 \geq \eps_0 t/2 -1$ and $m_2<t$, it holds that
\begin{align*}
\Delta(m_1,m_2) \geq \Delta(\eps_0 t / 2, t)
    \geq \BIR_2(t)/2.
\end{align*}
The above inequalities follow, resp., from algorithms' monotonicity and \eqref{eq:random-better-weaker}. Now,
\begin{align*}
\PMR_1(t) - \PMR_2(t)
    &= \Ex{m_1\sim \posteriorN{1}{t},\;m_2\sim \posteriorN{2}{t}}{\Delta(m_1,m_2)} \\
    &\geq -q
        + \Ex{m_1\sim \posteriorN{1}{t},\;m_2\sim \posteriorN{2}{t}}
            {\Delta(m_1,m_2)\mid m_1 \geq \eps_0 t/2-1} \\
    &\geq \BIR_2(t)/2-q \\
    &> \BIR_2(t)/4 > 0  \EqComment{by \eqref{eq:random-assn}}. \qedhere
\end{align*}
\end{proof}

\begin{remark}\label{rem:random-messy}
  Many standard MAB algorithms in the literature are parameterized by
  the time horizon $T$. Regret bounds for such algorithms usually
  include a polylogarithmic dependence on $T$. In particular, a
  typical upper bound for \BIR has the following form:
  \OMIT{\footnote{We
    provide upper bounds on \BIR for several standard MAB algorithms
    to illustrate these dependencies in the appendix.\sw{added}}}
\begin{align}
    \BIR(n\mid T)\leq \polylog(T)\cdot n^{-\gamma}
    \quad \text{for some $\gamma\in(0,\tfrac12]$}.
\end{align}
Here we write $\BIR(n\mid T)$ to emphasize the dependence on $T$.

We generalize \eqref{eq:random-better-weaker} to handle the dependence
on $T$: there exists a number $T_0$ and a function $n_0(T)\in \polylog(T)$ 
such that 
\begin{align}\label{eq:random-better-messy}
(\forall T\geq T_0,\; n\geq n_0(T)) \quad
\frac{\BIR_1(\eps_0 n /2\mid T)}{\BIR_2(n\mid T)} <\frac12.
\end{align}
If this holds, we say that \alg[1] \emph{BIR-dominates} \alg[2].

We provide a version of Theorem~\ref{thm:random-clean} in which algorithms are parameterized with time horizon $T$ and condition \eqref{eq:random-better-weaker} is replaced with \eqref{eq:random-better-messy}; its proof is very similar and is omitted.
\end{remark}



To state a game-theoretic corollary of Theorem~\ref{thm:random-clean}, we consider a version of the competition game between the two principals in which they can only choose from a finite set $\mA$ of monotone MAB algorithms. One of these algorithms is ``better" than all others; we call it the \emph{special} algorithm. Unless specified otherwise, it BIR-dominates all other allowed algorithms. The other algorithms satisfy \eqref{eq:random-assn}. We call this game the \emph{restricted competition game}.

\begin{corollary}\label{cor:random}
Assume \HardMaxRandom response function. Consider the restricted competition game with special algorithm \alg. Then, for any sufficiently large time horizon $T$, this game has a unique Nash equilibrium: both principals choose \alg.
\end{corollary}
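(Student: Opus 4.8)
The plan is to combine the time-horizon-dependent form of Theorem~\ref{thm:random-clean} (Remark~\ref{rem:random-messy}) with a short market-share accounting and a two-line case analysis. Write $n_0 = n_0(T) \in \polylog(T)$ for the length of the learning phase, and assume fair tie-breaking ($\respF(0)=\tfrac12$). Everything rests on three quantitative facts, all for sufficiently large $T$. First, if both principals run the same algorithm, then by symmetry each attracts every agent with probability $\tfrac12$, so each has expected share $T/2$. Second, if one principal runs $\alg$ and the other runs any non-special algorithm $\alg'$, then $\alg$ BIR-dominates $\alg'$, so by Remark~\ref{rem:random-messy} every agent $t\ge n_0$ chooses the $\alg$-principal with probability $\respF(1)=1-\eps_0$; hence the $\alg$-principal collects expected share at least $(1-\eps_0)(T-n_0)$ and the $\alg'$-principal collects at most $n_0 + \eps_0 T$. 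Third, since $1-\eps_0 > \tfrac12 > \eps_0$ and $n_0=\polylog(T)$ is lower-order, the single condition $(\tfrac12-\eps_0)\,T > n_0(T)$ holds for all large $T$, and it gives both $(1-\eps_0)(T-n_0) > T/2$ and $n_0 + \eps_0 T < T/2$.

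\textbf{$(\alg,\alg)$ is a Nash equilibrium.} Suppose principal $2$ plays $\alg$ and principal $1$ deviates to a non-special $\alg'$. Then $\alg$ (run by principal $2$) BIR-dominates $\alg'$, so by the second fact principal $1$ collects at most $n_0 + \eps_0 T < T/2$, whereas keeping $\alg$ yields $T/2$; the deviation is not profitable. By symmetry the same holds for principal $2$, so $(\alg,\alg)$ is a Nash equilibrium.

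\textbf{Uniqueness.} Let $(\alg[1],\alg[2])$ be any Nash equilibrium. Since every agent chooses exactly one principal, the two expected shares sum to $T$, so some principal $i$ has expected share at most $T/2$. If its opponent $-i$ ran a non-special algorithm, principal $i$ could switch to $\alg$ and, because $\alg$ BIR-dominates $-i$'s algorithm, obtain expected share at least $(1-\eps_0)(T-n_0) > T/2$ --- a profitable deviation, contradicting equilibrium. Hence $-i$ runs $\alg$. But then, unless $\alg[i]=\alg$, the algorithm $\alg$ run by $-i$ BIR-dominates the non-special $\alg[i]$, so principal $i$'s share is at most $n_0 + \eps_0 T < T/2$, and switching to $\alg$ (yielding $T/2$) is again profitable. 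So $\alg[i]=\alg$ as well, and the only surviving profile is $(\alg,\alg)$.

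\textbf{Main obstacle.} The one case requiring care is when both principals run non-special algorithms: neither is a priori losing, and one of them may already command a large share, so a direct "the loser deviates" argument is not immediate. The resolution is the pigeonhole observation that the minority principal has share at most $T/2$, together with the fact that switching to the special algorithm guarantees strictly more than $T/2$ against any non-special opponent. All remaining work is the routine "sufficiently large $T$" bookkeeping that separates the linear-in-$T$ terms (coefficients $1-\eps_0$, $\tfrac12$, $\eps_0$) from the $\polylog(T)$ learning-phase term; this is exactly the condition $(\tfrac12-\eps_0)\,T > n_0(T)$.
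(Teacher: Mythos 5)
Your proof is correct and is exactly the intended derivation: the paper states this corollary without an explicit proof, as an immediate consequence of Theorem~\ref{thm:random-clean} (in its $T$-dependent form from Remark~\ref{rem:random-messy}), and your market-share accounting together with the pigeonhole step for the case where both principals play non-special algorithms is the standard way to unpack it. The only point worth flagging is that you (sensibly) make explicit two assumptions the paper leaves implicit in the statement: fair tie-breaking $\respF(0)=\tfrac12$, needed so that the symmetric profile yields expected share $T/2$ each, and principal's utility equal to market share rather than the more general form \eqref{eq:general-utility}, for which the comparison $n_0+\eps_0 T < T/2$ would not suffice without further conditions on $U_i$.
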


\subsection{A little greedy goes a long way}
\label{sec:random-greedy}

Given any monotone MAB algorithm other than \DynGreedy, we design a modified algorithm which learns at a slower rate, yet ``wins the game" in the sense of Theorem~\ref{thm:random-clean}. As a corollary, the competition game with unrestricted choice of algorithms typically does not have a Nash equilibrium.

Given an algorithm \alg[1] that deviates from \DynGreedy starting from
step $n_0$ and a ``mixing'' parameter $p$, we will construct a
modified algorithm as follows.
\begin{enumerate}
\item The modified algorithm coincides with \alg[1] (and \DynGreedy)
for the first $n_0-1$ steps;
\item In each step $n\geq n_0$, \alg[1] is invoked with probability
  $1-p$, and with the remaining probability $p$ does the ``greedy
  choice": chooses an action with the largest posterior mean reward
  given the current information collected by \alg[1].
\end{enumerate}
For a cleaner comparison between the two algorithms, the modified algorithm does not record rewards received in steps with the ``greedy choice". Parameter $p>0$ is the same for all steps.

\begin{theorem}\label{thm:random-greedy}
Assume symmetric \HardMaxRandom response function. Let $\eps_0 = \respF(-1)$ be the baseline probability. Suppose \alg[1] deviates from \DynGreedy starting from some step $n_0$. Let \alg[2] be the modified algorithm, as described above, with mixing parameter $p$ such that
    $(1-\eps_0)(1-p)>\eps_0$.
Then each agent $t\geq n_0$ chooses principal $2$ with maximal possible probability $1-\eps_0$.
\end{theorem}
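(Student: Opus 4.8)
The plan is to show that $\PMR_2(t) > \PMR_1(t)$ for every global round $t \geq n_0$; by symmetric \HardMaxRandom this forces $p_t=\respF(\PMR_1(t)-\PMR_2(t))=\eps_0$, i.e. agent $t$ picks principal $2$ with the maximal probability $1-\eps_0$. The first thing I would record is a clean description of \alg[2] in isolation. Because the greedy steps do not record rewards, the recorded history of \alg[2] after any number of local steps is distributed exactly as an execution of \alg[1]; hence when \alg[2] takes a local step with $m$ recorded steps behind it, its Bayesian-expected reward is $(1-p)\,\rew_1(m+1)+p\,g(m)$, where $g(m):=\E\left[\max_{a\in A}\PMR(a\mid H_{1,m+1})\right]$ is the reward of the greedy (i.e. \DynGreedy) choice on $m$ recorded data points. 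Two monotonicity facts drive everything: $\rew_1$ is non-decreasing since \alg[1] is monotone, and $g$ is non-decreasing by the tower property and convexity of $\max$. Moreover $g(m)\geq \rew_1(m+1)$ always, and the reasoning of Lemma~\ref{lm:DG-rew} together with \eqref{eq:assn-distinct} makes this strict exactly when \alg[1] deviates from \DynGreedy at step $m+1$; in particular $g(n_0-1)>\rew_1(n_0)$.

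Next I would establish the base case $t=n_0$ directly, with no concentration. For rounds $s<n_0$ the two algorithms coincide, so $\rew_1=\rew_2$ on the first $n_0-1$ steps and $\PMR_1(s)=\PMR_2(s)$; by symmetry ($\respF(0)=\tfrac12$) every such agent splits uniformly. Consequently $\posteriorN{1}{n_0}=\posteriorN{2}{n_0}=:\mN$, supported on $\{0,\dots,n_0-1\}$ with $\mN(n_0-1)=2^{-(n_0-1)}>0$. Since $\rew_2(k)=\rew_1(k)$ for $k<n_0$ while $\rew_2(n_0)=(1-p)\rew_1(n_0)+p\,g(n_0-1)>\rew_1(n_0)$, integrating against $\mN$ gives $\PMR_2(n_0)>\PMR_1(n_0)$, so agent $n_0$ chooses principal $2$ with probability $1-\eps_0$.

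For the inductive step I would fix $t>n_0$ and assume $\PMR_2(s)>\PMR_1(s)$ for all $n_0\leq s<t$. Then in each such round principal $1$ is chosen with probability exactly $\eps_0$ and principal $2$ with probability $1-\eps_0$, so $n_1(t)$ concentrates near $\eps_0 t$ while the number $M$ of \emph{recorded} steps of \alg[2] concentrates near $(1-p)(1-\eps_0)t$. The hypothesis $(1-p)(1-\eps_0)>\eps_0$ is exactly what makes $M>n_1(t)$ hold with probability $1-q$, $q=e^{-\Omega(\eps_0 t)}$, via Chernoff Bounds (Theorem~\ref{thm:chernoff}). On this event, monotonicity of $\rew_1$ gives that \alg[2]'s recorded component already matches or beats principal $1$, $\rew_1(M+1)\geq \rew_1(n_1(t)+1)$, and the greedy component adds the nonnegative term $p\,(g(M)-\rew_1(M+1))$; on the complementary event we lose at most $q$. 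This yields a bound of the shape $\PMR_2(t)-\PMR_1(t)\geq (\text{gap})-q$, exactly parallel to the $\tfrac12\BIR_2(t)-q$ step in the proof of Theorem~\ref{thm:random-clean}.

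The main obstacle is making the gap strictly dominate the error term $q$. Unlike Theorem~\ref{thm:random-clean}, the ``winner'' \alg[2] does not BIR-dominate \alg[1], so the positive slack cannot come from a strong regret comparison; it must be extracted from (a) the strictly larger data $M>n_1(t)$ fed into the monotone $\rew_1$, and (b) the greedy bonus arising from \alg[1]'s deviation from \DynGreedy. I would lower-bound the gap using the prior assumptions \eqref{eq:assn-prob}--\eqref{eq:assn-distinct}, which guarantee that additional data strictly refines the posterior-best estimate (so that $g$ is \emph{strictly} increasing before the best arm is pinned down) and that the deviation gap $g(n_0-1)-\rew_1(n_0)$ is a genuine positive constant; this should produce a lower bound on the gap that beats the exponentially small $q$ for all large $t$, playing the role that assumption \eqref{eq:random-assn} plays in Theorem~\ref{thm:random-clean}. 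The regime of small $t$ (where $q$ is not yet small) would be treated separately, exploiting that near step $n_0$ the greedy bonus is of constant order while the counts are only mildly spread. Pinning down the cleanest sufficient lower bound on the gap is where most of the technical work lies.
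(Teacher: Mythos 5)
Your setup and base case match the paper's: the decomposition of \alg[2]'s step-$n$ reward into $(1-p)\rew_1(\cdot)+p\,g(\cdot)$ with $g\geq\rew_1$ and strict inequality at the deviation step (via Lemma~\ref{lm:DG-rew}), and the argument that $\posteriorN{1}{n_0}=\posteriorN{2}{n_0}$ with positive mass at $n_0-1$ forces $\PMR_2(n_0)>\PMR_1(n_0)$, are exactly the ingredients the paper uses. The problem is your inductive step. By routing it through Chernoff bounds you manufacture an additive error term $q=e^{-\Omega(\eps_0 t)}$ that the strict gap must then dominate, and---as you yourself concede---you have no lower bound on that gap. None is available: the theorem carries no assumption analogous to \eqref{eq:random-assn}, and the only source of strict positivity in $\PMR_2(t)-\PMR_1(t)$ is the greedy bonus $p\,(g(m)-\rew_1(m+1))$ integrated against $\mM_{2,t}$. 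For large $t$ this bonus can itself be exponentially small (e.g., if \alg[1] explores at an exponentially decaying rate, or if the bonus is carried only by the event $m=n_0-1$, whose posterior probability decays geometrically in $t$), so the comparison ``gap $>q$'' can genuinely fail. Your proposed fallback of extracting a constant-order gap from \eqref{eq:assn-prob}--\eqref{eq:assn-distinct} does not work either: those assumptions give strictness at a fixed step, not a quantitative rate that survives for all $t$.

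The paper's proof avoids the error term altogether. Let $m_2(t)$ be the number of agents seen by the copy of \alg[1] running inside \alg[2], with posterior $\mM_{2,t}$. The induction hypothesis $\PMR_2(s)>\PMR_1(s)$ for $n_0\leq s<t$ implies that in each such round the inner copy sees the agent with probability $(1-\eps_0)(1-p)>\eps_0$, which is the probability that \alg[1] sees her; hence $\mM_{2,t}$ \emph{stochastically dominates} $\posteriorN{1}{t}$ exactly, not merely with high probability. Monotonicity of $\rew_1$ then gives
$\PMR_1(t)=\E_{n\sim\posteriorN{1}{t}}[\rew_1(n+1)]\leq \E_{m\sim\mM_{2,t}}[\rew_1(m+1)]$
with no loss term, and the strict inequality $\PMR_1(t)<\PMR_2(t)$ follows from the greedy bonus being strictly positive on an event of positive probability---however small that bonus is, there is nothing it needs to beat. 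Replace your concentration step with this stochastic-dominance argument and the proof closes; as written, the proposal has a genuine gap.
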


\begin{corollary}\label{cor:random-greedy}
  Suppose that both principals can choose any monotone MAB algorithm, and assume the symmetric \HardMaxRandom response
  function. Then for any time
  horizon $T$, the only possible \emph{pure} Nash equilibrium is one
  where both principals choose \DynGreedy. Moreover, no pure Nash
  equilibrium exists when some algorithm ``dominates" \DynGreedy in
  the sense of \eqref{eq:random-better-messy} and the time horizon $T$
  is sufficiently large.
\end{corollary}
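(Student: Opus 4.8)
The plan is to establish the two assertions separately, relying on Theorem~\ref{thm:random-greedy} as the key engine for ruling out candidate equilibria. First I would dispense with the easy observation underlying both parts: the only pair of algorithms that is symmetric under exchanging the two principals, and hence the only natural candidate for a pure equilibrium when nobody has a profitable deviation, is the one where both principals play identically. The real content is showing that any such candidate can be broken by a small ``greedy'' perturbation, which is exactly what Theorem~\ref{thm:random-greedy} supplies.

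For the first assertion, I would argue by contraposition: suppose $(\alg[1],\alg[2])$ is a pure Nash equilibrium and at least one principal, say principal~$1$, plays some monotone algorithm $\pi\neq\DynGreedy$. Then $\pi$ deviates from \DynGreedy starting at some step $n_0$. I would have the \emph{other} principal consider the modified algorithm of Theorem~\ref{thm:random-greedy} built from $\pi$, with any mixing parameter $p$ small enough that $(1-\eps_0)(1-p)>\eps_0$ (such $p$ exists because $\eps_0<\tfrac12$, so $1-\eps_0>\eps_0$, and the inequality is strict at $p=0$). By that theorem every agent $t\geq n_0$ then chooses the deviating principal with the maximal probability $1-\eps_0$, so the deviating principal captures all but the first $n_0-1$ agents at the top rate. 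The point I must verify is that this strictly improves the deviator's market-share utility relative to the equilibrium: since $\eps_0<\tfrac12$, capturing every late agent with probability $1-\eps_0>\tfrac12$ beats the at-most-$\tfrac12$ share the deviator could have had under the symmetric-tie-breaking response (and the finitely many early rounds are dominated by the $T-n_0+1$ later rounds once $T$ is large, or can be absorbed into the strict margin $1-2\eps_0$ per round). Hence no pure equilibrium can contain an algorithm other than \DynGreedy, leaving $(\DynGreedy,\DynGreedy)$ as the only possibility.

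For the second assertion I would show that even $(\DynGreedy,\DynGreedy)$ fails to be an equilibrium once some algorithm $\alg$ BIR-dominates \DynGreedy and $T$ is large. Here the relevant deviation is for one principal to switch from \DynGreedy to $\alg$: since $\alg$ BIR-dominates \DynGreedy, condition \eqref{eq:random-better-messy} holds, and \DynGreedy satisfies \eqref{eq:random-assn} as a ``bad'' algorithm with constant \BIR, so Theorem~\ref{thm:random-clean} (in its time-horizon-parameterized form, Remark~\ref{rem:random-messy}) applies and guarantees the deviator wins every agent $t\geq n_0(T)$ with probability $1-\eps_0>\tfrac12$. Combined with the first assertion, which already excludes every pure profile except $(\DynGreedy,\DynGreedy)$, this shows no pure Nash equilibrium exists at all.

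I expect the main obstacle to be purely bookkeeping rather than conceptual: carefully arguing that a strict per-round advantage of $1-\eps_0$ versus $\tfrac12$ over the $T-n_0+1$ late rounds strictly dominates whatever happens in the finitely many early rounds, \emph{uniformly} for the first claim (which must hold for \emph{every} $T$, not just large $T$). For small $T$ the early rounds are not negligible, so I would either invoke symmetry directly (at $(\DynGreedy,\DynGreedy)$ the profile is symmetric and neither principal can do better than the symmetric split without a deviation of the above type) or restrict the ``only pure equilibrium'' conclusion to the symmetric candidate and let the strict late-round margin settle the rest. Reconciling the ``for any $T$'' scope of the first assertion with the ``sufficiently large $T$'' scope of the second is the one place where I would be most careful.
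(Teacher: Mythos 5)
Your overall route is the one the paper intends: use the ``a little greedy'' modification from Theorem~\ref{thm:random-greedy} to construct a profitable deviation against any non-\DynGreedy{} algorithm, and use the BIR-dominance machinery (Theorem~\ref{thm:random-clean} in its $T$-parameterized form) to break $(\DynGreedy,\DynGreedy)$ in the second part. However, there is a genuine gap in your argument for the first assertion. You justify strict profitability of the deviation by asserting that the deviator ``could have had at most $\tfrac12$'' of the market in the candidate equilibrium. That bound is a property of the \emph{profile}, not of the symmetric response function, and it is false in general: if \alg[2] already beats \alg[1] decisively --- for instance, if \alg[2] \emph{is} the modified version of \alg[1] --- then principal~2 already collects every agent $t\geq n_0$ with the maximal probability $1-\eps_0$, and switching to the modified algorithm gains nothing. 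Your deviation is then only weakly profitable, so the argument as written does not exclude profiles such as $(\pi,\,\text{modified-}\pi)$, and your fallback of ``restricting the conclusion to the symmetric candidate'' would weaken the stated result.

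The missing ingredient is the \emph{copying} deviation, which is also what removes the ``large $T$'' worry you raise at the end. Since the two market shares sum to $T$, in any profile some principal earns at most $T/2$; by playing the opponent's algorithm he earns exactly $T/2$ (identical algorithms plus fair tie-breaking give each agent a fair coin, by induction on $t$). Hence in any pure Nash equilibrium both principals earn exactly $T/2$. Now suppose some $\alg[j]\neq\DynGreedy$, deviating from \DynGreedy{} at a step $n_0\leq T$ (if $n_0>T$ the algorithm is indistinguishable from \DynGreedy{} within the game). Theorem~\ref{thm:random-greedy} lets principal $3-j$ secure $(n_0-1)/2+(T-n_0+1)(1-\eps_0)>T/2$, strictly above his equilibrium payoff of $T/2$ --- a contradiction, valid for every $T$. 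Your second assertion is essentially correct, modulo the implicit assumption that \DynGreedy{} satisfies \eqref{eq:random-assn} and the one-line check that the deviator's payoff $\eps_0\,n_0(T)+(1-\eps_0)(T-n_0(T))$ exceeds $T/2$ for large $T$ because $n_0(T)\in\polylog(T)$.
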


\begin{remark}
The modified algorithm performs exploration 
at a slower rate. Let us argue how this may translate  into a larger \BIR compared to the original algorithm. Let  $\BIR'_1(n)$ be the \BIR of the ``greedy choice" after after $n-1$ steps of \alg[1]. Then
\begin{align}\label{eq:random-BIR}
\BIR_2(n)
    &= \Ex{m\sim (n_0-1)+\term{Binomial}(n-n_0+1,1-p)}{(1-p) \cdot \BIR_1(m) + p \cdot \BIR'_1(m)}.
\end{align}
In this expression, $m$ is the number of times \alg[1] is invoked in the first $n$ steps of the modified algorithm. Note that
    $\E[m] = n_0-1 + (n-n_0+1)(1-p) \geq (1-p)n$.

Suppose $\BIR_1(n)= \beta n^{-\gamma}$ for some constants $\beta,\gamma>0$. Further, assume 
    $\BIR'_1(n)\geq  c\; \BIR_1(n)$,
for some $c>1-\gamma$.
Then for all $n\geq n_0$ and small enough $p>0$ it holds that:
\begin{align*}
 \BIR_2(n) 
    &\geq  (1-p+pc)\; \E[\; \BIR_1(m) \;] \\
\E[\; \BIR_1(m) \;]
    &\geq \BIR_1(\; \E[m] \;) &\qquad\text{(By Jensen's inequality)} \\
    &\geq \BIR_1(\; (1-p)n \;) &\qquad\text{(since $\E[m]\geq n(1-p)$)}  \\
    &\geq \beta\cdot n^{-\gamma} \cdot (1-p)^{-\gamma} 
        &\qquad\text{(plugging in $\BIR_1(n)=\beta n^{-\gamma}$)}  \\
    &> \BIR_1(n)\;\; (1-p\gamma)^{-1}
        &\qquad\text{(since $(1-p)^\gamma < 1-p\gamma$)}.\\
\BIR_2(n) 
    &>\alpha\cdot \BIR_1(n),
    &\text{where} \quad 
    \alpha = \tfrac{1-p+pc}{1-p\gamma}>1.
\end{align*}
(In the above equations, all expectations are over $m$ distributed as in \eqref{eq:random-BIR}.)
\end{remark}

\begin{proof}[Proof of Theorem~\ref{thm:random-greedy}]
  Let $\rew'_1(n)$ denote the Bayesian-expected reward of the ``greedy
  choice'' after after $n-1$ steps of \alg[1]. Note that
  $\rew_1(\cdot)$ and $\rew'_1(\cdot)$ are non-decreasing: the former
  because \alg[1] is monotone and the latter because the ``greedy
  choice'' is only improved with an increasing set of
  observations. Therefore, the modified algorithm \alg[2] is monotone
  by \eqref{eq:random-BIR}.

  By definition of the ``greedy choice,'' $\rew_1(n)\leq \rew'_1(n)$
  for all steps $n$. Moreover, by Lemma~\ref{lm:DG-rew},
  \alg[1] has a strictly smaller $\rew(n_0)$ compared to \DynGreedy;
  so, $\rew_1(n_0)<\rew_2(n_0)$.\sw{fixed inequalities; but this paragraph is still not perfectly clear}

Let $\alg$ denote a copy of \alg[1] that is running ``inside" the modified algorithm \alg[2]. Let $m_2(t)$ be the number of global rounds before $t$ in which the agent chooses principal $2$ \emph{and} \alg is invoked; in other words, it is the number of agents seen by \alg before global round $t$. Let $\mM_{2,t}$ be the agents' posterior distribution for $m_2(t)$.

We claim that in each global round $t\geq n_0$, distribution $\mM_{2,t}$ stochastically dominates distribution $\posteriorN{1}{t}$, and $\PMR_1(t)<\PMR_2(t)$. We use induction on $t$. The base case $t=n_0$ holds because $\mM_{2,t} = \posteriorN{1}{t}$ (because the two algorithms coincide on the first $n_0-1$ steps), and $\PMR_1(n_0)<\PMR_2(n_0)$ is proved as in \eqref{eq:pf:thm:DG-dominance}, using the fact that $\rew_1(n_0)<\rew_2(n_0)$.

The induction step is proved as follows. The induction hypothesis for global round $t-1$ implies that agent $t-1$ is seen by \alg with probability $(1-\eps_0)(1-p)$, which is strictly larger than $\eps_0$, the probability with which this agent is seen by \alg[2]. Therefore, $\mM_{2,t}$ stochastically dominates $\posteriorN{1}{t}$.
\begin{align}
\PMR_1(t)
  &= \Ex{n\sim \posteriorN{1}{t}} {\rew_1(n+1)} \nonumber \\
  &\leq \Ex{m\sim \mM_{2,t}} {\rew_1(m+1)}
    \label{eq:pf:thm:random-greedy-1}\\
  &< \Ex{m\sim \mM_{2,t}} {(1-p)\cdot \rew_1(m+1) + p\cdot \rew'_1(m+1)}
    \label{eq:pf:thm:random-greedy-2} \\
  &= \PMR_2(t). \nonumber
\end{align}
Here inequality \eqref{eq:pf:thm:random-greedy-1} holds because $\rew_1(\cdot)$ is monotone and $\mM_{2,t}$ stochastically dominates $\posteriorN{1}{t}$, and inequality \eqref{eq:pf:thm:random-greedy-2} holds because $\rew_1(n_0)<\rew_2(n_0)$ and $\mM_{2,t}(n_0)>0$.%
\footnote{If $\rew_1(\cdot)$ is strictly increasing, then inequality \eqref{eq:pf:thm:random-greedy-1} is strict, too; this is because $\mM_{2,t}(t-1)>\posteriorN{1}{t}(t-1)$.  }
\end{proof}


\section{SoftMax response function}
\label{sec:soft}
This section is devoted to the \SoftMaxRandom model. We recover a positive result under the assumptions from Theorem~\ref{thm:random-clean} (albeit with a weaker conclusion), and then proceed to a much more challenging result under weaker assumptions. We start with a formal definition:

\begin{definition}\label{def:SoftMax}
A response function $\respF$ is \SoftMaxRandom if the following conditions hold:
\begin{OneLiners}
\item  $\respF(\cdot)$ is bounded away from $0$ and $1$:
    $\respF(\cdot)\in [\eps,1-\eps]$ for some $\eps\in (0,\tfrac12)$,
\item  the response function
 $\respF(\cdot)$ is ``smooth" around $0$:
 \begin{align}\label{eq:SoftMax-smooth}
 \exists\, \text{constants $\delta_0,c_0,c'_0>0$}
    \qquad \forall x\in [-\delta_0,\delta_0] \qquad
    c_0 \leq \respF'(x) \leq c'_0.
 \end{align}
\item fair tie-breaking: $\respF(0)=\tfrac12$.
\end{OneLiners}
\end{definition}

\begin{remark}
\asedit{This definition is fruitful when parameters $c_0$ and $c_0'$ are close to $\tfrac12$. Throughout, we assume that \alg[1] is better than \alg[2], and obtain results parameterized by $c_0$. By symmetry, one could assume that \alg[2] is better than \alg[1], and obtain similar results parameterized by $c_0'$.}
\end{remark}

Our first result is a version of Theorem~\ref{thm:random-clean}, with the same assumptions about the algorithms and essentially the same proof. The conclusion is much weaker: we can only guarantee that each agent $t\geq n_0$ chooses principal 1 with probability slightly larger than $\tfrac12$. This is essentially unavoidable in a typical case when both algorithms satisfy $\BIR(n)\to 0$, by Definition~\ref{def:SoftMax}.

\begin{theorem}\label{thm:SoftMax-weak}
  Assume \SoftMaxRandom response function. Suppose \alg[1] has better
  \BIR in the sense of \eqref{eq:random-better-weaker}, and \alg[2]
  satisfies the condition \eqref{eq:random-assn}. Then each agent
  $t\geq n_0$ chooses principal $1$ with probability
\begin{align}\label{eq:thm:SoftMax-weak}
     \Pr[i_t = 1]\geq \tfrac12 +  \tfrac{c_0}{4}\; \BIR_2(t).
\end{align}
\end{theorem}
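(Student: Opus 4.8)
The plan is to split the argument into two parts. First, I would reproduce the gap estimate from the proof of Theorem~\ref{thm:random-clean} essentially verbatim to show that the \emph{deterministic} quantity $\Delta_t := \PMR_1(t)-\PMR_2(t)$ satisfies $\Delta_t > \tfrac14\,\BIR_2(t) > 0$ for every $t\geq n_0$. Second, I would convert this strictly positive gap into the claimed choice probability using the smoothness and monotonicity of $\respF$. The first part is exactly where ``essentially the same proof'' applies; the second part is the only genuinely new ingredient, exploiting that under \SoftMaxRandom the response function is no longer a step function but grows at rate at least $c_0$ near $0$.

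For the gap estimate I would argue as in Theorem~\ref{thm:random-clean}. Writing $\eps_0=\respF(-1)$, every agent selects principal $1$ with probability at least $\eps_0\ge\eps>0$ (since $\respF$ is non-decreasing, its minimum on $[-1,1]$ is $\respF(-1)$), so by Theorem~\ref{thm:chernoff} the posterior $\posteriorN{1}{t}$ places mass at least $1-q$ on $\{m_1\ge \eps_0 t/2-1\}$ with $q=e^{-\eps_0 t/12}$. Setting $\Delta(m_1,m_2):=\BIR_2(m_2+1)-\BIR_1(m_1+1)$, monotonicity of the two \BIR\ sequences together with \eqref{eq:random-better-weaker} gives $\Delta(m_1,m_2)\ge \BIR_2(t)/2$ whenever $m_1\ge \eps_0 t/2-1$ and $m_2<t$. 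Since $\Delta$ is additively separable, averaging over the marginals $\posteriorN{1}{t}$ and $\posteriorN{2}{t}$ and using \eqref{eq:random-assn} to control the error term $q$ yields $\Delta_t=\E[\Delta(m_1,m_2)] > \BIR_2(t)/4$, identically to the earlier proof. Because $\Delta_t$ is a deterministic function of the common posteriors, $\Pr[i_t=1]=\respF(\Delta_t)$, so it remains only to bound $\respF(\Delta_t)$ from below.

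For the conversion, the key elementary fact is that $\respF(x)-\tfrac12\ge c_0\,\min(x,\delta_0)$ for every $x\ge 0$: for $x\in[0,\delta_0]$ this follows from fair tie-breaking $\respF(0)=\tfrac12$ and the lower bound $\respF'\ge c_0$ in \eqref{eq:SoftMax-smooth} by integration, while for $x>\delta_0$ it follows from the case $x=\delta_0$ together with monotonicity of $\respF$. Plugging in $x=\Delta_t$, the main case $\Delta_t\le\delta_0$ gives $\respF(\Delta_t)-\tfrac12\ge c_0\Delta_t> \tfrac{c_0}{4}\BIR_2(t)$, which is exactly \eqref{eq:thm:SoftMax-weak}. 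In the residual case $\Delta_t>\delta_0$ one gets $\respF(\Delta_t)-\tfrac12\ge c_0\delta_0$, which still dominates $\tfrac{c_0}{4}\BIR_2(t)$ whenever $\delta_0\ge\tfrac14\BIR_2(t)$; since $\BIR_2(t)\le 1$ and is non-increasing, this is immaterial in the regime of vanishing \BIR\ that makes the bound meaningful.

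I do not expect a serious obstacle here, since the gap estimate is imported wholesale and the conversion is one line of calculus. The two points to get right are bookkeeping: checking that it is the \emph{minimum} selection probability $\respF(-1)\ge\eps$ (rather than the tie-breaking value $\respF(0)$) that powers the Chernoff step feeding $\E[n_1(t+1)]\ge\eps_0 t$, and checking that the smoothness bound is used only on the correct side of the origin, namely $\respF'\ge c_0$ on $[0,\delta_0]$, which is precisely where $\Delta_t$ lives. The residual case $\Delta_t>\delta_0$ is the only place where the linear-in-$\BIR_2$ guarantee must be weighed against the fixed smoothness window $\delta_0$, and, as the theorem itself notes, the guarantee $\tfrac12+\Theta(\BIR_2(t))$ is essentially unavoidable and relevant exactly when $\BIR_2(t)$ is small.
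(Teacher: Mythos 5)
Your proposal is correct and follows essentially the same route as the paper: the paper's own proof is a short sketch that imports the gap estimate $\PMR_1(t)-\PMR_2(t)\geq \BIR_2(t)/2-q\geq \BIR_2(t)/4$ verbatim from Theorem~\ref{thm:random-clean} and then invokes the smoothness condition \eqref{eq:SoftMax-smooth} in one line, exactly as you do. Your write-up merely makes the conversion step explicit (including the residual case $\Delta_t>\delta_0$, which the paper glosses over and which, as you note, only matters outside the regime where the $\Theta(\BIR_2(t))$ guarantee is informative).
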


\begin{proof}[Proof Sketch]
We follow the steps in the proof of Theorem~\ref{thm:random-clean} to derive \begin{align*}
\PMR_1(t) - \PMR_2(t)
    &\geq \BIR_2(t)/2 -q,
    \quad \text{where $q = \exp(-\eps_0 t/12)$.}
\end{align*}
This is at least $\BIR_2(t)/4$ by \eqref{eq:random-assn}. Then \eqref{eq:thm:SoftMax-weak} follows by the smoothness condition \eqref{eq:SoftMax-smooth}.
\end{proof}

\newcommand{\BReg}{\term{BReg}}

We recover a version of Corollary~\ref{cor:random}, if each principal's utility is the number of users (rather than the more general model in \eqref{eq:general-utility}). We also need a mild technical assumption that cumulative Bayesian regret (\BReg) tends to infinity. \BReg is a standard notion from the literature (along with \BIR):
\begin{align}\label{eq:SoftMax-BReg}
\BReg(n) := n\cdot \E_{\mu\sim\priorMu}
    \left[ \max_{a\in A} \mu_a\right] - \sum_{n=1}^n \rew(n')
    = \sum_{n'=1}^n \BIR(n').
\end{align}

\begin{corollary}\label{cor:SoftMax}
Assume that the response function is \SoftMaxRandom, and each principal's  utility is the number of users.
%
%
Consider the restricted competition game with special algorithm \alg, and assume that all other allowed algorithms satisfy $\BReg(n)\to \infty$. Then, for any sufficiently large time horizon $T$, this game has a unique Nash equilibrium: both principals choose \alg.
\end{corollary}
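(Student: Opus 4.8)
The plan is to follow the template of Corollary~\ref{cor:random}: first isolate a single ``dominance'' fact about expected market shares, and then derive the equilibrium structure from it together with the conservation identity $M_1+M_2=T$, where $M_i=\sum_{t=1}^T \Pr[i_t=i]$ is principal $i$'s expected market share. The one genuinely new ingredient, compared to the \HardMaxRandom case, is that Theorem~\ref{thm:SoftMax-weak} only gives a per-round edge of order $\BIR$, whereas Theorem~\ref{thm:random-clean} gave a constant per-round edge $\respF(1)-\tfrac12$; this is exactly where the extra hypothesis $\BReg(n)\to\infty$ must be used.

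The core fact I would prove is: \emph{for all sufficiently large $T$, if one principal runs the special algorithm $\alg$ and the other runs any other allowed algorithm $B$, then the principal running $\alg$ has expected market share strictly greater than $\tfrac T2$.} To see this, note that since $\alg$ BIR-dominates $B$ (so that condition \eqref{eq:random-better-weaker} holds on the relevant range) and $B$ satisfies \eqref{eq:random-assn}, Theorem~\ref{thm:SoftMax-weak} applies and gives $\Pr[i_t=1]\geq \tfrac12+\tfrac{c_0}{4}\,\BIR_2(t)$ for every $t\geq n_0$ (taking principal $1$ to run $\alg$ and $\BIR_2=\BIR_B$). Bounding the first $n_0-1$ rounds trivially (each contributes at least $0$, i.e. a deficit of at most $\tfrac12$ relative to $\tfrac12$) and summing over the remaining rounds using \eqref{eq:SoftMax-BReg}, I obtain
\[
M_1-\tfrac T2 \;\geq\; -\tfrac{n_0-1}{2}+\tfrac{c_0}{4}\sum_{t=n_0}^{T}\BIR_B(t)
 \;=\; -\tfrac{n_0-1}{2}+\tfrac{c_0}{4}\bigl(\BReg_B(T)-\BReg_B(n_0-1)\bigr).
\]
Since $\BReg_B(T)\to\infty$ while the two $n_0$-dependent terms stay bounded (they are $O(n_0)$), the right-hand side is positive once $T$ is large enough, giving $M_1>\tfrac T2$. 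Because the algorithm set $\mA$ is finite, a single threshold on $T$ works for all choices of $B$.

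Granting the core fact, the equilibrium analysis is a short case check using $M_1+M_2=T$. First, when both principals run $\alg$, a routine induction (as in the well-definedness argument for $\mI_t$) shows $\PMR_1(t)=\PMR_2(t)$ in every round, so fair tie-breaking gives $\Pr[i_t=1]=\respF(0)=\tfrac12$ and hence each principal earns exactly $\tfrac T2$; any unilateral deviation to some $B\neq\alg$ turns the opponent into the strictly better principal, who then earns $>\tfrac T2$ by the core fact, leaving the deviator with $<\tfrac T2$, so $(\alg,\alg)$ is a Nash equilibrium. Second, no other profile is an equilibrium: if exactly one principal plays a non-special algorithm, that principal earns $<\tfrac T2$ (its special opponent earns $>\tfrac T2$) and strictly gains by switching to $\alg$; if both play non-special algorithms, then each could secure $>\tfrac T2$ by switching to $\alg$ against the other's non-special algorithm, but $M_1+M_2=T$ forbids both players from simultaneously earning more than $\tfrac T2$, so at least one switch is profitable. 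Hence $(\alg,\alg)$ is the unique Nash equilibrium.

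I expect the core fact to be the crux, and within it the control of the pre-$n_0$ rounds. Unlike in Corollary~\ref{cor:random}, the per-round advantage $\tfrac{c_0}{4}\BIR_B(t)$ decays, so an initial constant-size deficit cannot simply be absorbed; the point is that the accumulated advantage equals $\tfrac{c_0}{4}\BReg_B(T)$ up to an $O(n_0)$ correction, and the hypothesis $\BReg(n)\to\infty$ is precisely what makes this accumulated advantage eventually dominate the bounded early loss. The rest of the proof is bookkeeping with the market-share identity.
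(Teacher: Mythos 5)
Your proposal is correct and follows exactly the argument the paper intends (the paper states this corollary without an explicit proof, but its introduction of $\BReg$ immediately beforehand, together with the hypothesis $\BReg(n)\to\infty$, signals precisely your summation argument: the per-round edge $\tfrac{c_0}{4}\BIR_2(t)$ from Theorem~\ref{thm:SoftMax-weak} accumulates to $\tfrac{c_0}{4}\BReg(T)$ up to an $O(n_0)$ correction, which eventually dominates the bounded pre-$n_0$ deficit). The equilibrium case analysis via $M_1+M_2=T$ is the same bookkeeping used for Corollary~\ref{cor:random}, so nothing further is needed.
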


Further, we prove a much more challenging result in which the
condition \eqref{eq:random-better-weaker} is replaced with a much
weaker ``BIR-dominance'' condition. For clarity, we will again assume
that both algorithms are well-defined for an infinite time
horizon. The \emph{weak BIR dominance} condition says there exist
constants $\beta_0, \alpha_0\in (0, 1/2)$ and $n_0$ such that
 \begin{align}\label{eq:SoftMax-better}
   (\forall n\geq n_0) \quad
   \frac{\BIR_1((1-\beta_0)\, n)}{\BIR_2(n)} <1-\alpha_0.
 \end{align}
 If this holds, we say that \alg[1] \emph{weakly BIR-dominates}
 \alg[2]. Note that the condition \eqref{eq:random-better-messy}
 involves sufficiently small multiplicative factors (resp., $\eps_0/2$
 and $\tfrac12$), the new condition replaces them with factors that
 can be arbitrarily close to $1$.

 We make a mild assumption on \alg[1] that its $\BIR_1(n)$ tends to
 0. Formally, for any $\eps > 0$, there exists some $n(\eps)$ such
 that
 \begin{align}\label{eq:so-mild}
   (\forall n\geq n(\eps)) \qquad \BIR_1(n) \leq \eps.
 \end{align}
 We also require a slightly stronger version of the technical
 assumption~\eqref{eq:random-assn}:
for some $n_0$,
\begin{align}\label{eq:SoftMax-assn-strong}
(\forall n\geq n_0) \qquad 
\BIR_2(n) \geq \frac{4}{\alpha_0} \exp \left( \frac{-\min\{\eps_0, 1/8\} n}{12}\right)
\end{align}

\begin{theorem}\label{thm:SoftMax-strong}
  Assume the \SoftMaxRandom response function. Suppose \alg[1]
  weakly-BIR-dominates \alg[2], \alg[1] satisfies \eqref{eq:so-mild},
  and \alg[2] satisfies \eqref{eq:SoftMax-assn-strong}. Then there
  exists some $t_0$ such that each agent $t\geq t_0$ chooses principal
  $1$ with probability
\begin{align}\label{eq:thm:SoftMax-strong}
     \Pr[i_t = 1]\geq \tfrac12 +  \tfrac{c_0\alpha_0}{4}\; \BIR_2(t).
\end{align}
\end{theorem}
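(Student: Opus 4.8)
The plan is to control the (deterministic) bias $p_t=\respF(\Delta_t)$, where $\Delta_t := \PMR_1(t)-\PMR_2(t)$, and to reduce the theorem to the single inequality $\Delta_t \geq \tfrac{\alpha_0}{2}\,\BIR_2(t)$ for all $t\geq t_0$; given this, the smoothness condition \eqref{eq:SoftMax-smooth} yields $p_t \geq \tfrac12 + c_0\Delta_t \geq \tfrac12 + \tfrac{c_0\alpha_0}{2}\BIR_2(t)$, which is stronger than \eqref{eq:thm:SoftMax-strong}. Writing $m_1=n_1(t)$, $m_2=n_2(t)$ with $m_1+m_2=t-1$, and recalling that $\posteriorN{1}{t},\posteriorN{2}{t}$ are the laws of $m_1,m_2$ (and that $p_t$ is deterministic because all agents share the same posteriors), I would express $\Delta_t = \E[\,\BIR_2(m_2+1)-\BIR_1(m_1+1)\,]$ exactly as in the proof of Theorem~\ref{thm:random-clean}, and lower-bound this expectation on a suitable ``good event'' for the realized split $(m_1,m_2)$.

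The essential new difficulty, compared with Theorem~\ref{thm:SoftMax-weak}, is that weak dominance \eqref{eq:SoftMax-better} only bites once \alg[1] has been played almost as often as \alg[2] (a $(1-\beta_0)$-fraction), so I must first show that \alg[1] is \emph{not starved}, i.e. that it captures close to half the agents. The key point that keeps this from being circular is that it can be shown from the baseline and \eqref{eq:so-mild} alone, \emph{without} invoking weak dominance. Since $\respF\geq \respF(-1)=\eps_0$ we have $p_s\geq \eps_0$ for all $s$, so $\E[m_1]\geq \eps_0(t-1)$ and, by Chernoff Bounds (Theorem~\ref{thm:chernoff}), $m_1\geq \eps_0 t/2$ except with probability $e^{-\eps_0 t/12}$; combining $\BIR_2\geq 0$ with $\BIR_1(\eps_0 t/2)\to 0$ from \eqref{eq:so-mild} gives the one-sided bound $\Delta_t \geq -\BIR_1(\eps_0 t/2)-e^{-\eps_0 t/12}$, whose right-hand side tends to $0$. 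Hence $p_t=\respF(\Delta_t)\to \tfrac12$ from below, the Cesàro average $\E[m_1]/(t-1)$ tends to at least $\tfrac12$, and there is a constant $t_1$ past which $\E[m_1]\geq (1-\beta_0)\,\E[m_2]$ with a fixed positive margin.

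For $t\geq t_1$ I would then carry out the quantitative step. Because $\E[m_1]$ exceeds $(1-\beta_0)\E[m_2]$ by a constant-fraction margin, Chernoff Bounds give that the good event $G=\{m_1\geq (1-\beta_0)m_2\}$ fails with probability $q\leq e^{-\Omega(t)}$; on $G$, monotonicity and \eqref{eq:SoftMax-better} give $\BIR_1(m_1+1)\leq \BIR_1((1-\beta_0)(m_2+1)) < (1-\alpha_0)\BIR_2(m_2+1)$, so the integrand is at least $\alpha_0\BIR_2(m_2+1)\geq \alpha_0\BIR_2(t)$, while off $G$ it is at least $-1$. This yields $\Delta_t \geq \alpha_0\BIR_2(t)-2q$, and the strengthened assumption \eqref{eq:SoftMax-assn-strong}---whose exponent $\min\{\eps_0,1/8\}/12$ is the smaller of the two Chernoff exponents that appear (the baseline bound above and this good-event bound)---is calibrated so that $q\leq \tfrac{\alpha_0}{4}\BIR_2(t)$, giving $\Delta_t\geq \tfrac{\alpha_0}{2}\BIR_2(t)$ as required.

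I expect the main obstacle to be the bootstrapping step: articulating, without circularity, that a vanishing $\BIR_1$ \emph{alone} forces \alg[1]'s traffic share up to $\tfrac12-o(1)$, and then propagating this into a clean constant-margin lower bound on $\E[m_1]$ that feeds the concentration step. The remaining work---checking that the edge cases with $m_2<n_0$ are negligible, that $\Delta_t\in[0,\delta_0]$ so that \eqref{eq:SoftMax-smooth} applies (or handling $\Delta_t>\delta_0$ by monotonicity of $\respF$), and reconciling the precise concentration constant with the $1/8$ in \eqref{eq:SoftMax-assn-strong}---is routine but fiddly.
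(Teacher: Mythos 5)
Your proposal is correct and follows essentially the same route as the paper's proof: a two-phase bootstrap in which the baseline probability $\eps_0$ together with \eqref{eq:so-mild} first forces $\PMR_2(t)-\PMR_1(t)$ to vanish (so \alg[1]'s traffic share rises to nearly $\tfrac12$ and eventually $n_1(t)\geq(1-\beta_0)n_2(t)$ with high probability by Chernoff), after which weak BIR-dominance and monotonicity give $\Delta_t\gtrsim\alpha_0\BIR_2(t)$ and the smoothness condition \eqref{eq:SoftMax-smooth} converts this into the stated bias. The only differences are cosmetic (a Cesàro-average phrasing of the first phase versus the paper's explicit $T_1,T_2$ bookkeeping, and slightly different constants).
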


The main idea behind our proof is that even though \alg[1] may have a
slower rate of learning in the beginning, it will gradually catch up
and surpass \alg[2]. We will describe this process in two phases. In
the first phase, \alg[1] receives a random agent with probability at
least $\respF(-1) = \eps_0$ in each round. Since $\BIR_1$ tends to 0,
the difference in \BIR{s} between the two algorithms is also
diminishing. Due to the \SoftMaxRandom response function, \alg[1]
attracts each agent with probability at least $1/2 - O(\beta_0)$ after
a sufficient number of rounds. Then the game enters the second phase:
both algorithms receive agents at a rate close to $\tfrac12$, and the
fractions of agents received by both algorithms --- $n_1(t)/t$ and
$n_2(t)/t$ --- also converge to $\tfrac12$. At the end of the second
phase and in each global round afterwards, the counts $n_1(t)$ and
$n_2(t)$ satisfy the weak BIR-dominance condition, in the sense that
they both are larger than $n_0$ and $n_1(t)\geq (1-\beta_0)\; n_2(t)$.
At this point, \alg[1] actually has smaller $\BIR$, which reflected in the {\PMR}s eventually. Accordingly, from then on \alg[1]
attracts agents at a rate slightly larger than $\tfrac12$. We prove
that the ``bump'' over $\tfrac12$ is at least on the order of
$\BIR_2(t)$.

\begin{proof}[Proof of Theorem~\ref{thm:SoftMax-strong}]
  Let $\beta_1 = \min\{c_0'\delta_0, \beta_0/20\}$ with $\delta_0$
  defined in~\eqref{eq:SoftMax-smooth}.  Recall each agent chooses
  \alg[1] with probability at least $\respF(-1)= \eps_0$.  By
  condition \eqref{eq:so-mild} and \eqref{eq:SoftMax-assn-strong},
  there exists some sufficiently large $T_1$ such that for any
  $t\geq T_1$, $\BIR_1(\eps_0 T_1/2) \leq \beta_1/c_0'$ and
  $\BIR_2(t) > e^{-\eps_0 t/12}$. Moreover, for any $t\geq T_1$, we
  know $\E[n_1(t+1)] \geq \eps_0\,t $, and by the Chernoff Bounds
  (Theorem~\ref{thm:chernoff}), we have $n_1(t+1) \geq \eps_0 t/2$
  holds with probability at least $1 - q_1(t)$ with
  $q_1(t) = \exp(-\eps_0 t/12) < \BIR_2(t)$. It follows that for any $t\geq T_1$,
\begin{align*}
  \PMR_2(t) - \PMR_1(t) &= \Ex{m_1\sim \posteriorN{1}{t},\;m_2\sim \posteriorN{2}{t}}{\BIR_1(m_1+ 1) - \BIR_2(m_2+1)} \\
                        &\leq q_1(t)  + \Ex{m_1\sim \posteriorN{1}{t}}{\BIR_1(m_1+ 1)\mid m_1 \geq \eps_0 t/2 - 1 } - \BIR_2(t)\\
                        &\leq \BIR_1(\eps_0 T_1/2) \leq \beta_1/c_0'
\end{align*}
Since the response function $\respF$ is $c_0'$-Lipschitz in the
neighborhood of $[-\delta_0, \delta_0]$, each agent after round $T_1$
will choose \alg[1] with probability at least
\[
  p_t \geq \frac{1}{2} - c_0'\left(\PMR_2(t) - \PMR_1(t)\right) \geq
  \frac{1}{2} - \beta_1.
\]

Next, we will show that there exists a sufficiently large $T_2$ such
that for any $t\geq T_1 + T_2$, with high probability
$n_1(t) > \max\{n_0, (1 - \beta_0)n_2(t)\}$, where $n_0$ is defined
in~\eqref{eq:SoftMax-better}. 
Fix any $t \geq T_1 + T_2$. 
Since each agent chooses \alg[1] with probability at least
$1/2 - \beta_1$, by Chernoff Bounds (Theorem~\ref{thm:chernoff}) we
have with probability at least $1 - q_2(t)$ that the number of agents
that choose \alg[1] is at least $\beta_0(1/2 - \beta_1)t/5$, where the
function
$$
q_2(x) = \exp\left( \frac{-(1/2 - \beta_1)(1 - \beta_0/5)^2x}{3} \right).
$$
Note that the number of agents received by \alg[2] is at most
$T_1 + (1/2 + \beta_1)t + (1/2 - \beta_1)(1 - \beta_0/5)t$.

Then as long as
$T_2 \geq \frac{5T_1}{\beta_0}$, we can guarantee that
$n_1(t) > n_2(t) (1 - \beta_0)$ and $n_1(t) > n_0$ with probability at
least $1 - q_2(t)$ for any $t \geq T_1 + T_2$.
Note that the weak BIR-dominance condition
in~\eqref{eq:SoftMax-better} implies that for any $t\geq T_1 + T_2$
with probability at least $1 - q_2(t)$,
\[
  \BIR_1(n_1(t)) < (1- \alpha_0)\BIR_2(n_2(t)).
\]
It follows that for any $t\geq T_1 + T_2$,
\begin{align*}
  \PMR_1(t) - \PMR_2(t) &= \Ex{m_1\sim \posteriorN{1}{t},\;m_2\sim \posteriorN{2}{t}}{\BIR_2(m_2+ 1) - \BIR_1(m_1+1)} \\
                        &\geq (1 - q_2(t))\alpha_0 \BIR_2(t) - q_2(t)\\
                        &\geq \alpha_0 \BIR_2(t)/4
\end{align*}
where the last inequality holds as long as
$q_2(t) \leq \alpha_0\BIR_2(t)/4$, and is implied by the condition
in~\eqref{eq:SoftMax-assn-strong} as long as $T_2$ is sufficiently
large. Hence, by the definition of our \SoftMaxRandom response
function and assumption in~\eqref{eq:SoftMax-smooth}, we have
\[
  \Pr[i_t = 1] \geq \frac{1}{2} + \frac{c_0\alpha_0\BIR_2(t)}{4}. \qedhere
\]
\end{proof}

Similar to the condition \eqref{eq:random-better-weaker}, we can also
generalize the weak BIR-dominance condition \eqref{eq:SoftMax-better}
to handle the dependence on $T$: there exist some $T_0$, a function
$n_0(T)\in \polylog(T)$, and constants $\beta_0,\alpha_0\in (0, 1/2)$, such that 
\begin{align}\label{eq:SoftMax-better-weaker}
(\forall T\geq T_0,  n\geq n_0(T)) \quad
\frac{\BIR_1((1-\beta_0)\, n\mid T)}{\BIR_2(n\mid T)} <1-\alpha_0.
\end{align}

We also provide a version of Theorem~\ref{thm:SoftMax-weak} under this
more general weak BIR-dominance condition; its proof is very similar
and is omitted. The following is just a direct consequence of
Theorem~\ref{thm:SoftMax-weak} with this general condition.\sw{added}

\begin{corollary}\label{cor:SoftMax-strong}
Assume that the response function is \SoftMaxRandom, and each principal's  utility is the number of users. Consider the restricted competition game in which the special algorithm \alg weakly-BIR-dominates the other allowed algorithms, and the latter satisfy $\BReg(n)\to \infty$. Then, for any sufficiently large time horizon $T$, there is a unique Nash equilibrium: both principals choose \alg.
\end{corollary}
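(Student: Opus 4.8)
The plan is to reduce the equilibrium claim to two statements about \emph{expected market shares}, using the ``better-algorithm-wins'' guarantee under weak BIR-dominance (the $T$-parameterized version of Theorem~\ref{thm:SoftMax-strong}, built on condition~\eqref{eq:SoftMax-better-weaker}) as a black box; this mirrors the derivation of Corollary~\ref{cor:SoftMax} but with the weaker dominance hypothesis and the extra ingredient $\BReg(n)\to\infty$. Let $M_i=\sum_{t=1}^T\Pr[i_t=i]$ be principal $i$'s expected market share, so that $M_1+M_2=T$. Two facts drive everything. The \emph{symmetry fact}: if both principals commit to the same algorithm, then by induction on $t$ the posteriors $\posteriorN{1}{t}$ and $\posteriorN{2}{t}$ coincide, hence $\PMR_1(t)=\PMR_2(t)$, and by fair tie-breaking $\respF(0)=\tfrac12$ every agent splits $50/50$; thus $M_1=M_2=T/2$. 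The \emph{strict-advantage fact}: if one principal runs \alg and the other runs any other allowed algorithm, then for all sufficiently large $T$ the \alg-principal has $M>T/2$ (and so the other has $M<T/2$).

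I would prove the strict-advantage fact by aggregating the per-round guarantee. Say principal $1$ runs \alg and principal $2$ runs another allowed algorithm \alg[2], so \alg weakly BIR-dominates \alg[2]; Theorem~\ref{thm:SoftMax-strong} gives $\Pr[i_t=1]\geq \tfrac12+\tfrac{c_0\alpha_0}{4}\,\BIR_2(t)$ for all $t\geq t_0$, while for the $t_0-1$ early rounds the \SoftMaxRandom lower bound $\Pr[i_t=1]\geq \eps$ bounds the loss. Subtracting $T/2=\sum_t\tfrac12$ yields
\[
M_1-\tfrac{T}{2}
 \;\geq\; -t_0\left(\tfrac12-\eps\right)
 \;+\; \tfrac{c_0\alpha_0}{4}\sum_{t=t_0}^{T}\BIR_2(t)
 \;=\; -t_0\left(\tfrac12-\eps\right)+\tfrac{c_0\alpha_0}{4}\left(\BReg_2(T)-\BReg_2(t_0-1)\right),
\]
using the definition of $\BReg$ in~\eqref{eq:SoftMax-BReg}. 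Since \alg[2] satisfies $\BReg_2(n)\to\infty$ while the deficit is bounded, the positive term dominates once $T$ is large, giving $M_1>T/2$. As $\mA$ is finite, taking the maximum over the finitely many ordered pairs of allowed algorithms produces a single threshold on $T$ that works for all profiles at once.

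Given these two facts, the game analysis is immediate. To see $(\alg,\alg)$ is a Nash equilibrium: a principal deviating from \alg to any other allowed algorithm faces an opponent still running \alg, which weakly BIR-dominates the deviation, so by the strict-advantage fact the deviator's share falls strictly below $T/2$, whereas \alg secures exactly $T/2$ by symmetry; the deviation is unprofitable. For uniqueness, take any profile other than $(\alg,\alg)$. If exactly one principal uses \alg, the other runs a dominated algorithm and so earns strictly less than $T/2$; it can deviate to \alg and obtain exactly $T/2$ by symmetry, a strict gain. If neither uses \alg, then since $M_1+M_2=T$ some principal has share $\leq T/2$; it can deviate to \alg, which weakly BIR-dominates the opponent's (non-special) algorithm, and thereby secure strictly more than $T/2$, again a strict gain. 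Hence every profile other than $(\alg,\alg)$ admits a profitable deviation, so $(\alg,\alg)$ is the unique Nash equilibrium.

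The main obstacle is the strict-advantage fact, and specifically the aggregation step: the per-round margin $\tfrac{c_0\alpha_0}{4}\BIR_2(t)$ is individually tiny and in fact vanishes as $t\to\infty$, so no single round guarantees a win and one genuinely needs the \emph{cumulative} margin $\tfrac{c_0\alpha_0}{4}\BReg_2(T)$ to diverge --- which is precisely why the hypothesis $\BReg(n)\to\infty$ is indispensable here (unlike in the \HardMaxRandom Corollary~\ref{cor:random}, where the winner takes essentially all agents). Care is needed that the bounded startup loss over the first $t_0$ rounds, together with the subtracted $\BReg_2(t_0-1)$, is eventually dwarfed by this growth; in the clean infinite-horizon reading $t_0$ is a constant and $\BReg_2(T)\to\infty$ settles it outright, whereas for $T$-parameterized algorithms (with $t_0=t_0(T)\in\polylog(T)$) one must additionally check that $\BReg_2(T)$ outgrows this polylogarithmic startup cost, which is where ``sufficiently large $T$'' is invoked.
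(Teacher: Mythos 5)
Your proof is correct and follows exactly the route the paper intends: the paper omits the proof of this corollary, describing it as a direct consequence of the ($T$-parameterized) weak-BIR-dominance version of the theorem, and your argument --- the symmetry fact giving each principal exactly $T/2$ on the diagonal, plus the strict-advantage fact obtained by summing the per-round margin $\tfrac{c_0\alpha_0}{4}\BIR_2(t)$ into the divergent cumulative quantity $\BReg_2(T)$ --- is precisely the missing derivation. The one caveat, which you yourself flag, is that for $T$-parameterized algorithms one must additionally ensure $\BReg_2(T)$ outgrows the $\polylog(T)$ startup cost, a point the paper glosses over as well.
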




\section{Economic implications}
\label{sec:welfare}
\asedit{We frame our contributions in terms of the relationship between \competitiveness and \rationality on one side, and adoption of better algorithms on the other. Recall that both \competitiveness (of the game between the two principals) and \rationality (of the agents) are controlled by the response function $\respF$.}

\OMIT{ 
We frame our contributions in terms of the relationship between \competition and \innovation, \ie between the extent to which the game between the two principals is competitive, and the degree of innovation --- adoption of better that these models incentivize. \Competition is controlled via the response function $\respF$, and \innovation refers to the quality of the technology (MAB algorithms) adopted by the principals. The \competition vs. \innovation relationship is well-studied in the economics literature, and is commonly known to often follow an inverted-U shape, as in \reffig{fig:inverted-U} (see Section~\ref{sec:related-work} for citations). \Competition in our models is closely correlated with \rationality: the extent to which agents make rational decisions, and indeed \rationality is what $\respF$ controls directly.
} 

\xhdr{Main story.}
Our main story concerns the restricted competition game between the two principals where one allowed algorithm \alg is ``better" than the others. \asedit{We track whether and when \alg is chosen in an equilibrium.} We vary \competitiveness/\rationality by changing the response function from \HardMax (full rationality, very competitive environment) to \HardMaxRandom to  \SoftMaxRandom (less rationality and competition). Our conclusions are as follows:
\begin{OneLiners}
\item Under \HardMax, no innovation: \DynGreedy is chosen over \alg.
\item Under \HardMaxRandom, some innovation:  \alg is chosen as long as it BIR-dominates.
\item Under \SoftMaxRandom, more innovation: \alg is chosen as long as it weakly-BIR-dominates.%
\footnote{This is a weaker condition, the better algorithm is chosen in a broader range of scenarios.}
\end{OneLiners}
These conclusions follow, respectively, from Corollaries~\ref{cor:DG-dominance}, \ref{cor:random} and \ref{cor:SoftMax}. Further, \asedit{we consider the uniform choice between the principals. It corresponds to the least amount of rationality and competition, and (when principals' utility is the number of agents) uniform choice provides no incentives to innovate.}%
\footnote{On the other hand, if principals' utility is somewhat aligned with agents' welfare, as in \eqref{eq:general-utility}, then a monopolist principal is incentivized to choose the best possible MAB algorithm (namely, to minimize cumulative Bayesian regret $\BReg(T)$). Accordingly, monopoly would result in better social welfare than competition, as the latter is likely to split the market and cause each principal to learn more slowly. This is a very generic and well-known effect regarding economies of scale.}
Thus, we have an inverted-U relationship, see \reffig{fig:inverted-U2}.

\begin{figure}
\begin{center}
\begin{tikzpicture}[scale=1]
      \draw[->] (-.5,0) -- (9.5,0) node[above] 
        {\qquad\qquad Competitiveness/Rationality};
      \draw[->] (0,-.5) -- (0,3) node[above] {Better algorithm in equilibrium};
      \draw[scale=0.8,domain=0.5:9.5,smooth,variable=\x,blue, line width=0.3mm] plot ({\x},{3.5 - 0.15*(\x - 5)^2});
     \node[below] at (1, 0) {\footnotesize \Uniform};
     \node[below] at (3.9, 0) {\footnotesize \SoftMaxRandom};
     \node[below] at (6, 0) {\footnotesize \HardMaxRandom};
     \node[below] at (8, 0) {\footnotesize \HardMax};
 \end{tikzpicture}

\caption{The stylized inverted-U relationship in the ``main story".}
\label{fig:inverted-U2}
\end{center}
\end{figure}
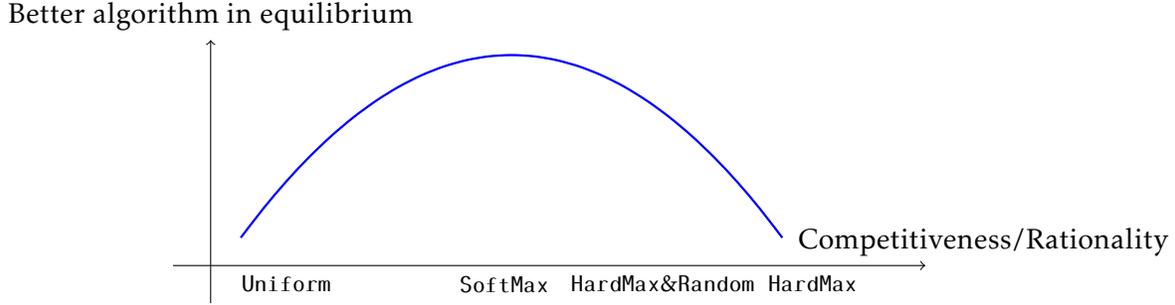

\xhdr{Secondary story.}
Let us zoom in on the symmetric  \HardMaxRandom model. \asedit{Competitiveness and rationality within this model are controlled by the baseline probability $\eps_0 = \respF(- 1)$, which goes smoothly between the two extremes of \HardMax ($\eps_0=0$) and the uniform choice ($\eps_0=\tfrac12$). Smaller $\eps_0$ corresponds to increased rationality and increased competitiveness.} For clarity, we assume that principal's utility is the number of agents.

We consider the marginal utility of switching to a better algorithm. Suppose initially both principals use some algorithm \alg, and principal 1 ponders switching to another algorithm \alg' which BIR-dominates \alg. \asedit{We are interested in the marginal utility of this switch. Then:}

\begin{itemize}
\item $\eps_0 = 0$ (\HardMax):~~~~ the marginal utility can be negative if \alg is \DynGreedy.

\item $\eps_0$ near $0$:~~~~ only a small marginal utility can be guaranteed, as it may take a long time for $\alg'$ to ``catch up" with \alg, and hence less time to reap the benefits.

\item ``medium-range" $\eps_0$:~~~~ large marginal utility, as $\alg'$ learns fast and gets most agents.

\item $\eps_0$ near $\tfrac12$:~~~~ small marginal utility, as principal 1 gets most agents for free no matter what.
\end{itemize}
The familiar inverted-U shape is depicted in Figure~\ref{fig:inverted-U3}.

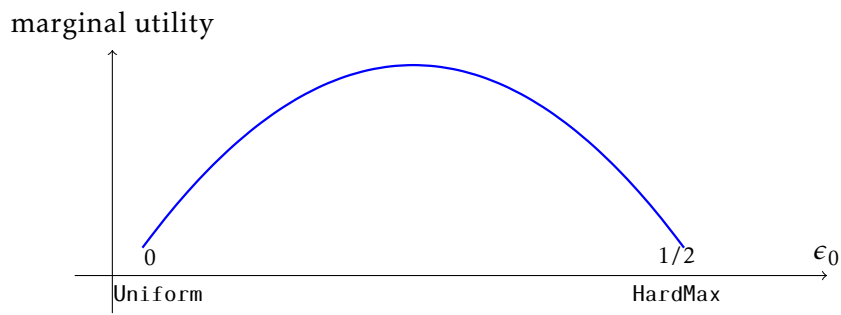
\begin{figure}
\begin{center}
\begin{tikzpicture}[scale=1]
      \draw[->] (-.5,0) -- (9.5,0) node[above]  {$\eps_0$};
      \draw[->] (0,-.5) -- (0,3) node[above] {marginal utility};
      \draw[scale=0.8,domain=0.5:9.5,smooth,variable=\x,blue, line width=0.3mm] plot ({\x},{3.5 - 0.15*(\x - 5)^2});
     \node[below] at (.6, 0) {\footnotesize \Uniform};
     \node[above] at (.5, 0) {\footnotesize 0};
     \node[below] at (7.5,0) {\footnotesize \HardMax};
     \node[above] at (7.5, 0) {\footnotesize 1/2};
 \end{tikzpicture}

\caption{The stylized inverted-U relationship from the ``secondary story"}
\label{fig:inverted-U3}
\end{center}
\end{figure}


\section*{Acknowledgements}
The authors would like to thank Glen Weyl for discussions of related work in economics.

\bibliographystyle{plainnat}
\begin{small}
\bibliography{bib-abbrv,bib-AGT,bib-bandits,bib-slivkins,bib-random,bib-ML}

\begin{thebibliography}{44}
\providecommand{\natexlab}[1]{#1}
\providecommand{\url}[1]{\texttt{#1}}
\expandafter\ifx\csname urlstyle\endcsname\relax
  \providecommand{\doi}[1]{doi: #1}\else
  \providecommand{\doi}{doi: \begingroup \urlstyle{rm}\Url}\fi

\bibitem[Agarwal et~al.(2016)Agarwal, Bird, Cozowicz, Dudik, Langford, Li,
  Hoang, Melamed, Sen, Schapire, and Slivkins]{MWT-WhitePaper-2016}
Alekh Agarwal, Sarah Bird, Markus Cozowicz, Miro Dudik, John Langford, Lihong
  Li, Luong Hoang, Dan Melamed, Siddhartha Sen, Robert Schapire, and Alex
  Slivkins.
\newblock Multiworld testing: A system for experimentation, learning, and
  decision-making, 2016.
\newblock A white paper, available at
  \url{https://github.com/Microsoft/mwt-ds/raw/master/images/MWT-WhitePaper.pdf}.

\bibitem[Aghion et~al.(2005)Aghion, Bloom, Blundell, Griffith, and
  Howitt]{Aghion-QJE05}
Philippe Aghion, Nicholas Bloom, Richard Blundell, Rachel Griffith, and Peter
  Howitt.
\newblock Competition and innovation: An inverted u relationship.
\newblock \emph{Quaterly J. of Economics}, 120\penalty0 (2):\penalty0 701--728,
  2005.

\bibitem[Athey and Segal(2013)]{AtheySegal-econometrica13}
Susan Athey and Ilya Segal.
\newblock An efficient dynamic mechanism.
\newblock \emph{Econometrica}, 81\penalty0 (6):\penalty0 2463--2485, November
  2013.
\newblock A preliminary version has been available as a working paper since
  2007.

\bibitem[Auer et~al.(2002{\natexlab{a}})Auer, Cesa-Bianchi, and
  Fischer]{bandits-ucb1}
Peter Auer, Nicol{\`o} Cesa-Bianchi, and Paul Fischer.
\newblock Finite-time analysis of the multiarmed bandit problem.
\newblock \emph{Machine Learning}, 47\penalty0 (2-3):\penalty0 235--256,
  2002{\natexlab{a}}.

\bibitem[Auer et~al.(2002{\natexlab{b}})Auer, Cesa-Bianchi, Freund, and
  Schapire]{bandits-exp3}
Peter Auer, Nicol{\`o} Cesa-Bianchi, Yoav Freund, and Robert~E. Schapire.
\newblock The nonstochastic multiarmed bandit problem.
\newblock \emph{SIAM J. Comput.}, 32\penalty0 (1):\penalty0 48--77,
  2002{\natexlab{b}}.
\newblock Preliminary version in {\em 36th IEEE FOCS}, 1995.

\bibitem[Azevedo and Gottlieb(2017)]{Azevedo-Gottlieb-17}
Eduardo Azevedo and Daniel Gottlieb.
\newblock Perfect competition in markets with adverse selection.
\newblock \emph{Econometrica}, 85\penalty0 (1):\penalty0 67--105, 2017.

\bibitem[Babaioff et~al.(2014)Babaioff, Sharma, and Slivkins]{MechMAB-ec09}
Moshe Babaioff, Yogeshwer Sharma, and Aleksandrs Slivkins.
\newblock Characterizing truthful multi-armed bandit mechanisms.
\newblock \emph{SIAM J. on Computing (SICOMP)}, 43\penalty0 (1):\penalty0
  194--230, 2014.
\newblock Preliminary version in \emph{10th ACM EC}, 2009.

\bibitem[Babaioff et~al.(2015)Babaioff, Kleinberg, and
  Slivkins]{Transform-ec10-jacm}
Moshe Babaioff, Robert Kleinberg, and Aleksandrs Slivkins.
\newblock Truthful mechanisms with implicit payment computation.
\newblock \emph{J. of the ACM}, 62\penalty0 (2):\penalty0 10, 2015.
\newblock Subsumes the conference papers in \emph{ACM EC 2010} and \emph{ACM EC
  2013}.

\bibitem[Bahar et~al.(2016)Bahar, Smorodinsky, and Tennenholtz]{Bahar-ec16}
Gal Bahar, Rann Smorodinsky, and Moshe Tennenholtz.
\newblock Economic recommendation systems.
\newblock In \emph{16th ACM Conf. on Electronic Commerce (EC)}, 2016.

\bibitem[Bergemann and V\"{a}lim\"{a}ki(2010)]{DynPivot-econometrica10}
Dirk Bergemann and Juuso V\"{a}lim\"{a}ki.
\newblock The dynamic pivot mechanism.
\newblock \emph{Econometrica}, 78\penalty0 (2):\penalty0 771--789, 2010.
\newblock Preliminary versions have been available since 2006, as \emph{Cowles
  Foundation Discussion Papers} \#1584 (2006), \#1616 (2007) and \#1672(2008).

\bibitem[Bimpikis et~al.(2017)Bimpikis, Papanastasiou, and
  Savva]{Bimpikis-exploration-ms17}
Kostas Bimpikis, Yiangos Papanastasiou, and Nicos Savva.
\newblock Crowdsourcing exploration.
\newblock \emph{Management Science}, 2017.
\newblock Forthcoming.

\bibitem[Bolton and Harris(1999)]{Bolton-econometrica99}
Patrick Bolton and Christopher Harris.
\newblock {Strategic Experimentation}.
\newblock \emph{Econometrica}, 67\penalty0 (2):\penalty0 349--374, 1999.

\bibitem[Bubeck and Cesa-Bianchi(2012)]{Bubeck-survey12}
S\'{e}bastien Bubeck and Nicolo Cesa-Bianchi.
\newblock {Regret Analysis of Stochastic and Nonstochastic Multi-armed Bandit
  Problems}.
\newblock \emph{Foundations and Trends in Machine Learning}, 5\penalty0 (1),
  2012.

\bibitem[Che and H\"{o}rner(2015)]{Che-13}
Yeon-Koo Che and Johannes H\"{o}rner.
\newblock Optimal design for social learning.
\newblock Preprint, 2015.
\newblock First draft: 2013.

\bibitem[Devanur and Kakade(2009)]{DevanurK09}
Nikhil Devanur and Sham~M. Kakade.
\newblock The price of truthfulness for pay-per-click auctions.
\newblock In \emph{10th ACM Conf. on Electronic Commerce (EC)}, pages 99--106,
  2009.

\bibitem[Even{-}Dar et~al.(2006)Even{-}Dar, Mannor, and
  Mansour]{EvenDar-icml06}
Eyal Even{-}Dar, Shie Mannor, and Yishay Mansour.
\newblock Action elimination and stopping conditions for the multi-armed bandit
  and reinforcement learning problems.
\newblock \emph{J. of Machine Learning Research (JMLR)}, 7:\penalty0
  1079--1105, 2006.

\bibitem[Frazier et~al.(2014)Frazier, Kempe, Kleinberg, and
  Kleinberg]{Frazier-ec14}
Peter Frazier, David Kempe, Jon~M. Kleinberg, and Robert Kleinberg.
\newblock Incentivizing exploration.
\newblock In \emph{ACM Conf. on Economics and Computation (ACM EC)}, pages
  5--22, 2014.

\bibitem[Gabaix et~al.(2016)Gabaix, Laibson, Li, Li, Resnick, and
  de~Vries]{Gabaix-16}
Xavier Gabaix, David Laibson, Deyuan Li, Hongyi Li, Sidney Resnick, and
  Casper~G. de~Vries.
\newblock The impact of competition on prices with numerous firms.
\newblock \emph{J. of Economic Theory}, 165:\penalty0 1--24, 2016.

\bibitem[Ghosh and Hummel(2013)]{Ghosh-itcs13}
Arpita Ghosh and Patrick Hummel.
\newblock Learning and incentives in user-generated content: multi-armed
  bandits with endogenous arms.
\newblock In \emph{Innovations in Theoretical Computer Science Conf. (ITCS)},
  pages 233--246, 2013.

\bibitem[Gittins et~al.(2011)Gittins, Glazebrook, and Weber]{Gittins-book11}
John Gittins, Kevin Glazebrook, and Richard Weber.
\newblock \emph{{Multi-Armed Bandit Allocation Indices}}.
\newblock John Wiley \& Sons, 2011.

\bibitem[Gummadi et~al.(2012)Gummadi, Johari, and Yu]{Johari-ec12}
Ramakrishna Gummadi, Ramesh Johari, and Jia~Yuan Yu.
\newblock Mean field equilibria of multiarmed bandit games.
\newblock In \emph{13th ACM Conf. on Electronic Commerce (EC)}, 2012.

\bibitem[Ho et~al.(2016)Ho, Slivkins, and Vaughan]{RepeatedPA-ec14}
Chien-Ju Ho, Aleksandrs Slivkins, and Jennifer~Wortman Vaughan.
\newblock Adaptive contract design for crowdsourcing markets: Bandit algorithms
  for repeated principal-agent problems.
\newblock \emph{J. of Artificial Intelligence Research}, 55:\penalty0 317--359,
  2016.
\newblock Preliminary version appeared in \emph{ACM EC 2014}.

\bibitem[Hotelling(1929)]{Hotelling-29}
Harold Hotelling.
\newblock Stability in competition.
\newblock \emph{The Economic Journal}, 39\penalty0 (153):\penalty0 41--57,
  1929.

\bibitem[Immorlica et~al.(2011)Immorlica, Kalai, Lucier, Moitra, Postlewaite,
  and Tennenholtz]{DuelingAlgs-stoc11}
Nicole Immorlica, Adam~Tauman Kalai, Brendan Lucier, Ankur Moitra, Andrew
  Postlewaite, and Moshe Tennenholtz.
\newblock Dueling algorithms.
\newblock In \emph{43rd ACM Symp. on Theory of Computing (STOC)}, pages
  215--224, 2011.

\bibitem[Kakade et~al.(2013)Kakade, Lobel, and Nazerzadeh]{Kakade-pivot-or13}
Sham~M. Kakade, Ilan Lobel, and Hamid Nazerzadeh.
\newblock Optimal dynamic mechanism design and the virtual-pivot mechanism.
\newblock \emph{Operations Research}, 61\penalty0 (4):\penalty0 837--854, 2013.

\bibitem[Keller et~al.(2005)Keller, Rady, and Cripps]{Keller-econometrica05}
Godfrey Keller, Sven Rady, and Martin Cripps.
\newblock {Strategic Experimentation with Exponential Bandits}.
\newblock \emph{Econometrica}, 73\penalty0 (1):\penalty0 39--68, 2005.

\bibitem[Kleinberg et~al.(2016)Kleinberg, Waggoner, and Weyl]{Bobby-Glen-ec16}
Robert~D. Kleinberg, Bo~Waggoner, and E.~Glen Weyl.
\newblock Descending price optimally coordinates search.
\newblock Working paper, 2016.
\newblock Preliminary version in \emph{ACM EC 2016}. Under submission to
  \emph{Econometrica}.

\bibitem[Kremer et~al.(2014)Kremer, Mansour, and Perry]{Kremer-JPE14}
Ilan Kremer, Yishay Mansour, and Motty Perry.
\newblock Implementing the “wisdom of the crowd”.
\newblock \emph{J. of Political Economy}, 122:\penalty0 988--1012, 2014.
\newblock Preliminary version in \emph{ACM EC 2014}.

\bibitem[Lai and Robbins(1985)]{Lai-Robbins-85}
Tze~Leung Lai and Herbert Robbins.
\newblock {Asymptotically efficient Adaptive Allocation Rules}.
\newblock \emph{Advances in Applied Mathematics}, 6:\penalty0 4--22, 1985.

\bibitem[Mansour et~al.(2015)Mansour, Slivkins, and
  Syrgkanis]{ICexploration-ec15}
Yishay Mansour, Aleksandrs Slivkins, and Vasilis Syrgkanis.
\newblock Bayesian incentive-compatible bandit exploration.
\newblock In \emph{15th ACM Conf. on Electronic Commerce (EC)}, 2015.

\bibitem[Mansour et~al.(2016)Mansour, Slivkins, Syrgkanis, and
  Wu]{ICexplorationGames-ec16-working}
Yishay Mansour, Aleksandrs Slivkins, Vasilis Syrgkanis, and Steven Wu.
\newblock Bayesian exploration: Incentivizing exploration in bayesian games.
\newblock Working paper, 2016.
\newblock available at {\tt https://arxiv.org/abs/1602.07570}. Preliminary
  version in \emph{ACM EC 2016}.

\bibitem[Milgrom and Stokey(1982)]{Milgrom-Stokey-82}
Paul Milgrom and Nancy Stokey.
\newblock Information, trade and common knowledge.
\newblock \emph{J. of Economic Theory}, 26\penalty0 (1):\penalty0 17--27, 1982.

\bibitem[Mitzenmacher and Upfal(2005)]{MitzUpfal-book05}
Michael Mitzenmacher and Eli Upfal.
\newblock \emph{{Probability and Computing: Randomized Algorithms and
  Probabilistic Analysis}}.
\newblock Cambridge University Press, 2005.

\bibitem[Perloff and Salop(1985)]{Perloff-Salop-85}
Jeffrey~M. Perloff and Steven~C. Salop.
\newblock Equilibrium with product differentiation.
\newblock \emph{Review of Economic Studies}, LII:\penalty0 107--120, 1985.

\bibitem[Rothschild and Stiglitz(1976)]{Rothschild-Stiglitz-76}
Michael Rothschild and Joseph Stiglitz.
\newblock Equilibrium in competitive insurance markets: An essay on the
  economics of imperfect information.
\newblock \emph{Quaterly J. of Economics}, 90\penalty0 (4):\penalty0 629--649,
  1976.

\bibitem[Rysman(2009)]{Rysman09}
Marc Rysman.
\newblock The economics of two-sided markets.
\newblock \emph{J. of Economic Perspectives}, 23\penalty0 (3):\penalty0
  125--144, 2009.

\bibitem[Schumpeter(1942)]{Schumpeter-42}
Joseph Schumpeter.
\newblock \emph{Capitalism, Socialism and Democracy}.
\newblock Harper \& Brothers, 1942.

\bibitem[Singla and Krause(2013)]{Krause-www13}
Adish Singla and Andreas Krause.
\newblock Truthful incentives in crowdsourcing tasks using regret minimization
  mechanisms.
\newblock In \emph{22nd Intl. World Wide Web Conf. (WWW)}, pages 1167--1178,
  2013.

\bibitem[Slivkins(2017)]{slivkins-MABbook}
Aleksandrs Slivkins.
\newblock Introduction to multi-armed bandits, 2017.
\newblock A book draft, available at
  {http://research.microsoft.com/en-us/people/slivkins}.

\bibitem[Veiga and Weyl(2016)]{Veiga-Weyl-16}
Andre Veiga and Glen Weyl.
\newblock Product design in selection markets.
\newblock \emph{Quarterly J. of Economics}, 131\penalty0 (2):\penalty0
  1007--1056, 2016.

\bibitem[Vives(2008)]{Vives-08}
Xavier Vives.
\newblock Innovation and competitive pressure.
\newblock \emph{J. of Industrial Economics}, 56\penalty0 (3), 2008.

\bibitem[Weyl and White(2014)]{Weyl-White-14}
Glen Weyl and Alexander White.
\newblock Let the right ‘one' win: Policy lessons from the new economics of
  platforms.
\newblock \emph{Competition Policy International}, 12\penalty0 (2):\penalty0
  29--51, 2014.

\bibitem[Yue and Joachims(2009)]{Yue-dueling-icml09}
Yisong Yue and Thorsten Joachims.
\newblock Interactively optimizing information retrieval systems as a dueling
  bandits problem.
\newblock In \emph{26th Intl. Conf. on Machine Learning (ICML)}, pages
  1201--1208, 2009.

\bibitem[Yue et~al.(2012)Yue, Broder, Kleinberg, and Joachims]{Yue-dueling12}
Yisong Yue, Josef Broder, Robert Kleinberg, and Thorsten Joachims.
\newblock The k-armed dueling bandits problem.
\newblock \emph{J. Comput. Syst. Sci.}, 78\penalty0 (5):\penalty0 1538--1556,
  2012.
\newblock Preliminary version in COLT 2009.

\end{thebibliography}
\end{small}

\appendix

\section{Background on multi-armed bandits}
\label{app:examples}

\newcommand{\ExplorExploit}{\term{ExplorExploit}}
\newcommand{\PhasedExplorExploit}{\term{PhasedExplorExploit}}
\newcommand{\RandomDynGreedy}{\term{RandomDynGreedy}}
\newcommand{\SuccesiveElimination}{\term{SuccesiveElimination}}
\newcommand{\SuccesiveEliminationReset}{\term{SuccesiveEliminationReset}}

This appendix provides some pertinent background on multi-armed
bandits (\emph{MAB}). We discuss \BIR and monotonicity of several MAB algorithms, touching upon: \DynGreedy and \StaticGreedy (Section~\ref{sec:MAB-greedy}), ``naive" MAB algorithms that separate exploration and exploitation (Section~\ref{sec:MAB-naive}), and ``smart" MAB algorithms that combine exploration and exploitation (Section~\ref{sec:MAB-smart}).

As we do throughout the paper, we focus on MAB with i.i.d. rewards and a Bayesian prior; we call it \emph{Bayesian MAB} for brevity.


\subsection{\DynGreedy and \StaticGreedy}
\label{sec:MAB-greedy}

We provide an example when \DynGreedy and \StaticGreedy have
constant \BIR, and prove monotonicity of \DynGreedy. For the
example, it suffices to consider \emph{deterministic rewards} (for
each action $a$, the realized reward is always equal to the mean
$\mu_a$) and \emph{independent priors} (according to the prior
$\priorMu$, random variables $\mu_1 \LDOTS \mu_K$ are mutually
independent) each of {\em full support}.

%
%
%
%
%
%

The following claim is immediate from the definition of the CDF
function
\begin{claim}
Assume independent priors. Let $F_i$ be the CDF of the mean reward
$\mu_i$ of action $a_i\in A$. Then, for any numbers
$z_2>z_1>\E[\mu_2]$ we have
    $\Pr[\text{$\mu_1\leq z_1$ and $\mu_2\geq z_2$}] = F_1(z_1)(1-F_2(z_2)) $.
\end{claim}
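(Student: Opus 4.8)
The plan is to reduce the joint probability to a product of two one-dimensional probabilities via independence, and then read each factor off the relevant CDF; this is exactly why the statement is flagged as ``immediate.''

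First I would invoke the independent-priors assumption. Since $\mu_1$ and $\mu_2$ are mutually independent under $\priorMu$, the events $\{\mu_1\leq z_1\}$ and $\{\mu_2\geq z_2\}$ are independent, so
\[
 \Pr[\mu_1\leq z_1 \text{ and } \mu_2\geq z_2]
   = \Pr[\mu_1\leq z_1]\cdot \Pr[\mu_2\geq z_2].
\]
The first factor is $F_1(z_1)$ by the very definition of the CDF. For the second factor I would write $\Pr[\mu_2\geq z_2] = 1 - \Pr[\mu_2 < z_2]$ and then identify $\Pr[\mu_2 < z_2]$ with $F_2(z_2)=\Pr[\mu_2\leq z_2]$, which yields the factor $1-F_2(z_2)$. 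Assembling the two pieces gives $F_1(z_1)\,(1-F_2(z_2))$, as required.

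The only point requiring care — and the only place the hypotheses on the priors enter — is this last identification: $\Pr[\mu_2<z_2]$ equals $F_2(z_2)$ precisely when $\mu_2$ has no atom at $z_2$, i.e. $\Pr[\mu_2=z_2]=0$. Under the continuous, full-support priors assumed for this example (e.g. the Beta priors used elsewhere in the paper), every single point carries probability $0$, so this holds for every $z_2$ and the identity is exact rather than merely an inequality. I would also remark that the stated inequalities $z_2>z_1>\E[\mu_2]$ play no role in establishing the identity itself: they merely describe the regime in which the claim is later applied (to bound the probability that \DynGreedy\ or \StaticGreedy\ commits to an inferior action), and are carried along in the statement for convenience. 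There is no real obstacle here beyond being explicit about the no-atom subtlety; the substance of the claim is the independence factorization.
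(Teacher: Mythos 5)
Your proof is correct and is exactly the argument the paper intends: the paper states the claim without proof (calling it ``immediate from the definition of the CDF''), and the independence factorization followed by reading off $F_1(z_1)$ and $1-F_2(z_2)$ is that immediate argument. Your added remark about the no-atom condition at $z_2$ is a legitimate point of care that the paper glosses over, and it is indeed satisfied in the continuous-prior settings where the claim is used.
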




We can now draw an immediate corollary of the above claim

\begin{corollary}
Consider any problem instance of Bayesian MAB with two actions and independent
priors which are full support. Then:
\begin{OneLiners}
\item[(a)] With constant probability, \StaticGreedy  has a constant \BIR for all steps.
\item[(b)] Assuming deterministic rewards, with
constant probability \DynGreedy has a constant \BIR for all steps.
\end{OneLiners}
\end{corollary}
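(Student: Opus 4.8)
The plan is to exhibit, for each algorithm, a single ``bad event'' over the draw of $\mu=(\mu_1,\mu_2)\sim\priorMu$ which (i) occurs with probability bounded below by an absolute constant, and (ii) forces the algorithm to play the prior-best arm $a_1$ at \emph{every} step while $a_2$ is in fact better by a constant margin. Since the conditional per-step reward of the algorithm given $\mu$ never exceeds $\max(\mu_1,\mu_2)$, and is nonnegative regret otherwise, such an event immediately yields $\BIR(n)\ge c\cdot c'>0$ for every $n$, where $c$ is the reward gap on the event and $c'$ is its probability. Throughout I use that the two actions are labeled so that $\E[\mu_1]>\E[\mu_2]$, that $\mu_1,\mu_2$ are independent with full-support CDFs $F_1,F_2$, and the preceding claim, which evaluates the relevant joint probabilities by independence.

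For part (a), \StaticGreedy always recommends the prior-best arm $a_1$, so its conditional per-step reward given $\mu$ equals $\mu_1$ regardless of the realized history. I would fix thresholds $\E[\mu_2]<z_1<z_2$ lying inside the support, with $z_2$ below the top of the support, and take the bad event $E=\{\mu_1\le z_1,\ \mu_2\ge z_2\}$. On $E$ we have $\mu_2-\mu_1\ge z_2-z_1>0$, so the per-step regret is at least the constant $z_2-z_1$; by the preceding claim $\Pr[E]=F_1(z_1)\,(1-F_2(z_2))$, a positive constant since full support makes each factor strictly positive. This settles (a), and notably it uses neither deterministic rewards nor the lock-in argument below.

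For part (b) I would first pin down the trajectory of \DynGreedy under deterministic rewards. At step $1$ the posterior equals the prior, so \DynGreedy pulls $a_1$ and, rewards being deterministic, learns $\mu_1$ \emph{exactly}; thereafter the posterior mean of $a_1$ is frozen at $\mu_1$, while by independence the posterior mean of the never-pulled arm $a_2$ stays at $\E[\mu_2]$. Hence, whenever $\mu_1>\E[\mu_2]$, \DynGreedy keeps preferring $a_1$ and pulls it at every subsequent step. The bad event must therefore encode both \emph{lock-in} ($\mu_1>\E[\mu_2]$) and \emph{suboptimality} ($\mu_2$ larger than $\mu_1$ by a constant). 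I would take three thresholds $\E[\mu_2]<z_0<z_1<z_2$ in the support (again with $z_2$ below the top) and the event $E'=\{z_0\le\mu_1\le z_1,\ \mu_2\ge z_2\}$: then $\mu_1\ge z_0>\E[\mu_2]$ forces lock-in, while $\mu_2\ge z_2>z_1\ge\mu_1$ gives per-step regret at least $z_2-z_1$ at every step. By independence $\Pr[E']=(F_1(z_1)-F_1(z_0))\,(1-F_2(z_2))$, which is a positive constant by full support, completing (b).

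The only real subtlety is the lock-in argument in part (b): one must verify that \DynGreedy never revisits $a_2$, and this is exactly where deterministic rewards are essential, since a single pull reveals $\mu_1$ exactly and the posterior is static from step $2$ on (here independence guarantees that this observation leaves $a_2$'s posterior at $\E[\mu_2]$). Everything else reduces to computing a product of marginal probabilities via the preceding claim and observing that full support keeps each factor a positive constant; choosing $z_0<z_1<z_2$ just above $\E[\mu_2]$ and below the top of the support makes all factors simultaneously nonzero.
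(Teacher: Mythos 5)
Your proposal is correct and follows essentially the route the paper intends: the paper states this corollary as ``immediate'' from the preceding claim without writing out the details, and the details are exactly what you supply --- a constant-probability event, computed via independence and full support, on which the prior-best arm is played forever while the other arm is better by a constant margin. Your explicit lock-in argument for \DynGreedy under deterministic rewards (one pull reveals $\mu_1$ exactly, independence freezes arm $2$'s posterior at $\E[\mu_2]$, and $\mu_1\ge z_0>\E[\mu_2]$ prevents any return to arm $2$) is precisely the step the paper leaves implicit, and it is handled correctly.
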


\begin{remark}
A similar result holds for  rewards which are distributed as
Bernoulli random variables. In this case we consider accumulative
reward of an action as a random walk, and use a high probability
variation of the law of iterated logarithms. (Details omitted.)
\end{remark}

Next, we show that \DynGreedy is monotone.

\begin{lemma}\label{dgmono}
\DynGreedy is monotone, in the sense that $\rew(n)$ is non-decreasing.
Further, $\rew(n)$ is strictly increasing for every time step $n$ with $\Pr[a_n\neq a_{n+1}]>0$.
\end{lemma}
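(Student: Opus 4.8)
```latex
The plan is to prove monotonicity of \DynGreedy\ by a coupling argument across consecutive steps $n$ and $n+1$, mirroring the structure already used in the proof of Lemma~\ref{lm:DG-rew}. The key observation is that \DynGreedy\ at step $n+1$ has access to strictly more information than it did at step $n$: namely, the history $H_{n+1}$ extends $H_n$ by the (action, reward) pair observed at step $n$. Since the greedy choice maximizes the posterior mean reward given the current history, and a richer history yields a posterior that is, in expectation, a refinement of the coarser one, the expected value of the maximum should only improve. I would formalize this via the tower property of conditional expectation.

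Concretely, first I would write $\rew(n) = \E[\max_{a} \E[\mu_a \mid H_n]]$ and $\rew(n+1) = \E[\max_{a} \E[\mu_a \mid H_{n+1}]]$, using that the greedy action at each step achieves the posterior-maximal mean reward by definition. Then, for each fixed realized history $h_n$ of $H_n$, I would condition on $H_n = h_n$ and apply Jensen's inequality to the convex function $\max$: since
\begin{align*}
 \E\!\left[\; \max_a \E[\mu_a \mid H_{n+1}] \;\middle|\; H_n = h_n \right]
 \;\geq\; \max_a \E\!\left[\; \E[\mu_a \mid H_{n+1}] \;\middle|\; H_n=h_n \right]
 \;=\; \max_a \E[\mu_a \mid H_n = h_n],
\end{align*}
where the last equality is the tower property. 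Integrating over $h_n$ gives $\rew(n+1)\geq \rew(n)$, which is monotonicity.

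For the strict part, I would track the slack in the Jensen step. The inequality $\E[\max_a X_a] \geq \max_a \E[X_a]$, with $X_a = \E[\mu_a \mid H_{n+1}]$ conditioned on $H_n=h_n$, is strict exactly when the maximizing action is not the same across all realizations of the additional observation --- that is, when the greedy choice $a_{n+1}$ genuinely depends on the new data and differs from $a_n$ with positive probability. The hypothesis $\Pr[a_n \neq a_{n+1}]>0$ provides precisely a positive-probability set of histories on which the argmax shifts, and assumption~\eqref{eq:assn-distinct} guarantees the posterior means are almost surely distinct, so the shift in argmax produces a strict gain rather than a tie. I would isolate this positive-probability set, argue the conditional Jensen inequality is strict there, and integrate to conclude $\rew(n+1)>\rew(n)$.

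The main obstacle I anticipate is making the strictness argument fully rigorous: I must verify that ``$a_n \neq a_{n+1}$ with positive probability'' translates into a positive-probability set of realized step-$n$ histories $h_n$ on which the conditional distribution of $\max_a \E[\mu_a\mid H_{n+1}]$ is genuinely non-degenerate (not a.s.\ equal to its mean), so that Jensen is strict. Handling the bookkeeping of conditioning on $H_n=h_n$ versus integrating over the fresh reward observation, while invoking~\eqref{eq:assn-distinct} to rule out boundary ties, is where the care is needed; the monotonicity (non-strict) part is routine once the tower-property framing is set up.
```
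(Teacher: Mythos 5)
Your proposal is correct and follows essentially the same route as the paper's proof: the tower property gives $\rew(n)=\E\bigl[\E[\mu_{a_n}\mid \mI_{n+1}]\bigr]$, and the greedy choice at step $n+1$ dominates $a_n$ pointwise on histories, which is exactly your ``Jensen for $\max$'' step; strictness then comes from assumption~\eqref{eq:assn-distinct} ruling out ties on the positive-probability event $\{a_n\neq a_{n+1}\}$, just as you argue. The only cosmetic difference is that the paper compares directly against the fixed action $a_n$ rather than phrasing the inequality as Jensen applied to the maximum.
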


\begin{proof}
We prove by induction on $n$ that $\rew(n)\leq \rew(n+1)$ for
\DynGreedy. Let $a_n$ be the random variable recommended at time
$t$, then $\E[\mu_{a_n}| \mI_n ]=\rew(n)$. We can rewrite this as:
\[
\rew(n)=\E_{\mI_n}[\E_{r_n}[\mu_{a_n}|r_n,\mI_n]] =
\E_{\mI_{n+1}}[\mu_{a_n}|\mI_{n+1}]
\]
since $\mI_{n+1}=(\mI_n,r_n)$. At time $n+1$ \DynGreedy will select
an action $a_{n+1}$ such that:
\[
\rew(n+1)=\E[\mu_{a_{n+1}}|\mI_{n+1}]\geq \E[\mu_{a_n}
|\mI_n]=\rew(n)
\]
%
which proves the monotonicity. In cases that $\Pr[a_n\neq
a_{n+1}]>0]$ we have a strict inequality, since with some
probability we select a better action then the realization of $a_n$.
\end{proof}


\subsection{``Naive" MAB algorithms that separate exploration and exploitation}
\label{sec:MAB-naive}


MAB algorithm \ExplorExploit$(m)$ initially explores each action
with $m$ agents and for the remaining $T-|A|m$ agents recommends the
action with the highest observed average. In the explore phase it
assigns a random permutation of the $mK$ recommendations.

\begin{lemma}
The \ExplorExploit$(T^{2/3}\log |A|/\delta)$ algorithm has, with
probability $1-\delta$, for any  $n\geq |A|T^{2/3}$ we have
\BIR$(n)=O(T^{-1/3})$. In addition, \ExplorExploit$(m)$ is monotone.
\end{lemma}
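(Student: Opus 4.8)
The plan is to prove the two assertions separately: the high-probability instantaneous-regret bound, which comes from a concentration argument on the explore phase, and the monotonicity of $\rew(n)$, which comes from the coarse structure of the algorithm (marginally uniform recommendations during exploration, a single fixed recommendation during exploitation). Throughout, set $m = T^{2/3}\log(|A|/\delta)$ and let $\hat\mu_a$ denote the empirical average reward of action $a$ over its $m$ explore pulls.

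For the regret bound, I would first fix the realized mean vector $\mu$ and apply a Chernoff/Hoeffding bound (cf.\ Theorem~\ref{thm:chernoff}) to each action, obtaining $|\hat\mu_a - \mu_a| \le T^{-1/3}$ with probability at least $1 - \delta/|A|$; a union bound over the $|A|$ actions yields a ``good event'' $\mathcal{E}$ of probability at least $1-\delta$ on which all $|A|$ empirical means are simultaneously accurate to within $T^{-1/3}$. The choice of $m$ is exactly what makes the confidence radius $\sqrt{\log(|A|/\delta)/m}$ collapse to $T^{-1/3}$. On $\mathcal{E}$, writing $\hat a = \arg\max_a \hat\mu_a$ for the action replayed throughout the exploit phase and $a^\star = \arg\max_a \mu_a$, the three-step chain
\begin{align*}
\mu_{\hat a} \;\ge\; \hat\mu_{\hat a} - T^{-1/3} \;\ge\; \hat\mu_{a^\star} - T^{-1/3} \;\ge\; \mu_{a^\star} - 2\,T^{-1/3}
\end{align*}
shows that the realized instantaneous regret of every exploit step is at most $2\,T^{-1/3} = O(T^{-1/3})$, which is the claimed bound (and, integrating over $\mu$ while charging the complementary event of probability $\le\delta$ and regret $\le 1$, gives the same order for $\BIR(n)$ once $\delta = O(T^{-1/3})$).

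For monotonicity, the key observation is that $\rew(n)$ is piecewise constant with a single upward jump. During the explore phase each slot of the random permutation recommends a marginally uniform action, independent of $\mu$, so $\rew(n) = \tfrac{1}{|A|}\sum_a \E[\mu_a]$ is constant there; during the exploit phase the algorithm always replays the fixed action $\hat a$, so $\rew(n) = \E[\mu_{\hat a}]$ is again constant. Hence it suffices to verify that the transition is non-decreasing, i.e.\ $\E[\mu_{\hat a}] \ge \tfrac{1}{|A|}\sum_a \E[\mu_a]$.

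The main obstacle is precisely this last inequality, the only place where more than bookkeeping is required. I would prove it by writing $\E[\mu_{\hat a}] - \tfrac{1}{|A|}\sum_a \E[\mu_a] = \sum_a \E\!\left[\mu_a\left(\Pr[\hat a = a\mid \mu] - \tfrac{1}{|A|}\right)\right]$ and observing that $\Pr[\hat a = a \mid \mu]$ is non-decreasing in $\mu_a$ (a larger true mean only makes $a$ more likely to attain the largest empirical average), so that $\mu_a$ and $\Pr[\hat a = a\mid \mu]$ are positively associated; a Chebyshev-sum / FKG-type argument then forces the sum to be non-negative. A cleaner alternative is a coupling that compares $\hat a$ against a uniformly random action. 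Everything else — the concentration and the piecewise-constant structure — is routine.
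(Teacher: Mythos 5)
Your high-probability regret bound is essentially the paper's own argument (Chernoff/union bound to get a clean event where all $|A|$ empirical means are within $T^{-1/3}$, then the standard three-step chain $\mu_{\hat a}\ge\hat\mu_{\hat a}-T^{-1/3}\ge\hat\mu_{a^*}-T^{-1/3}\ge\mu_{a^*}-2T^{-1/3}$), and your piecewise-constant decomposition of $\rew(\cdot)$ with a single jump at the explore/exploit boundary is also exactly what the paper does. Where you genuinely diverge is in proving the jump $\rew(mK)\le\rew(mK+1)$: the paper simply asserts this is ``the same as for \DynGreedy'' (Lemma~\ref{dgmono}), i.e.\ a Bayesian argument that the exploit action maximizes the posterior mean given the explore data and hence beats the (marginally uniform) last explore action; you instead condition on $\mu$ and argue that the empirical argmax is positively correlated with the true means. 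Two remarks on this. First, your route is arguably more honest: the exploit action $\hat a$ maximizes the \emph{empirical} mean, not the posterior mean, so the \DynGreedy{} lemma does not literally apply unless the two argmaxes coincide (they need not, e.g.\ under asymmetric priors); your frequentist argument targets the actual algorithm. Second, your argument needs one fix in its framing: term-by-term positive association over the prior (FKG in $\mu_a$) only yields $\E[\mu_a\Pr[\hat a=a\mid\mu]]\ge\E[\mu_a]\,\E[\Pr[\hat a=a\mid\mu]]$, which does not by itself imply $\sum_a\E[\mu_a]\Pr[\hat a=a]\ge\tfrac1{|A|}\sum_a\E[\mu_a]$. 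The step that closes the argument is the \emph{conditional-on-$\mu$} Chebyshev-sum inequality over the arm index, which requires the cross-arm comparison $\mu_a\ge\mu_b\Rightarrow\Pr[\hat a=a\mid\mu]\ge\Pr[\hat a=b\mid\mu]$; this follows from the monotone coupling you mention when the per-arm reward families are stochastically comparable (as with Bernoulli rewards), but it is not a consequence of coordinatewise monotonicity alone. With that substitution your proof is correct, at roughly the same (informal) level of rigor as the paper's.
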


\begin{proof}
In the explore phase we we approximate for each action $a\in A$, the
value of $\mu_a$ by $\hat{\mu}_a$. Using the standard Chernoff
bounds we have that with probability $1-\delta$, for every action
$a\in A$ we have $|\mu_a -\hat{\mu}_a| \leq T^{-1/3}$.

Let $a^* = \arg\max_a \mu_a$ and $a^{ee}$ the action that
\ExplorExploit selects in the explore phase after the first
$|A|T^{2/3}$ agents. Since $\hat{\mu}_{a^*} \leq
\hat{\mu}_{a^{ee}}$, this implies that $\mu_{a^*} -
\mu_{a^{ee}}=O(T^{-1/3})$.

To show that \ExplorExploit$(m)$ is monotone, we need to show only
that $\rew(mK) \leq \rew(mK+1)$. This follows since for any $t< mK$
we have $\rew(t)=\rew(t+1)$, since the recommended action is
uniformly distributed for each time $t$. Also, for any $t\geq mK+1$
we have $\rew(t)=\rew(t+1)$ since we are recommending the same
exploration action. The proof that $\rew(mK) \leq \rew(mK+1)$ is the
same as for \DynGreedy in Lemma~\ref{dgmono}.
\end{proof}

We can also have a a phased version which we call
\PhasedExplorExploit$(m_t)$, where time is partition in to phases.
In phase $t$ we have $m_t$ agents and a random subset of $K$ explore
the actions (each action explored by a single agent) and the other
agents exploit. (This implies that we need that $m_t\geq K$ for all
$t$. We also assume that $m_t$ is monotone in $t$.)

\begin{lemma}
Consider the case that $K=2$ and the rewards of the actions are
Bernoulli r.v. with parameter $\mu_i$ and $\Delta=\mu_1-\mu_2$.
Algorithm \PhasedExplorExploit$(m_t)$ is monotone and for $m_t =
\sqrt{t}$ it has $\BIR(n)=O(n^{-1/3}+e^{-O(\Delta^2 n^{2/3})}))$. 
\end{lemma}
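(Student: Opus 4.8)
The plan is to separate the argument into two independent computations: a within-phase/cross-phase accounting that gives monotonicity, and a Chernoff estimate that gives the \BIR bound. Throughout I fix the realized means and write $\mu_1>\mu_2$ without loss of generality, so that $\Delta=\mu_1-\mu_2>0$; the Bayesian statements then follow by integrating over $\priorMu$. First I would record the phase-to-step correspondence. Since phase $s$ contains $m_s=\sqrt{s}$ agents, the number of agents consumed through phase $t$ is $N(t)=\sum_{s=1}^t\sqrt{s}=\Theta(t^{3/2})$, so the step $n$ lies in a phase $t=t(n)=\Theta(n^{2/3})$, and by the start of that phase each arm has been explored $\Theta(t)=\Theta(n^{2/3})$ times (once per completed phase).

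For monotonicity, the key observation is that the uniformly random assignment of roles within a phase makes every step in phase $t$ identically distributed: marginally, a given step explores arm $1$ with probability $1/m_t$, explores arm $2$ with probability $1/m_t$, and exploits with probability $(m_t-2)/m_t$, where all exploit steps use the same data gathered in the completed phases $<t$. Writing $\bar\mu=\tfrac12(\mu_1+\mu_2)$ and letting $g(s)$ denote the (Bayesian-)expected reward of the exploit action after $s$ samples of each arm, this gives the closed form
\[
  \rew(n)=\tfrac{2}{m_t}\,\bar\mu+\bigl(1-\tfrac{2}{m_t}\bigr)\,g(t-1),
\]
which is constant within each phase. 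I would then invoke three facts: $g$ is non-decreasing in $s$ (the exploit choice is a greedy choice on a growing information set, exactly as in the monotonicity argument for \DynGreedy, Lemma~\ref{dgmono}); $g(s)\geq\bar\mu$, since selecting the empirically better arm is at least as good in expectation as a coin flip; and $m_t$ is non-decreasing by assumption. Viewing $\rew$ as $g(t-1)\bigl(1-\tfrac{2}{m_t}\bigr)+\bar\mu\,\tfrac{2}{m_t}$, it is non-decreasing in both $g(t-1)$ and $m_t$ (using $g(t-1)\geq\bar\mu$), so passing from phase $t$ to phase $t+1$ can only increase it; combined with within-phase constancy this shows $\rew(n)$ is non-decreasing.

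For the \BIR bound I would split the instantaneous regret at step $n$ (phase $t$) according to the same role decomposition. The explore steps occur with probability $2/m_t=\Theta(t^{-1/2})=\Theta(n^{-1/3})$ and each incurs regret at most $\Delta$, contributing $O(n^{-1/3})$. An exploit step incurs regret only when it selects arm $2$, i.e. when $\hat\mu_2\geq\hat\mu_1$ for the empirical means built from $s=t-1=\Theta(n^{2/3})$ samples per arm; bounding $\Pr[\hat\mu_1\leq\mu_1-\Delta/2]$ and $\Pr[\hat\mu_2\geq\mu_2+\Delta/2]$ by the Chernoff Bounds (Theorem~\ref{thm:chernoff}) gives a misidentification probability at most $e^{-\Omega(\Delta^2 n^{2/3})}$, so the exploit contribution is $\Delta\cdot e^{-\Omega(\Delta^2 n^{2/3})}$. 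Adding the two terms yields $\BIR(n)=O\!\left(n^{-1/3}+e^{-O(\Delta^2 n^{2/3})}\right)$.

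The routine part is the Chernoff estimate together with the phase counting; the main obstacle is the monotonicity argument, and within it the cleanest pitfall is justifying that the expected exploit reward $g(s)$ is genuinely non-decreasing (rather than merely tending to the optimum) and that $\rew$ is truly constant within a phase. I would resolve the former by identifying the exploit rule with the posterior-greedy choice, so that the martingale argument of Lemma~\ref{dgmono} applies verbatim, and the latter by appealing to the exchangeability of agents induced by the uniformly random role permutation; care is also needed to ensure that exploit steps draw only on data from completed phases, which is exactly what keeps the per-step reward identical across a phase.
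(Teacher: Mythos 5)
Your \BIR computation (explore steps contribute $O(K/m_t)=O(n^{-1/3})$, exploit steps misidentify with probability $e^{-\Omega(\Delta^2 n^{2/3})}$ by Chernoff after $\Theta(n^{2/3})$ samples per arm) matches the paper's argument, and your phase-counting and the decomposition $\rew(n)=\tfrac{2}{m_t}\bar\mu+(1-\tfrac{2}{m_t})g(t-1)$ with $g\geq\bar\mu$ and $m_t$ non-decreasing is a clean way to organize the monotonicity claim. But the step you yourself flag as the main obstacle --- that $g(s)$ is non-decreasing --- is exactly where your argument breaks. The exploit rule here selects the arm with the highest \emph{observed empirical average}, and the lemma (like the paper's proof) is a statement conditional on a fixed realization $\mu_1>\mu_2$; so $g(s)=\mu_2+\Delta\bigl(\Pr[S_1>S_2]+\tfrac12\Pr[S_1=S_2]\bigr)$ where $S_i$ is the sum of $s$ i.i.d.\ Bernoulli($\mu_i$) samples. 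Lemma~\ref{dgmono} does not apply "verbatim": it concerns the \emph{posterior-greedy} choice and establishes monotonicity of the \emph{prior-averaged} reward via the tower property, whereas you need monotonicity of the empirical-argmax rule's correct-selection probability for fixed $\mu$. These are different objects (empirical argmax need not equal posterior argmax for a general prior $\priorMu$), and no martingale/Jensen argument gives the conditional statement.

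This missing step is genuinely nontrivial --- the probability of correct selection is not monotone in the sample size for general reward distributions --- and it is the part the paper actually proves: it analyzes the lattice random walk $Z_t=S_1-S_2$ (up by $1$ w.p.\ $\mu_1(1-\mu_2)$, down by $1$ w.p.\ $(1-\mu_1)\mu_2$) and shows by a probability-flow accounting across $Z=0$ (using the identity $p(r{+}1,r)\,\mu_2(1-\mu_1)=p(r,r{+}1)\,\mu_1(1-\mu_2)$ relating the two one-off-diagonal states) that $\Pr[Z_t>0]+\tfrac12\Pr[Z_t=0]$ is non-decreasing in $t$. You would need to supply this (or an equivalent coupling) to close the gap; once you have it, your assembly of the pieces into monotonicity of $\rew(n)$ is correct. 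Your secondary claim $g(s)\geq\bar\mu$ also needs a short justification (e.g.\ a symmetry/coupling argument showing $\Pr[S_1>S_2]+\tfrac12\Pr[S_1=S_2]\geq\tfrac12$), though that one is easy.
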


\begin{proof}
We first show that it is monotone. Recall that $\mu_1>\mu_2$. Let
$S_i=\sum_{j=1}^t r_{i,j}$ be the sum of the rewards of action $i$
up to phase $t$. We need to show that $\Pr[S_1>S_2]+ (1/2)
\Pr[S_1=S_2]$ is monotonically increasing in $t$. Consider the
random variable $Z=S_1-S_2$. At each phase it increases by $+1$ with
probability $\mu_1(1-\mu_2)$, decreases by $-1$ with probability
$(1-\mu_1)\mu_2$ and otherwise does not change.

Consider the values of $Z$ up to phase $t$. We really care only
about the probability that is shifted from positive to negative and
vice versa.

First, consider the probability that $Z=0$. We can partition it to
$S_1=S_2=r$ events, and let $p(r,r)$ be the probability of this
event. For each such event, we have $p(r,r)\mu_1$ moved to $Z=+1$
and $p(r,r)\mu_2$ moved to $Z=-1$. Since $\mu_1>\mu_2$ we have that
$p(r,r)\mu_1\geq p(r,r)\mu_2$ (note that $p(r,r)$ might be zero, so
we do not have a strict inequality).

Second, consider the probability that $Z=+1$ or $Z=-1$. We can
partition it to $S_1=r+1;S_2=r$ and $S_1=r;S_2=r+1$ events, and let
$p(r+1,r)$ and $p(r,r+1)$ be the probabilities of those events.
It is not hard to see that $p(r+1,r)\mu_2=p(r,r+1)\mu_1$.
This implies that the probability mass moved from $Z=+1$ to $Z=0$ is
identical to that moved from $Z=-1$ to $Z=0$.

We have showed that $\Pr[S_1>S_2]+ (1/2) \Pr[S_1=S_2]$ and therefore
the expected valued of the exploit action is non-decreasing. Since
we have that the size of the phases are increasing, the $\BIR$ is
strictly increasing between phases and identical within each phase.

We now analyze the $\BIR$ regret. Note that agent $n$ is in phase
$O(n^{2/3})$ and the length of his phase is $O(n^{1/3})$. The $\BIR$
has two parts. The first is due to the exploration, which is at most
$O(n^{-1/3})$. The second is due to the probability that we exploit
the wrong action. This happens with probability $\Pr[S_1<S_2]+ (1/2)
\Pr[S_1=S_2]$ which we can bound using a Chernoff bound by
$e^{-O(\Delta^2n^{2/3})}$, since we explored each action
$O(n^{2/3})$ times.
\end{proof}

\begin{remark}
Actually we have a tradeoff depending on the parameter $m_t$ between
the regret due to exploration and exploitation. (Note that the
monotonicity is always guarantee assuming $m_t$ is monotone.) If we
can set that $m_t = 2^t$ then at time $n$ we have $2/ n$ probability
of an exploit action. For the explore action we are in phase $\log
n$ so the probability of a sub-optimal explore action is
$n^{-O(\Delta^{-2})}$. This should give us
$\BIR(n)=O(n^{-O(\Delta^{-2})})$.
\end{remark}

\subsection{``Smart" MAB algorithms that combine exploration and exploitation}
\label{sec:MAB-smart}

MAB algorithm \SuccesiveEliminationReset works as follows. It keeps
a set of surviving actions $A_s\subseteq A$, where initially
$A_s=A$. The agents are partition into phases, where each phase is a
random permutation of the non-eliminated actions.
Let $\hat{\mu}_{i,t}$ be the average of the rewards of action $i$ up
to phase $t$ and $\hat{\mu}^*=\max_i \hat{\mu}_{i,t}$. We eliminate
action $i$ at the end of phase $t$, i.e., delete it from $A_s$, if
$\hat{\mu}_t^*-\hat{\mu}_{i,t} > \log(T/\delta)/\sqrt{t}$.
In \SuccesiveEliminationReset we simply reset the algorithm with
$A=A_s-A_{e,t}$, where $A_{e,t}$ is the set of eliminated actions
after phase $t$. Namely, we restart $\hat{\mu}_{i,t}$ and ignore the
old rewards before the elimination.

\begin{lemma}
The algorithm \SuccesiveEliminationReset, has, with probability
$1-\delta$, \BIR$(n)=O(\log(T/\delta)/\sqrt{n/K})$.
\end{lemma}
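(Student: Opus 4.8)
The plan is to run the standard clean-event analysis for Successive Elimination, adapted to the reset structure. Throughout, write $r_t := \tfrac12\,\log(T/\delta)/\sqrt{t}$ for the concentration radius at phase $t$ (the constant chosen so that $2r_t$ equals the elimination threshold $\log(T/\delta)/\sqrt t$), and let $a^*$ be the arm with the largest mean. First I would define the \emph{clean event}: within every epoch (a maximal block of phases between two consecutive resets) and for every arm $i$ that survives to phase $t$ of that epoch, its empirical mean $\hat\mu_{i,t}$ over the $t$ fresh samples collected since the last reset satisfies $|\hat\mu_{i,t}-\mu_i|\le r_t$. Because a reset discards all earlier rewards, these $t$ samples are genuinely i.i.d.\ and independent of the elimination decisions made in previous epochs, so Theorem~\ref{thm:chernoff} applies directly; a union bound over the $\le K$ arms, $\le T$ phases, and $\le K$ epochs shows the clean event holds with probability at least $1-\delta$ (this is where the logarithmic factor in $r_t$ is spent). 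This is the one place the reset is genuinely used, and it is what keeps the concentration step clean.

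Conditioned on the clean event, I would establish two structural facts by the usual argument. (i) \emph{The best arm is never eliminated}: for the empirically-best arm $j$ at phase $t$ we have $\hat\mu^*_t-\hat\mu_{a^*,t}=\hat\mu_{j,t}-\hat\mu_{a^*,t}\le(\mu_j+r_t)-(\mu_{a^*}-r_t)\le 2r_t$, which does not exceed the elimination threshold, so $a^*$ survives every phase of every epoch. (ii) \emph{Survivors have small gaps}: if a suboptimal arm $i$ is not eliminated at phase $t$ then $\hat\mu^*_t-\hat\mu_{i,t}\le 2r_t$, and combining this with $\hat\mu^*_t\ge\hat\mu_{a^*,t}\ge\mu_{a^*}-r_t$ and $\hat\mu_{i,t}\le\mu_i+r_t$ yields $\Delta_i:=\mu_{a^*}-\mu_i\le 4r_t=O\!\left(\log(T/\delta)/\sqrt{t}\right)$.

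It remains to translate the per-phase gap bound into a bound in the global step index $n$. Each phase consists of at most $K$ pulls (one per surviving arm), so in the first epoch the phase index $t$ reached by global step $n$ satisfies $t\ge n/K-1$; since the arm pulled at step $n$ is a survivor, its gap is $O(\log(T/\delta)/\sqrt{n/K})$, and because the realized instantaneous regret $\mu_{a^*}-\mu_{a_n}$ equals exactly this gap, taking the prior expectation bounds $\BIR(n)$ by the same quantity on the clean event. After the first reset the same rate persists: the survivors entering a new epoch already inherit the gap bound $O(r_{t'})$ from the phase $t'$ at which the previous epoch ended, and this bound only improves as the new epoch runs. The delicate point — and the main obstacle — is the bookkeeping across the at most $K-1$ resets, since a reset throws away samples and restarts the phase counter, so right after a reset $t$ is small while $n$ is large. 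I would handle this by charging each survivor's fixed gap $\Delta_i$ to the \emph{longest} epoch in which it participated, giving $\Delta_i\le 4\,r_{\max_e t_e}$, and then using $n\le K\sum_e t_e\le K^2\max_e t_e$ together with the fact that there are at most $K$ epochs; this recovers the $O(\log(T/\delta)/\sqrt{n/K})$ rate once $K$ is treated as a constant (consistent with the paper's convention), which is all that is needed to place \SuccesiveEliminationReset\ in the ``smart'' $\tilde O(n^{-1/2})$ class.
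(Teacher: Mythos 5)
Your proof is correct and follows the same route as the paper's: condition on the clean event where every surviving arm's empirical mean is within the confidence radius, conclude that the best arm is never eliminated and that every survivor has gap $O(\log(T/\delta)/\sqrt{t})$ at phase $t$, and convert the phase index to the global step index $n$. The one place you go beyond the paper is the bookkeeping across resets: the paper's proof simply asserts concentration at radius $\log(T/\delta)/\sqrt{n/K}$ ``at any time $n$,'' silently ignoring that a reset discards samples and restarts the phase counter, whereas you explicitly charge each survivor's gap to the longest epoch it participated in and use that there are at most $K$ epochs, so $\max_e t_e \geq n/K^2$ and the $\sqrt{K}$ loss is absorbed into the $O(\cdot)$ since $K$ is a constant. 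That extra step is a genuine (if minor) gap in the paper's own writeup that your argument closes; everything else matches.
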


\begin{proof}
Let the best action be $a^*=\arg\max_a \mu_a$. With probability
$1-\delta$ at any time $n$ we have that for any action $i\in A_s$
that $|\hat{\mu}_i -\mu_i|\leq \log(T/\delta)/\sqrt{n/K}$, and
$a^*\in A_s$. This implies that any action $a$ such
$\mu_{a^*}-\mu_{a}> 3\log(T/\delta)/\sqrt{n/K}$ is eliminated.
Therefore, any action in $A_s$ has \BIR$(n)$ of at most
$6\log(T/\delta)/\sqrt{n/K}$.
\end{proof}

\begin{lemma}
Assume that if $\mu_i\geq \mu_j$ then the rewards $r_i$
stochastically dominates the rewards $r_j$.
Then, \SuccesiveEliminationReset is monotone
\end{lemma}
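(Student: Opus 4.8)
The plan is to reduce monotonicity to a statement about how the surviving action set improves across phases, and then to establish that improvement from the stochastic-dominance hypothesis via a pairwise coupling.

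First I would note that $\rew(n)$ is constant within a phase. A phase plays each currently-surviving action exactly once in a uniformly random order, so the action played at any step of a phase with surviving set $S$ is uniform on $S$; hence $\rew(n)=\E[\bar\mu(S)]$ for every step $n$ of that phase, where $\bar\mu(T):=\tfrac1{|T|}\sum_{i\in T}\mu_i$ and the expectation is over the prior and over the earlier rewards that determine $S$. Thus $\rew$ is constant within a phase and also across any phase boundary at which no action is eliminated, and it remains to show that $\rew$ does not drop when elimination does occur, i.e. $\E[\bar\mu(S')]\ge\E[\bar\mu(S)]$ for the post-phase set $S'\subseteq S$. Conditioning on the realized set $S$, it suffices to prove $\E[\bar\mu(S')\mid S]\ge\bar\mu(S)$.

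The core tool is a comparison of elimination events. Fix $i,j\in S$ with $\mu_i\ge\mu_j$; by hypothesis $r_i$ stochastically dominates $r_j$, so the post-reset empirical means (averages over the same number of pulls, since both actions have survived throughout the current stretch) satisfy $\hat\mu_i\succeq\hat\mu_j$ as independent random variables. Writing $X=\hat\mu_i$ and $Y=\hat\mu_j$, I would first record the elementary fact that $X-Y$ stochastically dominates $Y-X$: taking an independent copy $Y^\ast\stackrel d=X$ coupled so that $Y^\ast\ge Y$ almost surely, the symmetry of $X-Y^\ast$ about $0$ gives $\Pr[X-Y>c]\ge\Pr[X-Y^\ast>c]=\Pr[Y^\ast-X>c]\ge\Pr[Y-X>c]$ for all $c$. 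Because an action is eliminated exactly when it falls below the common empirical best $\hat\mu^\ast=\max_{\ell\in S}\hat\mu_\ell$ by more than the elimination threshold, this inequality (applied to the threshold and using that $\hat\mu^\ast$ is shared) yields that the lower-mean action $j$ is at least as likely to be eliminated as the higher-mean action $i$, and more generally gives a monotone ordering of eliminations within any pair.

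Finally I would assemble the pairwise facts into the set-level inequality. When $|S|=2$ this is immediate: the empirically best action always survives (its gap to $\hat\mu^\ast$ is $0$), so $S'\in\{\{i,j\},\{i\},\{j\}\}$; the average shifts by $\pm(\mu_i-\mu_j)/2$ according to whether the worse or the better action is dropped, and since dropping $j$ is at least as likely as dropping $i$ the expected shift is nonnegative. For general $K$ the main obstacle is the normalization by the random size $|S'|$: dropping an action raises $\bar\mu$ only when that action lies below the current average, so the pairwise contributions cannot simply be summed, and I expect this to be the crux. The route I would pursue is a symmetrization over unordered pairs: fix the rewards (hence empirical means and survival status) of all actions outside a pair $\{i,j\}$, and treat separately whether $\hat\mu^\ast$ is attained inside or outside the pair, so that the dependence of $\bar\mu(S')$ on the pair reduces to the two-action computation above, which the elimination-ordering inequality then signs. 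The delicate bookkeeping is that the denominator $|S'|$ still varies with the pair's joint survival; controlling this---for instance by further conditioning on $|S'|$ and arguing that survival probabilities remain monotone in the mean after such conditioning---is the part I expect to require the most care.
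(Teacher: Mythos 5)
Your reduction and your pairwise lemma match the paper's proof: the paper likewise observes that $\rew(n)$ is flat until an elimination occurs, and its key claim is exactly yours, that for $\mu_{a_1}\ge\mu_{a_2}$ the higher-mean action is eliminated with smaller probability (the paper derives this in one line from $\Pr[\hat\mu_{a_1}\ge\theta]\ge\Pr[\hat\mu_{a_2}\ge\theta]$; your symmetrization via an independent copy $Y^\ast\stackrel{d}{=}X$ with $Y^\ast\ge Y$ is a more careful treatment of the same point, and you are right that the shared random benchmark $\hat\mu^\ast$ deserves the extra care).

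The genuine gap is in the step you yourself flag as ``the crux'': you never complete the passage from the pairwise elimination-ordering to $\E[\bar\mu(S')]\ge\bar\mu(S)$ for general $K$, and the symmetrization-over-pairs route you sketch (conditioning on $|S'|$ and on the rewards outside each pair) is left as a plan rather than an argument. The paper closes this differently, and more simply: condition on the \emph{first} time an action is eliminated, let $q_i$ be the probability that action $i$ is the one eliminated, and note that since $q_i$ is anti-monotone in $\mu_i$ and $\sum_i q_i=1$, the Chebyshev-sum (rearrangement) inequality gives $\sum_i \mu_i q_i\le\tfrac1{|A|}\sum_i\mu_i$; hence the expected mean of the removed action is at most the current average and removal cannot decrease $\rew$. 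The varying size of the surviving set---your main worry---is then absorbed by induction on the number of remaining actions, which is legitimate precisely because \SuccesiveEliminationReset restarts from scratch after each elimination, so the post-elimination process is again an instance of the same algorithm on a smaller (random) action set. That reset-plus-induction idea is the piece your proposal is missing; without it (or a completed version of your conditioning argument) the general-$K$ case is not proved. One caveat in your favor: the paper's own aggregation implicitly treats eliminations as occurring one at a time ($\sum_i q_i=1$), so simultaneous eliminations at the end of a phase are not really handled there either---both arguments would need a small extension to cover that case.
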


\begin{proof}
Consider the first time $T$ an action is eliminated, and let
$T=\tau$ be a realized value of $T$. Then, clearly for $n<\tau$ we
have $\rew(n)=\rew(1)$ .

Consider two actions $a_1,a_2\in A$, such that $\mu_{a_1} \geq
\mu_{a_2}$. At time $T=\tau$, the probability that  $a_1$ is
eliminated is smaller than the probability that $a_2$ is eliminated.
This follows since $\hat{\mu}_{a_1}$ stochastically dominates
$\hat{\mu}_{a_2}$, which implies that for any threshold $\theta$ we
have $\Pr[\hat{\mu}_{a_1}\geq\theta]\geq
\Pr[\hat{\mu}_{a_2}\geq\theta]$.

After the elimination we consider the expected reward of the
eliminated action $\sum_{i\in A} \mu_i q_i$, where $q_i$ is the
probability that action $i$ was eliminated in time $T=\tau$. We have
that $q_i \leq q_{i+1}$, from the probabilities of elimination.

The sum $\sum_{i\in A} \mu_i q_i$ with $q_i \leq q_{i+1}$ and
$\sum_i q_i=1$ is maximized by setting $q_i=1/|A|$. (We can see that
if there are $q_i\neq 1/|A|$, then there are two $q_{i}< q_{i+1}$,
and one can see that setting both to $(q_{i}+ q_{i+1})/2$ increases
the value.) Therefore we have that the $\rew(\tau)\geq\rew(\tau-1)$.

Now we can continue by induction. For the induction, we can show the
property for {\em any} remaining set of at most $k-1$ actions. The
main issue is that \SuccesiveEliminationReset restarts from scratch,
so we can use induction.
\end{proof}

\section{Non-degeneracy via a random perturbation}
\label{app:perturb}
We show that Assumption~\eqref{eq:assn-distinct} holds almost surely under a small random perturbation of the prior. We focus on problem instances with 0-1 rewards, and assume that the prior $\priorMu$ is independent across arms and has a finite support.%
\footnote{The assumption of 0-1 rewards is for clarity. Our results hold under a more general assumption that for each arm $a$, rewards can only take finitely many values, and each of these values is possible (with positive probability) for every possible value of the mean reward $\mu_a$.}
 Consider the probability vector in the prior for arm $a$:
\[ \vec{p}_a = \left(\; \Pr[\mu_a=\nu]:\; \nu\in \support(\mu_a)\; \right).\]
We apply a small random perturbation independently to each such vector:
\begin{align}\label{eq:app:perturb:noise}
\vec{p}_a \leftarrow \vec{p}_a + \vec{q}_a,
    \quad \text{where}\quad \vec{q}_a\sim  \mN_a.
\end{align}
Here $\mN_a$ is the noise distribution for arm $a$: a distribution over real-valued, zero-sum vectors of dimension $d_a = |\support(\mu_a)|$. We need the noise distribution to satisfy the following property:
\begin{align}\label{eq:app:perturb:noise-prop}
\forall x\in [-1,1]^{d_a}\setminus \{0\}\qquad
\Pr_{q\sim \mN_a}\left[ x\cdot(\vec{p}_a+ q) \neq 0 \right] =1.
\end{align}

\begin{theorem}\label{thm:perturb}
Consider an instance of MAB with 0-1 rewards. Assume that the prior $\priorMu$ is independent across arms, and each mean reward $\mu_a$ has a finite support that does not include $0$ or $1$. Assume that noise distributions $\mN_a$ satisfy property \eqref{eq:app:perturb:noise-prop}. If random perturbation~\eqref{eq:app:perturb:noise} is applied independently to each arm $a$, then \refeq{eq:assn-distinct} holds almost surely for each history $h$.
\end{theorem}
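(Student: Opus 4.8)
The plan is to reduce pairwise distinctness of posterior means to the single-arm non-degeneracy property \eqref{eq:app:perturb:noise-prop}. Because the prior $\priorMu$ is independent across arms, the posterior of $\mu_a$ given any history $h$ depends only on the data collected from arm $a$; and because rewards are $0$-$1$, that data is summarized by two integers, the number of pulls $n_a$ and the number of observed ones $k_a$. Writing $p_{a,\nu}$ for the prior probability that $\mu_a=\nu$ and $\tilde p_{a,\nu}=p_{a,\nu}+q_{a,\nu}$ for its perturbed counterpart under \eqref{eq:app:perturb:noise}, Bayes' rule gives
\[ \E[\mu_a\mid h] = \frac{\sum_{\nu\in\support(\mu_a)}\tilde p_{a,\nu}\,\nu^{k_a+1}(1-\nu)^{n_a-k_a}}{\sum_{\nu\in\support(\mu_a)}\tilde p_{a,\nu}\,\nu^{k_a}(1-\nu)^{n_a-k_a}}, \]
a ratio of two linear functionals of the perturbed vector $\vec{\tilde p}_a$. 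Since $(n_a,k_a)$ ranges over a countable set, there are only countably many histories; so it suffices to fix one history $h$ and one ordered pair $a\neq a'$, bound the collision probability by $0$, and take a countable union over all $(h,a,a')$.

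Next I would linearize the collision condition $\E[\mu_a\mid h]=\E[\mu_{a'}\mid h]$. Conditioning on the perturbation $\vec q_{a'}$ fixes the value $c:=\E[\mu_{a'}\mid h]\in(0,1)$, and by the independence of the noise distributions $\mN_a$ the perturbation $\vec q_a$ remains undisturbed. Clearing the (a.s.\ nonzero) denominator, $\E[\mu_a\mid h]=c$ becomes $x\cdot(\vec p_a+\vec q_a)=0$, where $x_\nu:=\nu^{k_a}(1-\nu)^{n_a-k_a}(\nu-c)$. Excluding the endpoints $0,1$ from the support is exactly what makes this vector usable: since $\nu\in(0,1)$ the factor $\nu^{k_a}(1-\nu)^{n_a-k_a}$ lies in $(0,1]$, and $c$, being a convex combination of support points, also lies in $(0,1)$, so $|\nu-c|<1$ and hence $x\in[-1,1]^{d_a}$. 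Furthermore $x_\nu=0$ iff $\nu=c$, so $x\neq 0$ as soon as $\support(\mu_a)$ contains at least two points. (The denominator is itself a functional $y\cdot(\vec p_a+\vec q_a)$ with $y_\nu=\nu^{k_a}(1-\nu)^{n_a-k_a}\in(0,1]$, so $y\in[-1,1]^{d_a}\setminus\{0\}$ and \eqref{eq:app:perturb:noise-prop} guarantees the posterior mean is well defined almost surely.)

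With this in hand, the argument is a one-line application of \eqref{eq:app:perturb:noise-prop}. For a pair $(a,a')$ in which at least one arm has a non-degenerate support, relabel so that $\support(\mu_a)$ has at least two points; then $x\neq 0$ for every value of $c$, so $\Pr_{\vec q_a}[x\cdot(\vec p_a+\vec q_a)=0\mid \vec q_{a'}]=0$, and integrating over $\vec q_{a'}$ gives collision probability $0$. Summing these countably many null events over all histories and ordered pairs shows that, almost surely over the perturbation, \eqref{eq:assn-distinct} holds at every history, which is the assertion of Theorem~\ref{thm:perturb}.

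The remaining obstacle is the genuinely degenerate case in which both arms have singleton supports, $\support(\mu_a)=\{\nu_a\}$ and $\support(\mu_{a'})=\{\nu_{a'}\}$. Then both posterior means are the deterministic constants $\nu_a,\nu_{a'}$, the coefficient vector $x$ vanishes, and the perturbation is powerless (a zero-sum vector in dimension $d_a=1$ must be $0$). Here the collision condition degenerates to the perturbation-independent statement $\nu_a=\nu_{a'}$, so distinctness cannot come from \eqref{eq:app:perturb:noise-prop} and must instead be imposed as a mild non-degeneracy hypothesis on the prior (distinct deterministic means) or obtained by an additional infinitesimal perturbation of the support locations themselves. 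Pinning down this edge case, together with the measure-theoretic bookkeeping of the conditioning step (Fubini across the independent $\mN_a$ and the a.s.\ positivity of denominators), is where the care lies; the core mechanism is simply that each collision is a single nonzero linear test on one arm's perturbed prior.
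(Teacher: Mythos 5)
Your proof is correct and follows essentially the same route as the paper's own (Lemma~\ref{lm:perturb}): fix a history and a pair of arms, condition on one arm so that the collision $\E[\mu_a\mid h]=\E[\mu_{a'}\mid h]$ reduces to a single linear test $x\cdot(\vec p_a+\vec q_a)=0$ with $x_\nu=(\nu-c)\Pr[h\mid\mu_a=\nu]$, use $0,1\notin\support(\mu_a)$ to conclude that at most one coordinate of $x$ vanishes, and invoke \eqref{eq:app:perturb:noise-prop}. The only substantive difference is that you explicitly isolate the case where both arms have singleton supports (where $d_a=1$ forces $\vec q_a=0$ and the perturbation is powerless), an edge case the paper's argument passes over in silence.
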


\begin{remark}
As a generic example of a noise distribution which satisfies
  Property \eqref{eq:app:perturb:noise-prop}, consider the uniform
  distribution $\mN$ over the bounded convex set
  \[ Q = \left\{q \in \R^{d_a} \mid q \cdot \vec{1} = 0 \mbox{ and }
      \|q\|_2 \leq \eps\right\}, \] where $\vec{1}$ denotes the all-1
  vector. If $x = a \vec{1}$ for some non-zero value of $a$, then
  \eqref{eq:app:perturb:noise-prop} holds because
  $$x \cdot (p+q) = x \cdot p = a\neq 0.$$ Otherwise, denote
  $p=\vec{p}_a$ and observe that $x\cdot({p}+ {q}) = 0$ only if
  $x \cdot q = c \triangleq x \cdot (-p)$. Since $x\neq \vec{1}$, the
  intersection $Q\cap\{ x\cdot q = c \}$ either is empty or has
  measure 0 in $Q$, which implies
  $\Pr_{{q}}\left[ x\cdot({p}+ {q}) \neq 0 \right] =1$.
\end{remark}

To prove Theorem~\ref{thm:perturb}, it suffices to focus on two arms, and perturb one of them. Since realized rewards have finite support, there are only finitely many possible histories. Therefore, it suffices to focus on a fixed history $h$.

\begin{lemma}\label{lm:perturb}
Consider an instance of MAB with 0-1 rewards. Assume that the prior $\priorMu$ is independent across arms, and that $\support(\mu_1)$ is finite and does not include $0$ or $1$. Fix history $h$. Suppose random perturbation~\eqref{eq:app:perturb:noise} is applied to arm $1$, with noise distribution $\mN_1$ that satisfies \eqref{eq:app:perturb:noise-prop}. Then
    $\E[\mu_1\mid h] \neq \E[\mu_2\mid h] $
almost surely.
\end{lemma}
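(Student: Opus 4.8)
The plan is to reduce the claimed almost-sure inequality to a single application of property~\eqref{eq:app:perturb:noise-prop} of the noise distribution $\mN_1$. Write $\support(\mu_1)=\{\nu_1,\dots,\nu_d\}$ with $d:=d_1$, and let $s$ and $f$ be the numbers of $1$'s and $0$'s that the history $h$ records for arm $1$ (every other entry of $h$ concerns arm $2$ or an unperturbed arm). Let $\tilde{\vec p}_1=(\tilde p_1,\dots,\tilde p_d)$ denote the \emph{perturbed} probability vector of arm $1$, i.e. $\tilde{\vec p}_1=\vec p_1+\vec q_1$ with $\vec q_1\sim\mN_1$ and $\vec p_1$ the original vector. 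Because the prior is independent across arms and only arm $1$ is perturbed, the number $c:=\E[\mu_2\mid h]$ is deterministic (it depends only on arm $2$'s unperturbed prior and the arm-$2$ rewards in $h$) and satisfies $c\in[0,1]$. By Bayes' rule,
\[
\E[\mu_1\mid h]=\frac{\sum_{j}\tilde p_j\,\nu_j^{\,s}(1-\nu_j)^{f}\,\nu_j}{\sum_{j}\tilde p_j\,\nu_j^{\,s}(1-\nu_j)^{f}} .
\]

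Next I would clear the denominator, which is nonzero almost surely since each weight $\nu_j^{\,s}(1-\nu_j)^{f}$ is strictly positive (as $\nu_j\in(0,1)$) and a small perturbation keeps the weighted sum positive. This rewrites the event $\{\E[\mu_1\mid h]=c\}$ as the single linear equation $x\cdot\tilde{\vec p}_1=0$, where
\[
x_j:=\nu_j^{\,s}(1-\nu_j)^{f}\,(\nu_j-c),\qquad j\in[d].
\]
The crux is to check that $x$ satisfies the two hypotheses of~\eqref{eq:app:perturb:noise-prop}. For membership in $[-1,1]^d$: since $\nu_j\in(0,1)$ we have $\nu_j^{\,s}(1-\nu_j)^{f}\le 1$, and since $\nu_j,c\in[0,1]$ we have $|\nu_j-c|\le 1$, so $|x_j|\le 1$. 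For $x\neq 0$: the positive weight forces $x_j=0$ only when $\nu_j=c$; the support points are distinct, so at most one $\nu_j$ equals $c$, and hence (using $d\ge 2$) at least one coordinate of $x$ is nonzero.

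Having verified $x\in[-1,1]^d\setminus\{0\}$, I would apply~\eqref{eq:app:perturb:noise-prop} verbatim with this $x$ and arm $a=1$: it yields $\Pr_{\vec q_1\sim\mN_1}[\,x\cdot(\vec p_1+\vec q_1)\neq 0\,]=1$, that is $x\cdot\tilde{\vec p}_1\neq 0$ almost surely, which is precisely $\E[\mu_1\mid h]\neq c=\E[\mu_2\mid h]$ almost surely. The step I expect to be the main obstacle is the non-vanishing of $x$, since this is what rules out the degenerate case in which the perturbation is trapped in a direction along which the posterior mean of arm $1$ is constant. It is exactly here that the hypothesis $|\support(\mu_1)|\ge 2$ is used; when $|\support(\mu_1)|=1$ the mean $\mu_1$ is a.s.\ constant and the zero-sum perturbation is vacuous, in which case one instead perturbs arm $2$, as the reduction ``to two arms, and perturb one of them'' permits.
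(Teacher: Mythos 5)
Your proof is correct and follows essentially the same route as the paper's: both reduce the event $\{\E[\mu_1\mid h]=\E[\mu_2\mid h]\}$ to a linear equation $x\cdot\vec p_1=0$ with $x_\nu=(\nu-C)\Pr[h\mid\mu_1=\nu]$, observe that the strictly positive likelihood (positive because $0,1\notin\support(\mu_1)$) lets at most one coordinate of $x$ vanish, and then invoke property~\eqref{eq:app:perturb:noise-prop}. The only cosmetic difference is that you write the likelihood explicitly as $\nu^{s}(1-\nu)^{f}$ where the paper keeps it abstract (justified by fixing a deterministic algorithm that reproduces $h$), and you additionally flag the degenerate case $|\support(\mu_1)|=1$, which the paper leaves implicit.
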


\begin{proof}
Note that $\E[\mu_a\mid h]$ does not depend on the algorithm which produced this history. Therefore, for the sake of the analysis, we can assume w.l.o.g. that this history has been generated by a particular algorithm, as long as this algorithm can can produce this history with non-zero probability. Let us consider  the algorithm that deterministically chooses same actions as $h$.

Let $S = \support(\mu_1)$. Then:
\begin{align*}
\E[\mu_1\mid h]
    &= \sum_{\nu\in S} \nu \cdot \Pr[\mu_1 =\nu \mid h]
    = \sum_{\nu\in S} \nu \cdot \Pr[h \mid \mu_1 =\nu] \cdot \Pr[\mu_1=\nu]\;/\;\Pr[h], \\
\Pr[h]
    &= \sum_{\nu\in S} \Pr[h \mid \mu_1 =\nu] \cdot \Pr[\mu_1=\nu].
\end{align*}
Therefore,
    $\E[\mu_1\mid h] = \E[\mu_2\mid h] $
if and only if
\begin{align*}
\sum_{\nu\in S} (\nu-C) \cdot \Pr[h \mid \mu_1 =\nu] \cdot \Pr[\mu_1=\nu] = 0,
\quad\text{ where }\quad C=\E[\mu_2\mid h].
\end{align*}
Since $\E[\mu_2\mid h]$ and $\Pr[h \mid \mu_1 =\nu]$ do not depend on the probability vector $\vec{p}_1$, we conclude that
\begin{align*}
  \E[\mu_1\mid h] = \E[\mu_2\mid h]
\quad\Leftrightarrow\quad x\cdot \vec{p}_1 =0,
\end{align*}
where vector
    \[ x := \left(\; (\nu-C) \cdot \Pr[h \mid \mu_1 =\nu]:\; \nu\in S\; \right) \in [-1,1]^{d_1}\]
does not depend on $\vec{p}_1$.

Thus, it suffices to prove that $x\cdot \vec{p}_1 \neq 0$ almost surely under the perturbation. In a formula:
\begin{align}\label{eq:lm:perturb-pf}
    \Pr_{q\sim \mN_1}\left[ x\cdot( \vec{p}_1+q) \neq 0 \right] =1
\end{align}

Note that $\Pr[h \mid \mu_1 =\nu]>0$ for all $\nu\in S$, because $0,1\not\in S$. It follows that at most one coordinate of $x$ can be zero. So  \eqref{eq:lm:perturb-pf} follows from property \eqref{eq:app:perturb:noise-prop}.
\end{proof}

\end{document}